\def\bN{\mathbb{N}}
\def\bF{\mathbb{F}}
\def\cA{\mathcal{A}}
\def\cB{\mathcal{B}}
\def\cD{\mathcal{D}}
\def\cR{\mathcal{R}}
\def\cS{\mathcal{S}}
\def\CG{\mathbf{CG}}
\def\wc{\mathsf{w}}
\def\wrt{\emph{w.r.t.}\xspace}
\def\ie{\emph{i.e.}\xspace}
\def\eg{\emph{e.g.}\xspace}
\newcommand{\pis}{\Pi}
\def\comp#1{\overline{{#1}}}
\def\block#1{{\sharp}\mathrm{block}(#1)}
\def\cwd{\operatorname{cwd}\xspace}
\def\moverlay{\mathpalette\mov@rlay}
\def\mov@rlay#1#2{\leavevmode\vtop{%
    \baselineskip\z@skip \lineskiplimit-\maxdimen
    \ialign{\hfil$\m@th#1##$\hfil\cr#2\crcr}}}
\newcommand{\charfusion}[3][\mathord]{
  #1{\ifx#1\mathop\vphantom{#2}\fi
    \mathpalette\mov@rlay{#2\cr#3}
  }
  \ifx#1\mathop\expandafter\displaylimits\fi}
\newcommand{\acjoin}{\mathsf{acjoin}}
\newcommand{\join}{\mathsf{join}}
\newcommand{\proj}{\mathsf{proj}}
\newcommand{\acopt}{\mathbf{ac\textrm{-}opt}}
\newcommand{\opt}{\mathbf{opt}}
\newcommand{\rmc}{\mathsf{rmc}}
\newcommand{\reduce}{\mathsf{reduce}}
\newcommand{\acreduce}{\mathsf{ac\textrm{-}reduce}}
\newcommand{\cc}{\mathsf{cc}}
\newcommand{\acy}{\textsf{acyclic}}
\newcommand{\Active}{\textsf{active}}
\newcommand{\duni}{\uplus}
\newcommand{\tricuts}{\mathsf{cuts}}
\newtheorem{theorem}{Theorem}[section]
\newtheorem{lemma}[theorem]{Lemma}
\newtheorem{proposition}[theorem]{Proposition}
\newtheorem{corollary}[theorem]{Corollary}
\theoremstyle{definition}
\newtheorem{definition}[theorem]{Definition}
\newtheorem{fact}[theorem]{Fact}
\renewcommand{\mid}{\, : \, }
\begin{document}

\title[Fast for CWD]{Fast Exact Algorithms for some Connectivity Problems Parameterized by Clique-Width}

\author{Benjamin Bergougnoux}
\author{Mamadou Kant\'e}
\address{Université Clermont Auvergne, LIMOS, CNRS, Aubière, France.}
 \email{\{mamadou.kante,benjamin.bergougnoux\}@uca.fr}

\thanks{This work is supported by French Agency for Research under the GraphEN project (ANR-15-CE40-0009).}

\subjclass{F.2.2, G.2.1, G.2.2}

\keywords{clique-width, module-width, single exponential algorithm, feedback vertex set, connected $\sigma,\rho$-domination}

\begin{abstract} Given a clique-width $k$-expression of a graph $G$, we provide $2^{O(k)}\cdot n$ time algorithms for connectivity constraints on locally checkable
  properties such as \textsc{Node-Weighted Steiner Tree}, \textsc{Connected Dominating Set}, or \textsc{Connected Vertex Cover}. We also propose a
  $2^{O(k)}\cdot n$ time algorithm for \textsc{Feedback Vertex Set}. The best running times for all the considered cases were either
  $2^{O(k\cdot \log(k))}\cdot n^{O(1)}$ or worse.
\end{abstract}

\maketitle

\section{Introduction} 

\emph{Tree-width} \cite{RobertsonS86} is probably the most well-studied graph parameter in the graph algorithm community, and particularly by people working in
\emph{Fixed Parameter Tract\-able} (FPT for short) algorithms, due partly to its numerous structural and algorithmic properties \cite{CourcelleE12,DowneyF13}.  For a while,
people used to think that for many connectivity constraints problems, \eg, \textsc{Hamiltonian Cycle}, \textsc{Steiner Tree}, the naive $k^{O(k)}\cdot n^{O(1)}$ time
algorithm, $k$ the tree-width of the input graph, cannot be improved. Indeed, it seems necessary to know the connected components of the partial solutions in order to be
able to extend them and also certify that the given solution is really connected.  But, quite surprisingly, Cygan et al showed in \cite{CyganNPPvRW11} that some of these
connectivity constraints problems admit randomized $2^{O(k)}\cdot n^{O(1)}$ time algorithms.  The first deterministic $2^{O(k)}\cdot n^{O(1)}$ time algorithms for these
problems was due to Bodlaender et al. in \cite{BodlaenderCKN15}. 
First, let us say that $\cS'$ \emph{represents} the set of partial solutions $\cS$ if whenever there is $S\in \cS$ such that $S$ can be completed into an optimum solution, there is $S'\in \cS'$ that can be also completed into an optimum solution. 
Most of the dynamic programming algorithms are based on this notion of representativity, and proposing a fast algorithm is usually reduced to defining the adequate set $\cS'$.  
For instance, one can define, for each node $u$ of a tree-decomposition, a matrix $M_u$ over the binary field where $M_u[F,F']=1$ if and only if the partial solutions $F$ and $F'$ can be ``joined'' into a valid solution of the instance ($F$ is usually intended to be a subset of a solution, subset included in the already processed part of the graph, and $F'$ the remaining part of the solution, that is not yet known). 

The main contribution of \cite{BodlaenderCKN15} was to show that, for some connectivity constraints problems, for each node $u$ of a tree-decomposition of width $k$, the rank of $M_u$ is at most $2^{O(k)}$; moreover, a maximum weighted basis - such that the solutions indexing its rows represent the solutions indexing the rows of $M_u$ - can be computed in time $2^{O(k)}$.

Nevertheless, despite the broad interest on tree-width, only sparse graphs can have bounded tree-width. But, on many dense graph classes, some NP-hard problems admit polynomial time algorithms, and many of these algorithms can be explained by the boundedness of their \emph{clique-width}, a graph parameter introduced by Courcelle and Olariu \cite{CourcelleO00} and that emerges from the theory of graph grammars. 

Clique-width is defined in terms of the following graph operations: (1) addition of a single vertex labeled $i\in \bN$,~ (3) renaming label $i$ into $j$ ($ren_{i\to j}$),~ (3) addition of edges between vertices labeled $i$ and those labeled $j$ ($add_{i,j}$),~ (4) disjoint union ($\oplus$). 
The \emph{clique-width} of a graph is the minimum number of labels needed to construct it, and the expression constructing it is called \emph{$k$-expression}, $k$ the number of used labels.  
Clique-width generalizes tree-width in the sense that if a graph class has bounded tree-width, then it has bounded clique-width \cite{CourcelleO00}, but the converse is false as cliques have clique-width at most 2 and unbounded tree-width.  
Furthermore, clique-width appears also to be of big importance in FPT algorithms \cite{CourcelleE12}. 
While it is still open whether there exists an FPT algorithm to compute an optimal $k$-expression of a given graph, one can ask when clique-width behaves similarly as tree-width.  
It is known that clique-width is far from behaving similarly as tree-width on some well-studied and well-known difficult
problems such as \textsc{Hamiltonicity} \cite{FominGLS14}.  
On the other hand, Bui-Xuan et al. \cite{BuiXuanTV13}, and Ganian et al. \cite{GanianH10,GanianHO11} managed to prove, after more substantial work than for tree-width, that for locally checkable properties and some sparse problems, one can get $2^{k^{O(1)}}\cdot n^{O(1)}$ time algorithms, $k$ the clique-width of the input graph. 
For some of these algorithms, the exponential in their running times are proved to have a linear dependence on the clique-width, while for others only a polynomial dependence is known.

However, nothing is known on connectivity constraints problems, except some special cases such as \textsc{Feedback Vertex Set} which was proved to admit a
$k^{O(k)}\cdot n^{O(1)}$ time algorithm in \cite{BuiXuanSTV13}, provided the graph is given with a $k$-expression.

\subsection*{Our Contributions.} We investigate connectivity constraints on locally checkable properties, such as \textsc{Connected Dominating Set}, or \textsc{Connected Vertex Cover}. All these problems are the connected variant of a problem in the family of problems called \textsc{$(\sigma, \rho)$-Dominating Set} problems. 
 The family of \textsc{$(\sigma, \rho)$-Dominating  Set} problems was introduced in \cite{TelleP97} and studied in graphs of bounded clique-width in \cite{BuiXuanTV13, OumSV13}.  We recall its definition at the end of Section \ref{sec:prelim}. 
 It is not hard to modify the dynamic programming algorithm from \cite{BuiXuanTV13} that computes a minimum $(\sigma,\rho)$-dominating set in order to compute a minimum connected $(\sigma,\rho)$-dominating set in time $k^{O(k)}\cdot n^{O(1)}$, since it suffices to keep track, for each family of partial solutions, the possible partitions of the label classes induced by them. 
 We modify slightly this naive algorithm and prove that one can define representative sets of size $2^{O(k)}$, yielding $2^{O(k)}\cdot n$ time algorithms resolving this family of problems.

We also consider the \textsc{Feedback Vertex Set} problem, which asks to compute a minimum set of vertices to delete so that the resulting graph is acyclic, and propose
similarly a $2^{O(k)}\cdot n$ time algorithm. 
But, the algorithm, even in the same spirit as the one for connected $(\sigma,\rho)$-dominating set, is less trivial since one has to check also the acyclicity, a task that is not trivial when dealing with clique-width operations as a bunch of edges can be added at the same time. 
Indeed, at each step of the dynamic programming algorithm, when dealing with tree-width the number of vertices that have a neighbor in the rest of the graph is bounded, but for clique-width they can be only classified into a bounded number of equivalence classes (with respect to having the same neighborhood in the rest of the graph); these equivalence classes are the \textit{label classes} of the given \textit{$k$-expression}.

In both cases, we use the same rank-based approach as in \cite{BodlaenderCKN15}, but we need to adapt in the \textsc{Feedback Vertex Set}'s case the operations on
partitions to fit with clique-width operations. The main difficulty is to deal with the acyclicity, which the authors of \cite{BodlaenderCKN15} also encountered but
solved by counting the number of edges. In our case, counting the number of edges would yield an $n^{O(k)}$ time algorithm. Let's explain the ideas of the algorithms with
two examples: (1) \textsc{Connected Dominating Set}, asking to compute a minimum connected set $D$ such that each vertex in $V(G)\setminus D$ has a neighbor in $D$ and (2)
\textsc{Maximum Induced Tree}, that consists in computing a maximum induced tree.

Let $G$ be a graph, for which a $k$-expression is given, and let $(X,Y)$ be a bipartition of its vertex set induced by a subexpression of the $k$-expression. 
Let us denote the labels of vertices in $X$ by $\{1,2,\ldots, k\}$. 
It is worth noticing that at the time we are processing the set $X$, may be not all the edges between the vertices of $X$ are known (those edges may be added in the forthcoming $add_{i,j}$ operations). 
To facilitate the steps of the dynamic programming algorithm, we first assume  that either all the edges between the vertices of $X$ labeled $i$ and the vertices of $X$ labeled $j$ are already known, or none of them, for all distinct $i,j$. 
A $k$-expressions fulfilling this constraint can be computed from any $k$-expression in linear time \cite{CourcelleO00}. 

\begin{enumerate}
\item Let $D$ be a connected dominating set of $G$ and let $D_X:=D\cap X$. First, $D$ is not necessarily connected, neither a dominating set of $G[X]$. So, the usual
  dynamic programming algorithm keeps, for each such $D_X$, the pair $(R,R')$ of sequences over $\{0,1\}^k$, where $R_i:=1$ if and only if $D_X$ contains a vertex labeled
  $i$, and $R'_i:=1$ if and only if $X$ has a vertex labeled $i$ not dominated by $D_X$. One first observes that if $D_X$ and $D'_X$ have the same pair of sequences
  $(R,R')$, then $D_X\cup D_Y$ is a dominating set of $G$ if and only if $D_X'\cup D_Y$ is a dominating set.  Therefore, it is sufficient to keep for each pair $(R,R')$
  of sequences in $\{0,1\}^k$ the possible partitions of $\{1,\ldots,k\}$ corresponding, informally, to the connected components of the graphs induced by the $D_X$'s, and for each
  possible partition, the maximum weight among all corresponding $D_X$'s. Notice that the graphs induced by the $D_X$'s are not necessarily induced subgraphs of $G$. One
  easily checks that these tables can be updated without difficulty following the clique-width operations in time $k^{O(k)}\cdot n^{O(1)}$.

  In order to obtain $2^{O(k)}\cdot n$ time algorithms, we modify this algorithm so that the partitions instead of corresponding to the connected components of the
  graphs induced by the $D_X$'s, do correspond to the connected components of the induced subgraphs $G[D_X]$. For doing so, we do not guess the existence of vertices
  labeled $i$ that are not dominated, but rather the existence of a vertex that will dominate the vertices labeled $i$ (if not already dominated). With this modification,
  the steps of our dynamic programming algorithms can be described in terms of the operations on partitions defined in \cite{BodlaenderCKN15}. We can therefore use the same notion
  of representativity in order to reduce the time complexity. 

\item We consider this example because we reduce the computation of a minimum feedback vertex set to that of a maximum tree. We first observe that we cannot use the same trick as in \cite{BodlaenderCKN15} to ensure the acyclicity, that is counting the number of edges induced by the partial solutions. 
Indeed, whenever an $add_{i,j}$  operation is used, many edges can be added at the same time. Hence, counting the edges induced by a partial solution would imply to know, for each partial solution, the number of vertices labeled $i$, for each $i$.
But, this automatically leads to an $n^{O(k)}$ time algorithm. 
We overcome this difficulty by first defining a binary relation $\acy$ on partitions where $\acy(p,q)$ holds whenever there are forests $E$ and $F$, on the same vertex set, such that $E\cup F$ is a forest, and $p$ and $q$ correspond, respectively, to the connected components of $E$ and $F$. 
In a second step, we redefine some of the operations on partitions defined in \cite{BodlaenderCKN15} in order to deal with the acyclicity. These operations are used to describe the steps of the algorithm. They informally help updating the
  partitions after each clique-width operation by detecting partial solutions that may contain cycles.  
  We also define a new notion of representativity,
  \emph{ac-representativity}, where $\cS'$ ac-represents $\cS$ if, whenever there is $S\in \cS$ that can be completed into an acyclic connected set, there is $S'\in \cS'$
  that can be completed into a connected acyclic set.
  We then prove that one can also compute an ac-representative set of size $2^{O(k)}$, assuming the partitions are on
  $\{1,\ldots,k\}$.

  It remains now to describe the steps of the dynamic programming algorithm in terms of the new operations on partitions. First, we are tempted to keep for each forest
  $F$ of $X$, the partition induced by the transitive closure of $\sim$ where $i\sim j$ whenever there is a vertex $x$ labeled $i$, connected in $F$, to a vertex $y$
  labeled $j$. However, this is not sufficient because we may have in a same connected component two vertices labeled $i$, and any forthcoming $add_{i,j}$ operation will
  create a cycle in $F$ if there is a vertex labeled $j$ in $F$. To overcome this difficulty we index our dynamic programming tables with functions $s$ that inform, for
  each $i$, whether there is a vertex labeled $i$ in the partial solutions, and if yes, in exactly $1$ vertex, or in at least $2$ vertices. Indeed, knowing the existence
  of at least two vertices is sufficient to detect some cycles when we encounter an $add_{i,j}$ operation. We need also to make a difference when we have all the vertices
  labeled $i$ in different connected components, or when at least two are in a same connected component. This will allow to detect all the partial solutions $F$ that may
  contain triangles or cycles on $4$ vertices with a forthcoming $add_{i,j}$ operation. Such cycles cannot be detected by the $\acy$ binary relation on partitions since this
  latter does not keep track of the number of vertices in each label. However, the other kinds of cycles are detected through the $\acy$ binary relation.  We refer the reader
  to Section \ref{sec:fvs} for a more detailed description of the algorithm.
\end{enumerate}

One might notice that our algorithms are optimal under the well-know Exponential Time Hypothesis (ETH) \cite{ImpagliazzoPZ98}.  Indeed, from known Karp-reductions, there
are no $2^{o(n)}$ time algorithms for the considered problems. 
Since the clique-width of a graph is always smaller than its number of vertices, it follows that, unless ETH fails, there are no $2^{o(k)}\cdot n^{O(1)}$ time algorithms, for these problems.

The remainder of the paper is organised as follows. 
The next section is devoted to the main notations, and for the definition of clique-width and of the considered problems. 
The notion of \emph{ac-representativity} and the modified operations on partitions are given in Section \ref{sec:framework}. 
We also propose the algorithm for computing an ac-representative set of size $2^{O(|V|)}$, for sets of weighted partitions on a finite set $V$. 
The algorithms for computing a minimum feedback vertex set and connected $(\sigma,\rho)$-dominating sets are given in, respectively, Sections \ref{sec:fvs} and \ref{sec:dom}. 

\section{Preliminaries}\label{sec:prelim}

The size of a set $V$ is denoted by $|V|$ and its power set is denoted by $2^V$. We write $A\setminus B$ for the set difference of $A$ from $B$, and we write $A\duni B$ for the
disjoint union of $A$ and $B$. We often write $x$ to denote the singleton set $\{x\}$.  For a mapping $f:A\to B$, we let $f^{-1}(b):=\{a\in A\mid b= f(a)\}$ for
$b\in B$.  We let $\min (\emptyset):= +\infty$ and $\max(\emptyset):=-\infty$.  We let $[k]:=\{1,\ldots,k\}$. We denote by $\bN$ the set of non-negative integers and by $\bF_2$ the binary field.

\medskip
\paragraph{\bf Partitions} A partition $p$ of a set $V$ is a collection of non-empty subsets of $V$ that are pairwise non-intersecting and such that
$\bigcup_{p_i\in p} p_i = V$; each set in $p$ is called a \emph{block} of $p$. The set of partitions of a finite set $V$ is denoted by $\Pi(V)$, and $(\Pi(V),\sqsubseteq)$ forms a lattice where $p\sqsubseteq q$ if for each block $p_i$ of $p$ there is a block $q_j$ of $q$ with $p_i\subseteq q_j$. The join operation of this lattice is denoted by $\sqcup$. 
For example, we have \[ \{ \{ 1,2\},\{3,4\},\{5\}\} \sqcup \{\{1\},\{2,3\},\{4\},\{5\}\}= \{ \{ 1,2,3,4\}, \{5\}\}. \]
Let $\block{p}$ denote the number of blocks of a partition $p$. Observe that $\emptyset$ is the only partition of the empty set.  A \emph{weighted partition}
is an element of $\Pi(V)\times \bN$ for some finite set $V$. 


For $p\in \Pi(V)$ and $X\subseteq V$, let $p_{\downarrow X}\in \Pi(X)$ be the partition $\{p_i\cap X\mid p_i\in p\}\setminus \{\emptyset\}$, and for $Y\supseteq V$, let $p_{\uparrow Y}\in \Pi(Y)$ be the partition
$p\cup \left(\bigcup_{y\in Y\setminus V} \{\{y\}\}\right)$. 


\medskip
\paragraph{\bf Graphs} Our graph terminology is standard, and we refer to \cite{Diestel05}. The vertex set of a graph $G$ is denoted by $V(G)$ and its edge set by
$E(G)$. An edge between two vertices $x$ and $y$ is denoted by $xy$ (or $yx$). The subgraph of $G$ induced by a subset $X$ of its vertex set is denoted by
$G[X]$, and we write $G\setminus X$ to denote the induced subgraph $G[V(G)\setminus X]$.  The set of vertices that are adjacent to $x$ is denoted by $N_G(x)$, and for
$U\subseteq V(G)$, we let $N_G(U):=\left(\cup_{v\in U}N_G(v)\right)\setminus U$.  For a graph $G$, we denote by $\cc(G)$ the partition $\{V(C)\mid C$ is a connected component of $G\}$ of
$V(G)$.




\medskip
\paragraph{\bf Clique-Width} A \emph{$k$-labeled graph} is a pair $(G,lab_G)$ with $G$ a graph and $lab_G$ a function from $V_G$ to $[k]$, called the \emph{labeling
  function}; each set $lab_G^{-1}(i)$ is called a \emph{label class} and vertices in $lab_G^{-1}(i)$ are called \emph{$i$-vertices}. The notion of clique-width is defined by
Courcelle et al. \cite{CourcelleER93} and is based on the following operations.
\begin{enumerate}
\item Create a graph, denoted by $\mathbf{1}(x)$, with a single vertex $x$ labeled with $1$.
\item For a labeled graph $G$ and distinct labels $i,j \in [k]$, relabel the $i$-vertices of $G$ with $j$ (denoted by $ren_{i\to j}(G)$). Notice that there are no more $i$-vertices in $ren_{i\to j}(G)$. 
\item For a labeled graph $G$ and distinct labels $i,j\in [k]$, add all the non-existent edges between the $i$-vertices and the $j$-vertices  (denoted by $add_{i,j}(G)$).
\item Take the disjoint union of two labeled graphs $G$ and $H$, denoted by $G\oplus H$, with $lab_{G\oplus H}(v):=\begin{cases}
lab_G(v) & \text{if } x\in V(G),\\
lab_H(v) & \text{otherwise.}
\end{cases}$
\end{enumerate}

A \emph{$k$-expression} is a finite well-formed term built with the four operations above. Each $k$-expression $t$ evaluates into a $k$-labeled graph $(val(t),lab(t)$. 
The \emph{clique-width} of a graph $G$, denoted by $\cwd(G)$, is the minimum $k$ such that $G$ is isomorphic to $val(t)$ for some $k$-expression $t$. 
We can assume without loss of generality that any $k$-expression defining a graph $G$ uses $O(n)$ disjoint union operations and $O(nk^2)$ unary operations \cite{CourcelleO00}.


It is worth noticing that from the recursive definition of $k$-expressions, one can compute in time linear in $|t|$ the labeling function $lab(t)$ of $val(t)$, and hence we will always assume that it is given.

To simplify our algorithms, we will use \emph{irredundant $k$-expressions} (see for instance Section \ref{sec:fvs}).

\begin{definition}[Irredundant $k$-expressions \cite{CourcelleO00}]\label{defn:cw-ir} A $k$-expression is \emph{irredundant} if whenever the
  operation $add_{i,j}$ is applied on a graph $G$, there is no edge between an $i$-vertex and a $j$-vertex in $G$.
\end{definition}

It is proved in \cite{CourcelleO00} that any $k$-expression can be transformed in linear time into an irredundant $k$-expression.

\medskip
\paragraph{\bf Considered Connectivity Problems} For all the problems in this article, we consider the weight function to be on the vertices. Observe that putting weights
on the edges would make some of the considered problems such as \textsc{Steiner Tree} NP-hard even on graphs of clique-width two (since cliques have clique-width two). 

A subset $X\subseteq V(G)$ of the vertex set of a graph $G$ is a \emph{feedback vertex set} if $G\setminus X$ is a forest. 
The problem \textsc{Feedback Vertex Set} consists in finding a minimum feedback vertex set. It is not hard to verify that $X$ is a minimum feedback vertex set of $G$ if and only if $G\setminus X$ is a maximum induced forest. 

The problem \textsc{Node-weighted Steiner Tree} asks, given a subset of vertices $K\subseteq V(G)$ called \emph{terminals}, a subset $T$ of minimum weight such that $K\subseteq T \subseteq V(G)$ and $G[T]$ is connected. 

Let $\sigma$ and $\rho$ be two (non-empty) finite or co-finite subsets of $\bN$. We say that a subset $D$ of $V(G)$ \emph{$(\sigma, \rho)$-dominates} a subset $U\subseteq V(G)$ if for
every vertex $u\in U$, $ |N_G(u)\cap D|\in \sigma$ if $u\in D$, otherwise $|N_G(u)\cap D|\in \rho$. 
We say that a set $X$ is a $(\sigma,\rho)$-dominating set (resp. \emph{co-$(\sigma,\rho)$-dominating set}) of a graph $G$, if $V(G)$ is  $(\sigma, \rho)$-dominated by $X$ (resp. $V(G)\setminus X$).
In the connected version, we are moreover asking $X$ to be connected.

Examples of \textsc{Connected (Co-)$(\sigma,\rho)$-Dominating Set} problems are shown in Table \ref{tab:dom}. 
\begin{figure}[h]
\renewcommand\arraystretch{1.2}
\begin{tabular}{|c|c|c|c|}
\hline
$\sigma$ & $\rho$ & Version  &  Standard name  \\\hline
$\bN$ & $\bN^+$ & \emph{Normal} & \textsc{Connected Dominating Set}  \\\hline
$\bN^+$ & $\bN^+$ & \emph{Normal} & \textsc{Connected Total Dominating Set}  \\ \hline
$\{d\}$ & $\bN$ &  \emph{Normal} &\textsc{Connected Induced $d$-Regular Subgraph}  \\ \hline
$\bN$ & $\{1\}$ & \emph{Normal} & \textsc{Connected Perfect Dominating Set}  \\ \hline
$\{0\}$ & $\bN$ & \emph{Co} & \textsc{Connected Vertex Cover}  \\ \hline
\end{tabular} 
\caption{ Examples of \textsc{Connected (Co-)$(\sigma,\rho)$-Dominating Set} problems, $\bN^+:=\bN\setminus\{ 0\}$.}
\label{tab:dom}
\end{figure}

\section{Representing Sets of Acyclic Weighted Partitions by Matrices}\label{sec:framework}

We recall that a \emph{weighted partition} is an element of $\Pi(V)\times \bN$ for some finite set $V$. 
Our algorithms compute a set of weighted partitions $\cA \subseteq \Pi(V)\times \bN$, for each labeled graph $H$ used in the $k$-expression of the given graph $G$ and for every subset $V\subseteq lab_H^{-1}(V(H))$.
Each weighted partition $(p,w) \in \cA\subseteq \pis(V)\times \bN$ is intended to mean the following: there is a solution $S\subseteq V(H)$ of weight $w$ such that $p$ is the transitive closure of the following equivalence relation $\sim$ on $V$ : $i\sim j$ if there exist an $i$-vertex and a $j$-vertex in the same component of $H[S]$.
Moreover, for every label $i$ in $V$, there is at least one $i$-vertex in $S$. 
For each label $i$ in $V$, we expect the $i$-vertices of $S$ to have an additional neighbor in any extension of $S$ into an optimum solution.
This way, for each label $i\in V$, we can consider the $i$-vertices of $S$ as one vertex in terms of connectivity, since they will have a common neighbor in any extension of $S$.
On the other hand, the labels $j \in [k]\setminus V$ such that $S$ contains at least one $j$-vertex are expected to have no additional neighbor in any extension of $S$ into an optimum solution. Consequently, the vertices in $S$ with a label in $[k]\setminus V$ do no longer play a role in the connectivity.
These expectations allow us to represent the connected components of $H[S]$ by $p$.
Our algorithms will guarantee that the weighted partitions computed from $(p,w)$ are computed accordingly to these expectations.

When considering the \textsc{Feedback Vertex Set} problem, as said in the introduction, the trick used in \cite{BodlaenderCKN15} to deal with acyclicity and that consists in counting the number of edges induced by the partial solutions would yield an $n^{O(k)}$  time algorithm in the case of clique-width.  
Since the partial solutions for the \textsc{Feedback Vertex Set} problem are represented by weighted partitions, we need to certify that whenever we join two weighted partitions and keep it as a partial
solution, it does not correspond to a partial solution with cycles. We introduce in the following a notion of acyclicity between two partitions so that we can identify
the joins of partitions which do not produce cycles.

\begin{definition}
  Let $V$ be a finite set. We let $\acy$ be the relation on $\Pi(V)\times \Pi(V)$ where $\acy(p,q)$ holds exactly when $|V|+\block{p\sqcup q}-(\block{p}+\block{q})=0$. 
\end{definition}

Observe that, if $F_p:=(V,E_p)$ and $F_q:=(V,E_q)$ are forests with components $p=\cc(F_p)$ and $q=\cc(F_q)$ respectively, then $\acy(p,q)$ holds if and only if $E_p\cap E_q= \emptyset$ and $(V,E_p\duni E_q)$ is a forest. The following is then quite easy to deduce. 

\begin{fact}\label{fact:par-rep2} Let $V$ be a finite set. For all partitions $p,q,r\in \Pi(V)$,
$$\acy(p,q)\wedge\acy(p\sqcup q,r)\Leftrightarrow \acy(q,r)\wedge\acy(p,q\sqcup r).$$
\end{fact}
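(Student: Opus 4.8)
The plan is to reformulate $\acy$ through the single quantity $f(p) := |V| - \block{p}$ and then show that both sides of the claimed equivalence collapse to one symmetric condition. First I would record the trivial rearrangement: since $|V| + \block{p\sqcup q} - (\block{p}+\block{q}) = 0$ is the same as $|V| - \block{p\sqcup q} = (|V|-\block{p}) + (|V|-\block{q})$, we have
\[ \acy(p,q) \iff f(p\sqcup q) = f(p) + f(q). \]
The one structural input I need is the subadditivity
\[ f(p\sqcup q) \le f(p) + f(q) \quad\text{for all } p,q\in\Pi(V), \]
so that $\acy(p,q)$ is precisely the statement that this inequality is \emph{tight}.

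To prove subadditivity I would use the forest interpretation recorded just before the statement: realize $p$ and $q$ as the component partitions of forests $(V,E_p)$ and $(V,E_q)$, so that $f(p) = |E_p|$ and $f(q) = |E_q|$; since $p\sqcup q = \cc\bigl((V,E_p\cup E_q)\bigr)$ and the number of edges of a spanning forest of any graph is at most its total edge count, we get $f(p\sqcup q) \le |E_p\cup E_q| \le |E_p| + |E_q| = f(p)+f(q)$. (Equivalently, and without invoking realizability, one may count directly: the bipartite block-intersection graph between the blocks of $p$ and those of $q$ has $\block{p}+\block{q}$ vertices and at most $|V|$ edges, and its connected components are exactly the blocks of $p\sqcup q$, so $\block{p\sqcup q}\ge \block{p}+\block{q}-|V|$.) It is important to state this for arbitrary $p,q$, not only $\acy$-related pairs, because the reduction below bounds $f(p\sqcup q)$ and $f(q\sqcup r)$ without assuming either pair is acyclic.

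With subadditivity in hand I would set $J := p\sqcup q\sqcup r$, which is well defined by associativity of $\sqcup$, and chase the two tightness chains
\[ f(J) \le f(p\sqcup q) + f(r) \le f(p)+f(q)+f(r), \qquad f(J) \le f(p) + f(q\sqcup r) \le f(p)+f(q)+f(r). \]
The left-hand conjunction $\acy(p,q)\wedge\acy(p\sqcup q,r)$ asserts exactly $f(p\sqcup q)=f(p)+f(q)$ and $f(J)=f(p\sqcup q)+f(r)$, which together give $f(J)=f(p)+f(q)+f(r)$; conversely, this single equality forces the first chain to be tight at every step, recovering both conjuncts. Hence the left-hand side is equivalent to $f(J)=f(p)+f(q)+f(r)$. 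Applying the same reasoning to the second chain shows the right-hand conjunction $\acy(q,r)\wedge\acy(p,q\sqcup r)$ is equivalent to the very same symmetric equality, and the claimed equivalence follows. The only genuinely non-routine step is the subadditivity inequality; everything afterwards is the bookkeeping that both sides reduce to the symmetric condition $f(J)=f(p)+f(q)+f(r)$.
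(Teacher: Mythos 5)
Your proof is correct and follows essentially the same route as the paper: both define $f(p):=|V|-\block{p}$, observe that $\acy(p,q)$ means $f(p\sqcup q)=f(p)+f(q)$, and reduce each side of the equivalence to the symmetric condition $f(p\sqcup q\sqcup r)=f(p)+f(q)+f(r)$. The only difference is that you make explicit the subadditivity $f(p\sqcup q)\le f(p)+f(q)$ needed for the converse direction of that reduction, which the paper leaves implicit as ``an easy calculation''.
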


\begin{proof} For a partition $p\in \Pi(V)$, let $f(p):=|V|-\block{p}$. One easily checks that $\acy(p,q)$ holds if and only if $f(p\sqcup q)$ equals $f(p)+f(q)$. One can
  therefore deduce, by an easy calculation from this equivalence, that $\acy(p\sqcup q,r)\wedge\acy(p,q)$ is equivalent to saying that $f(p\sqcup q\sqcup r)$ equals
  $f(p)+f(q)+f(r)$. The same statement holds for $\acy(q,r)\wedge\acy(p,q\sqcup r)$.
\end{proof}

By definition of $\acy$ and of $\sqcup$, we can also observe the following.

\begin{fact}\label{fact:join} Let $V$ be a finite set. Let $q\in \Pi(V)$ and let $X\subseteq V$ such that no subset of $X$ is a block of $q$. Then, for each $p\in \Pi(V\setminus X)$, we can observe
  the following equivalences
  \begin{align}
    p_{\uparrow X} \sqcup q = \{V\} & \Longleftrightarrow p\sqcup q_{\downarrow (V\setminus X)} = \{V\setminus X\}& \text{and} \\
    \acy(p_{\uparrow X},q)  & \Longleftrightarrow \acy(p,q_{\downarrow (V\setminus X)}).
  \end{align}
\end{fact}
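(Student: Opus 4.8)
The plan is to reduce both equivalences to a single structural identity relating the join $\sqcup$ to the restriction operation $\downarrow$. Writing $W := V \setminus X$, note first that $(p_{\uparrow X})_{\downarrow W} = p$, since lifting $p$ only adds the singletons $\{x\}$ for $x \in X$, all of which vanish upon restricting back to $W$. I would then establish the key lemma
\[ (p_{\uparrow X} \sqcup q)_{\downarrow W} = p \sqcup q_{\downarrow W}, \]
which says that the join of $p_{\uparrow X}$ and $q$, once we forget the part living in $X$, is exactly the join computed on $W$. Intuitively this holds because the elements of $X$ are isolated (singletons) in $p_{\uparrow X}$, so they cannot help merge two elements of $W$ except by lying inside a common block of $q$, and that merging is already recorded by $q_{\downarrow W}$.

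For the lemma I would argue at the level of connectivity, viewing a partition as the connected-components relation of the graph whose blocks are cliques. The inclusion $p \sqcup q_{\downarrow W} \sqsubseteq (p_{\uparrow X} \sqcup q)_{\downarrow W}$ is immediate, since every block of $p$ is a block of $p_{\uparrow X}$ and every block of $q_{\downarrow W}$ sits inside a block of $q$. For the reverse inclusion I would take $w, w' \in W$ lying in a common block of $p_{\uparrow X} \sqcup q$, realized by a walk $w = u_0, u_1, \ldots, u_m = w'$ in which consecutive (distinct) elements share a block of $p_{\uparrow X}$ or of $q$; whenever some $u_i \in X$ occurs it is a singleton of $p_{\uparrow X}$, so both incident steps must be $q$-steps, and transitivity of ``lying in a common $q$-block'' lets me delete $u_i$ while preserving the walk property. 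Iterating removes all $X$-elements, leaving a walk inside $W$ that witnesses that $w$ and $w'$ share a block of $p \sqcup q_{\downarrow W}$.

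It remains to feed in the hypothesis, which I expect to be exactly where ``no subset of $X$ is a block of $q$'' is genuinely needed. Since $p_{\uparrow X} \sqcup q \sqsupseteq q$, every block of the join is a union of blocks of $q$, and as each block of $q$ meets $W$ by hypothesis, no block of $p_{\uparrow X} \sqcup q$ is contained in $X$. Equivalence (1) then follows: restricting $\{V\}$ to $W$ gives $\{W\}$, while conversely, if $p \sqcup q_{\downarrow W} = \{W\}$, the lemma puts all of $W$ in one block of the join, and since every block meets $W$ there can be only that single block, forcing $p_{\uparrow X} \sqcup q = \{V\}$. For equivalence (2) I would unfold the definition of $\acy$ using $\block{p_{\uparrow X}} = \block{p} + |X|$, the relation $|V| = |W| + |X|$, and the two consequences of the hypothesis, namely $\block{q_{\downarrow W}} = \block{q}$ (no $q$-block dies under restriction) and $\block{(p_{\uparrow X} \sqcup q)_{\downarrow W}} = \block{p_{\uparrow X} \sqcup q}$ (no block of the join dies either, so restriction to $W$ is a bijection on blocks), after which both instances of $\acy$ reduce to the single equation $|W| + \block{p_{\uparrow X} \sqcup q} = \block{p} + \block{q}$. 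The main obstacle is the nontrivial inclusion in the lemma; once the shortcutting argument is in place, the hypothesis does the remaining work almost mechanically.
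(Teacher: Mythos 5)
Your proof is correct. The paper actually states this fact without proof (it is asserted as an immediate observation from the definitions of $\acy$ and $\sqcup$), and your argument — the restriction identity $(p_{\uparrow X}\sqcup q)_{\downarrow (V\setminus X)}=p\sqcup q_{\downarrow (V\setminus X)}$ established by shortcutting walks through the $X$-singletons, combined with the observation that the hypothesis forces every block of $q$ and of the join to meet $V\setminus X$, so that all block counts are preserved under restriction — is a sound and complete way of supplying the missing details for both equivalences.
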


We modify in this section the operators on weighted partitions defined in \cite{BodlaenderCKN15} in order to express our dynamic programming algorithms in terms of these
operators, and also to deal with acyclicity.  Let $V$ be a finite set. First, for $\cA\subseteq \pis(V)\times \bN$, let
$$ \rmc(\cA) :=\{(p,w)\in \cA \mid \forall(p,w')\in \cA, w' \leq w)\}.$$ 
This operator, defined in \cite{BodlaenderCKN15}, is used to remove all the partial solutions whose weights are not maximum \wrt to their partitions.

\medskip

\paragraph{\bf Ac-Join.} Let $V'$ be a finite set. For $\cA\subseteq \pis(V)\times \bN$ and $\cB\subseteq \pis(V')\times \bN$, we define $\acjoin(\cA,\cB)\subseteq
\pis(V\cup V')\times \bN$ as 
{\small  
\begin{align*}
  \acjoin(\cA,\cB)&:=\{(p_{\uparrow V'}\sqcup q_{\uparrow V},w_1+w_2)\mid 
                     (p,w_1)\in \cA, (q,w_2)\in \cB
                    \textrm{ and }\acy(p_{\uparrow V'},q_{\uparrow V})\} .
\end{align*}}
This operator is more or less the same as the one in \cite{BodlaenderCKN15}, except that we incorporate the acyclicity condition. It is used to construct partial
solutions while guaranteeing the acyclicity. 

\medskip
 \paragraph{\bf Project.} For $X\subseteq V$ and $\cA\subseteq \pis(V)\times \bN$, let $\proj(\cA,X)\subseteq \pis(V\setminus X)\times \bN$ be
\begin{align*}
\proj(\cA,X) &:=\{(p_{\downarrow (V\setminus X)},w)\mid (p,w)\in \cA \textrm{ and } \forall p_i\in p, (p_i\setminus X) \neq \emptyset\}.
\end{align*}

This operator considers all the partitions such that no block is completely contained in $X$, and then remove $X$ from those partitions. We index our dynamic programming
tables with functions that informs on the label classes playing a role in the connectivity of partial solutions, and this operator is used to remove from the partitions the label classes that are required to no longer play a role in the connectivity of the partial solutions. 
If a partition has a block fully contained in $X$, it means that this block will remain disconnected in the future steps of our dynamic
programming algorithm, and that is why we remove such partitions (besides those with cycles).

\bigskip

One needs to perform the above operations efficiently, and this is guaranteed by the following, which assumes that $\log(|\cA|) \leq |V|^{O(1)}$ for each
$\cA\subseteq \pis(V)\times \bN$ (this can be established by applying the operator $\rmc$).

\begin{proposition}[Folklore]\label{prop:op} The operator $\acjoin$ can be performed in time $|\cA|\cdot |\cB|\cdot |V\cup V'|^{O(1)}$ and the size of its output is upper-bounded by $|\cA|\cdot |\cB|$. The operators $\rmc$ and $\proj$ can be performed in time $|\cA|\cdot |V|^{O(1)}$, and the sizes
  of their outputs are upper-bounded by $|\cA|$.
\end{proposition}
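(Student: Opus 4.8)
The plan is to analyze each of the three operators separately, bounding both the running time and the output size, and in each case the argument reduces to counting how many weighted partitions are produced and how much work each one costs.

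First I would handle $\acjoin$. By definition, every element of $\acjoin(\cA,\cB)$ arises from a choice of a pair $(p,w_1)\in\cA$ and $(q,w_2)\in\cB$, so there are at most $|\cA|\cdot|\cB|$ such pairs, giving the claimed output size bound (the output is a set, so this is only an upper bound, but that is all we need). For the running time, I would iterate over all such pairs; for each pair the work consists of lifting $p$ to $\pis(V\cup V')$ via $p_{\uparrow V'}$ and $q$ via $q_{\uparrow V}$, checking the predicate $\acy(p_{\uparrow V'},q_{\uparrow V})$, and, when it holds, forming the join $p_{\uparrow V'}\sqcup q_{\uparrow V}$ and summing the weights. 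Each lift and each join is a standard partition operation on a ground set of size $|V\cup V'|$, computable in $|V\cup V'|^{O(1)}$ time; the predicate $\acy$ only requires computing $\block{\cdot}$ of the two partitions and of their join and comparing the resulting integer to $|V\cup V'|$, which is also polynomial. Hence the total time is $|\cA|\cdot|\cB|\cdot|V\cup V'|^{O(1)}$.

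Next I would treat $\proj$ and $\rmc$. For $\proj(\cA,X)$, each output element comes from a single $(p,w)\in\cA$, so the output size is at most $|\cA|$; the per-element work is to test, for each block $p_i$ of $p$, whether $p_i\setminus X\neq\emptyset$, and then to compute the restricted partition $p_{\downarrow(V\setminus X)}$, both of which are $|V|^{O(1)}$, giving total time $|\cA|\cdot|V|^{O(1)}$. For $\rmc(\cA)$, the output is a subset of $\cA$, so its size is trivially at most $|\cA|$; to compute it one groups the weighted partitions by their underlying partition and keeps, in each group, only those of maximum weight. With an appropriate encoding of partitions as keys (or by sorting), this grouping costs $|\cA|\cdot|V|^{O(1)}$, where the factor $|V|^{O(1)}$ absorbs the cost of comparing partitions of the ground set $V$; here the assumption $\log(|\cA|)\le|V|^{O(1)}$ guarantees that index and weight arithmetic stays within polynomial factors.

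I expect no genuine obstacle, which is why the statement is labelled folklore: the bounds are immediate consequences of the definitions once one observes that every operator produces at most one output per input tuple and that all the underlying set-partition primitives ($\sqcup$, $\block{\cdot}$, $p_{\uparrow V'}$, $p_{\downarrow(V\setminus X)}$, and the $\acy$-test) run in time polynomial in the size of the ground set. The only point requiring a moment's care is the running-time bookkeeping for $\rmc$ and for deduplication in $\acjoin$: naively, comparing partitions could cost more than a constant, but encoding each partition of $V$ canonically makes comparison $|V|^{O(1)}$, and the stated hypothesis $\log(|\cA|)\le|V|^{O(1)}$ ensures that manipulating the sets themselves contributes only polynomial overhead, so all three time bounds follow as claimed.
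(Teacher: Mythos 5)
Your argument is correct and is precisely the routine verification the paper intends: the paper gives no proof of this proposition at all, labelling it ``Folklore,'' so there is nothing to diverge from. Your accounting — one output per input tuple (or pair of tuples) giving the size bounds, each partition primitive costing $|V|^{O(1)}$, and the hypothesis $\log(|\cA|)\le |V|^{O(1)}$ covering the cost of deduplication and of grouping by partition in $\rmc$ — fills in exactly the details the paper leaves implicit.
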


We now define the notion of representative sets of weighted partitions which is the same as the one in \cite{BodlaenderCKN15}, except that we need to incorporate the acyclicity condition as for the $\acjoin$ operator above.

\begin{definition}\label{defn:par-rep} Let $V$ be a finite set and let $\cA\subseteq \pis(V)\times \bN$. For $q\in \pis(V)$, let 
  {$$ \acopt(\cA,q) := \max\{w\mid (p,w)\in \cA,p\sqcup q = \{V\} \textrm{ and } \acy(p,q) \}.$$}
  A set of weighted partitions $\cA'\subseteq \pis(V)\times \bN$ \emph{ac-represents} $\cA$ if for each $q\in \pis(V)$, it holds that $\acopt(\cA,q)=\acopt(\cA',q)$.

  Let $Z$ and $V'$ be two finite sets.  A function $f: 2^{\pis(V)\times \bN}\times Z\to 2^{\pis(V')\times \bN}$ is said to \emph{preserve ac-representation} if for each
  $\cA,\cA'\subseteq \pis(V)\times \bN$ and $z\in Z$, it holds that $f(\cA',z)$ ac-represents $f(\cA,z)$ whenever $\cA'$ ac-represents $\cA$.
\end{definition}

At each step of our algorithm, we will compute a small set $\cS'$ that ac-represents the set $\cS$ containing all the partial solutions.
In order to prove that we compute an ac-representative set of $\cS$, we show that $\cS=f(\cR_1,\dots,\cR_t)$ with $f$ a composition of functions that preserve ac-representation, and $\cR_1,\dots,\cR_t$ the sets of partials solutions associated with the previous steps of the algorithm.
To compute $\cS'$, it is sufficient to compute $f(\cR'_1,\dots,\cR_t')$, where each $\cR_i'$ is an ac-representative set of $\cR_i$.
The following lemma guarantees that the operators we use preserve ac-representation.

\begin{lemma}\label{lem:par-rep} The union of two sets in $2^{\Pi(V)\times \bN}$ and the operators $\rmc$, $\proj$, and $\acjoin$ preserve ac-representation.
\end{lemma}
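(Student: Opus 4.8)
## Proof Plan

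The plan is to handle the four operators separately, since ac-representation is a property that must be verified operator-by-operator. For the binary union, the verification should be essentially immediate: if $\cA_1'$ ac-represents $\cA_1$ and $\cA_2'$ ac-represents $\cA_2$, then for every target partition $q$ we have $\acopt(\cA_1\cup\cA_2,q) = \max\{\acopt(\cA_1,q),\acopt(\cA_2,q)\}$, and the same identity holds with primes, so the two maxima coincide. The operator $\rmc$ is also easy, because it only discards weighted partitions $(p,w)$ for which a strictly heavier $(p,w')$ with the \emph{same} partition survives; since $\acopt$ takes a maximum of weights over partitions satisfying a condition depending only on $p$ (not on $w$), removing dominated duplicates never changes any $\acopt$ value, so $\rmc(\cA)$ ac-represents $\cA$ itself, and ac-representation of the input is preserved through it.

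The substantive work is the $\proj$ and $\acjoin$ operators. First I would treat $\proj$. Fix $X\subseteq V$ and a target partition $q\in\Pi(V\setminus X)$. The key is to rewrite $\acopt(\proj(\cA,X),q)$ as an $\acopt$ over $\cA$ against a cleverly chosen lifted target, so that ac-representation of $\cA$ transfers. A natural candidate is to relate a weighted partition $(p,w)\in\cA$ that survives projection (no block of $p$ lies inside $X$) with the condition against $q_{\uparrow X}$ on $V$. The idea is that $p_{\downarrow(V\setminus X)}\sqcup q=\{V\setminus X\}$ together with acyclicity should be equivalent, under the hypothesis that $p$ has no block inside $X$, to the analogous condition $p\sqcup q_{\uparrow X}=\{V\}$ and $\acy(p,q_{\uparrow X})$. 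This is exactly where Fact~\ref{fact:join} does the heavy lifting: taking the roles of $p$ and $q$ in that fact appropriately, the two displayed equivalences translate the join-to-$\{V\}$ condition and the acyclicity condition between the projected world and the lifted world. Once I establish that $\acopt(\proj(\cA,X),q)=\acopt(\cA,q_{\uparrow X})$ as functions of $\cA$, preservation follows because the right-hand side is unchanged when $\cA$ is replaced by an ac-representative.

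For $\acjoin$, fix $\cB\subseteq\Pi(V')\times\bN$ and a target partition $r\in\Pi(V\cup V')$; I want to show the function $\cA\mapsto\acjoin(\cA,\cB)$ preserves ac-representation in its first argument (symmetry then handles the second). The plan is to expand $\acopt(\acjoin(\cA,\cB),r)$ by unfolding the definition: it is the maximum of $w_1+w_2$ over $(p,w_1)\in\cA$, $(q,w_2)\in\cB$ with $\acy(p_{\uparrow V'},q_{\uparrow V})$, and such that the joined partition $(p_{\uparrow V'}\sqcup q_{\uparrow V})$ satisfies $\sqcup r=\{V\cup V'\}$ and $\acy(\cdot,r)$. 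The goal is to collapse all the $\cB$-side and $r$-side conditions into a single per-$p$ statement: there should exist, for each fixed $(q,w_2)$, a derived target partition on $V$ against which $p$ must be tested by $\acopt(\cA,\cdot)$. Here is where I expect the main obstacle: the acyclicity predicate does not compose as simply as the join predicate, because $\acy$ involves block counts additively rather than just the lattice order. The right tool is Fact~\ref{fact:par-rep2}, which gives precisely the associativity $\acy(p,s)\wedge\acy(p\sqcup s,r)\Leftrightarrow\acy(s,r)\wedge\acy(p,s\sqcup r)$ needed to regroup the three-way acyclicity condition (among $p$, the lifted $q$, and $r$) so that the conditions on $p$ are isolated into a single $\acy(p,\tilde r)$ against some combined target $\tilde r$ built from $q$ and $r$, while the residual conditions depend only on $(q,w_2,r)$. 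After this regrouping, $\acopt(\acjoin(\cA,\cB),r)$ becomes $\max_{(q,w_2)\in\cB,\,C(q,r)}\bigl(w_2+\acopt(\cA,\tilde r(q,r))\bigr)$ for an appropriate side-condition $C$ and combined target $\tilde r$; since each inner $\acopt(\cA,\tilde r)$ is preserved when $\cA$ is replaced by an ac-representative, the whole maximum is preserved. I would close by noting that composing these four preservation statements yields the lemma, since a composition of ac-representation-preserving functions preserves ac-representation.
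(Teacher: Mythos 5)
Your plan follows essentially the same route as the paper's proof: union and $\rmc$ are immediate, $\proj$ is handled via the identity $\acopt(\proj(\cA,X),q)=\acopt(\cA,q_{\uparrow X})$ obtained from Fact~\ref{fact:join}, and $\acjoin$ is collapsed, using Fact~\ref{fact:par-rep2} followed by Fact~\ref{fact:join}, into a per-$(q,w_2)$ maximum of $w_2+\acopt(\cA,(q_{\uparrow V}\sqcup r)_{\downarrow V})$. The one detail you should make explicit is that before invoking Fact~\ref{fact:join} in the $\acjoin$ case you must first discard those $(q,w_2)\in\cB$ for which a subset of $V'\setminus V$ is a block of $q_{\uparrow V}\sqcup r$ (they can never contribute to the optimum since no $p_{\uparrow V'}$ can merge such a block, but the projection equivalence of Fact~\ref{fact:join} fails for them), so your ``side-condition $C$'' must include this exclusion.
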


\begin{proof}  Let $V$ be a finite set and let $\cA$ and $\cA'$ be two subsets of $\pis(V)\times \bN$.
	The proof for the union follows directly from the definition of $\acopt$.

  \medskip \subparagraph{\small \bf Rmc}  Let $q\in \pis(V)$. By the definition of $\rmc$, whenever $(p,w)\in \cA$ is such that
  $p\sqcup q=\{V\}$, $\acy(p,q)$ and $\acopt(\cA,q)=w$, then $(p,w)\in \rmc(\cA)$, otherwise there would exist $(p,w')\in \cA$ with
  $w'>w$ which would contradict $w=\acopt(\cA,q)$. Therefore, $\acopt(\rmc(\cA),q) = \acopt(\cA,q)$. We can then conclude that if $\cA'$ ac-represents $\cA$, it holds that $\rmc(\cA')$ ac-represents $\rmc(\cA)$.

  \medskip \subparagraph{\small \bf Projections.} Because $\proj(\cA,X)=\proj(\proj(\cA,x),X\setminus \{x\})$ for all $X\subseteq V$ and $x\in X$, we can assume that $X=\{x\}$.  
  Let  $q\in \pis(V\setminus \{x\})$. For every $(p,w)\in \cA$, if $\{x\}\in p$, then $p\sqcup q_{\uparrow x}\neq \{V  \}$, and $(p_{\downarrow V\setminus x},w)\notin\proj(\cA,\{x\})$. Otherwise, $(p_{\downarrow V\setminus x},w)\in\proj(\cA,\{x\})$,  and by Fact \ref{fact:join} we have
  \begin{align*}
 p_{\downarrow V\setminus x} \sqcup q = \{V\setminus \{x\}\} & \Longleftrightarrow p\sqcup q_{\uparrow x} = \{V\} & \textrm{ and}\\
  \acy(p_{\downarrow V\setminus x}, q)  &\Longleftrightarrow \acy(p, q_{\uparrow x}). 
  \end{align*}
  Therefore, we have $\acopt(\proj(\cA,\{x\}),q) = \acopt(\cA,q_{\uparrow x})$.  
  From this equality, we can conclude that $\proj(\cA',\{x\})$ ac-represents $\proj(\cA,\{x\})$, for all  $\cA'\subseteq \cA$ such that $\cA'$ ac-represents $\cA$.

  \medskip \subparagraph{\small \bf Ac-Join.} Let $V'$ be a finite set and let $\cB\subseteq \pis(V')\times \bN$. Let $r\in \pis(V\cup V')$.

	 Observe that for all $(q,w_2)\in \cB$, if a subset of  $V'\setminus V$ is a block of $q_{\uparrow V}\sqcup r$, then for all $p\in \Pi(V)$, we have $p_{\uparrow V'}\sqcup q_{\uparrow V} \sqcup r \neq \{ V\cup V'\}$. 
	 Therefore, $\acopt(\acjoin(\cA,\cB),r)=\acopt(\acjoin(\cA,\cB'),r)$ where $\cB'$ is the set of all $(q,w)\in \cB$ such that no subset of $V'\setminus V$ is a block of $q_{\uparrow V}\sqcup r$.
	 
 By definition, $\acopt(\acjoin(\cA,\cB'),r)$ equals {\small 
 	\begin{align*}  \max\{w_1+w_2 \mid (p,w_1)\in \cA \wedge   (q,w_2)&\in \cB'\wedge( p_{\uparrow V'}\sqcup  q_{\uparrow V} \sqcup r) = \{V\cup V'\}  \\ 
      &\wedge \acy( p_{\uparrow V'},q_{\uparrow V}) \wedge \acy(p_{\uparrow V'}\sqcup q_{\uparrow V},r) \}.
        \end{align*}}

  By Fact \ref{fact:par-rep2}, $\acopt(\acjoin(\cA,\cB'),r)$ is then equal to
   {\small \begin{align*} 
   	 \max\{w_1+w_2 \mid (p,w_1)\in \cA \wedge
  (q,w_2)&\in \cB' \wedge (p_{\uparrow V'}\sqcup q_{\uparrow V} \sqcup r) = \{V\cup V'\} \\ 
  & \wedge \acy(q_{\uparrow V},r)  \wedge \acy(p_{\uparrow V'}, q_{\uparrow V}\sqcup r) \}.
    \end{align*}}

  We deduce, by Fact \ref{fact:join} and the definition of $\cB'$, that $\acopt(\acjoin(\cA,\cB'),r)$ equals 
  {\small \begin{align*} 
  	 \max\{w_1+w_2 \mid (p,w_1)\in \cA \wedge  (q,w_2)&\in \cB'\wedge (p\sqcup (q_{\uparrow V} \sqcup r)_{\downarrow V}) = \{V\}\\
       &  \wedge \acy(q_{\uparrow V},r)  \wedge  \acy(p,(q_{\uparrow V}\sqcup r)_{\downarrow V}) \}.
\end{align*}}
Therefore, we can conclude that $\acopt(\acjoin(\cA,\cB),r)$ equals 
{\small \begin{align*}  \max \{\acopt(w_2 + \cA,(q_{\uparrow V} \sqcup
r)_{\downarrow V}) \mid
(q,w_2)\in \cB' \wedge \acy(q_{\uparrow V},r) \}.
\end{align*}}
Therefore, if $\cA'$ ac-represents $\cA$, then we can conclude that $\acopt(\acjoin(\cA,\cB),r)$ equals $\acopt(\acjoin(\cA',\cB),r)$. As this statement is true for all $r\in \Pi(V\cup V')$, we can conclude that $\acjoin(\cA',\cB)$ ac-represents $\acjoin(\cA,\cB)$ whenever $\cA'$ ac-represents $\cA$.
Symmetrically, we deduce that $\acjoin(\cA,\cB^\star)$ ac-represents $\acjoin(\cA,\cB)$ whenever $\cB^\star$ ac-represents $\cB$.
\end{proof}

In the remaining, we will prove that, for every set $\cA\subseteq \pis(V)\times \bN$, we can find, in time $|\cA|\cdot 2^{O(|V|)}$, a subset $\cA'\subseteq \cA$ of size at
most $|V|\cdot 2^{|V|}$ that ac-represents $\cA$. As in \cite{BodlaenderCKN15}, we will encode the ac-representativity by a matrix over $\bF_2$ and show that this one
has rank at most the desired bound. Next, we show that an optimum basis of this matrix ac-represents $\cA$ and such a basis can be computed using the following lemma from
\cite{BodlaenderCKN15}. The constant $\omega$ denotes the matrix multiplication exponent.

\begin{lemma}[\cite{BodlaenderCKN15}]\label{lem:optG} Let $M$ be an $n\times m$-matrix over $\bF_2$ with $m\leq n$ and let $w:\{1,\ldots,n\}\to \bN$ be a weight
  function. Then, one can find a basis of maximum weight of the row space of $M$ in time $O(nm^{\omega-1})$.
\end{lemma}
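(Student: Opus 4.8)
The plan is to recognize the statement as a maximum-weight matroid basis computation and then accelerate the greedy algorithm with fast matrix multiplication. The rows of $M$, equipped with linear independence over $\bF_2$, form a linear matroid of rank $r := \operatorname{rank}(M) \leq m$ whose ground set is $\{1,\ldots,n\}$ and in which a set of row indices is independent exactly when the corresponding rows are linearly independent. A basis of this matroid is precisely a maximal linearly independent set of rows, that is, a basis of the row space of $M$, and ``of maximum weight'' means of maximum total weight $w$ over all such bases. By the classical greedy characterization of matroids, it suffices to sort the rows by non-increasing weight and scan them once, keeping a row whenever it is linearly independent from the rows already kept; the kept rows then form a maximum-weight basis of the row space, and the exchange property of the matroid guarantees its optimality.

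First I would sort $\{1,\ldots,n\}$ by non-increasing weight; since in our applications the weights are non-negative integers bounded by a polynomial, radix sort orders the rows in $O(n)$ time, which is dominated by the rest. The remaining task is to execute the single greedy scan within the claimed budget. A naive implementation keeps the chosen rows as a matrix $U$ in reduced row-echelon form and, for each incoming row, reduces it modulo $U$ in $O(m^2)$ time to test independence, for a total of $O(nm^2)$. To reach $O(nm^{\omega-1})$ I would instead process the sorted rows in $\lceil n/m\rceil$ consecutive blocks of at most $m$ rows each, and handle each block with a constant number of fast matrix multiplications.

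Concretely, I maintain $U$ (at most $m$ rows, in reduced row-echelon form, with its pivot columns recorded). For each block, viewed as an $O(m)\times m$ matrix $B$: (i)~using the pivot structure of $U$, eliminate all of $U$'s pivot columns from $B$ by a single multiplication of the relevant $O(m)\times O(m)$ blocks, in time $O(m^\omega)$; (ii)~on the reduced block, compute a row-echelon form in time $O(m^\omega)$ via a fast, recursive matrix-multiplication-based echelon routine that processes the block's rows in their sorted order, so that a row receives a fresh pivot exactly when it is independent from $U$ together with the already-selected rows of the block; (iii)~append the newly pivoted rows to $U$ and restore reduced row-echelon form, again in $O(m^\omega)$. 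Since the rank never exceeds $m$, $U$ always has at most $m$ rows, so each block costs $O(m^\omega)$, and over the $O(n/m)$ blocks the total is $O(nm^{\omega-1})$. The batching is correct because reducing $B$ modulo $U$ before the in-block elimination does not change which of its rows lie in the span of $U$ and the earlier kept rows, so the set selected block-by-block coincides with the set produced by the one-row-at-a-time greedy.

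The step I expect to be the main obstacle is arranging (ii)--(iii) so that the in-block echelon computation genuinely respects the weight order while still running in $O(m^\omega)$ rather than cubic time: I must invoke a fast echelon-form or $LU$ decomposition that processes rows in a prescribed order and exposes exactly the newly independent rows, and I must check that re-establishing reduced row-echelon form after each block never disturbs the greedy choices already committed. Once this bookkeeping is in place, matroid greedy optimality yields a basis of maximum total weight, completing the argument.
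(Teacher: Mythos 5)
Your proposal is correct and follows essentially the same route as the proof in the cited source \cite{BodlaenderCKN15} (the present paper only imports the lemma without reproving it): sort rows by non-increasing weight, observe that the greedy matroid algorithm yields a maximum-weight basis, and implement the greedy scan by processing the rows in $O(n/m)$ blocks of size $m$, each handled with $O(m^\omega)$-time fast Gaussian elimination (an echelon/LSP-type decomposition that respects the prescribed row order and exposes exactly the newly independent rows). The only caveat is cosmetic: the $O(n)$ radix-sort claim needs weights polynomially bounded, but a comparison sort is dominated by the $O(nm^{\omega-1})$ term for all relevant parameter ranges, so nothing essential is affected.
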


\begin{theorem}\label{thm:reduce1} There exists an algorithm $\acreduce$ that, given a set of weighted partitions $\cA\subseteq \pis(V)\times \mathbb{N}$, outputs in time
  $|\cA|\cdot 2^{(\omega-1)\cdot|V|}\cdot |V|^{O(1)}$ a subset $\cA'$ of $\cA$ that ac-represents $\cA$ and such that $|\cA'|\leq |V|\cdot 2^{|V|-1}$.
\end{theorem}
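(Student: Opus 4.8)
The plan is to encode ac-representativity by a matrix over $\bF_2$ and bound its rank, following the rank-based method, but with a factorization that also accounts for acyclicity. I would introduce the \emph{compatibility matrix} $M$ over $\bF_2$, with rows and columns indexed by $\pis(V)$, defined by $M[p,q]=1$ iff $p\sqcup q=\{V\}$ and $\acy(p,q)$. By the definition of $\acopt$, a subset $\cA'\subseteq\cA$ ac-represents $\cA$ exactly when, for every column $q$, the maximum weight of a row $(p,w)$ with $M[p,q]=1$ is the same over $\cA'$ as over $\cA$. The first observation I would use is that when $p\sqcup q=\{V\}$, so $\block{p\sqcup q}=1$, the defining equation of $\acy$ collapses to $\block{p}+\block{q}=|V|+1$; hence $M[p,q]=[p\sqcup q=\{V\}]\cdot[\block{p}+\block{q}=|V|+1]$.

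The key step is to bound $\operatorname{rank}_{\bF_2}(M)$. Fix $v_0\in V$ and let colorings range over $c:V\to\{0,1\}$ with $c(v_0)=0$, of which there are $2^{|V|-1}$. The crucial counting identity is that the number of such $c$ that are constant on every block of $p$ and on every block of $q$ equals $2^{\block{p\sqcup q}-1}$, which modulo $2$ is precisely $[p\sqcup q=\{V\}]$. To simultaneously capture the block-count condition I would enlarge the index set by a coordinate $a\in[|V|]$ and define two matrices $D,E$ with columns indexed by pairs $(c,a)$, setting $D[p,(c,a)]=[\,c\text{ is constant on the blocks of }p\,]\cdot[\block{p}=a]$ and $E[q,(c,a)]=[\,c\text{ is constant on the blocks of }q\,]\cdot[\block{q}=|V|+1-a]$. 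Summing the product over $a$ yields $[\block{p}+\block{q}=|V|+1]$ and summing over $c$ yields $[p\sqcup q=\{V\}]$ modulo $2$, so $M\equiv D\,E^{T}\pmod 2$. Since $D$ and $E$ each have $|V|\cdot 2^{|V|-1}$ columns, this gives $\operatorname{rank}_{\bF_2}(M)\le |V|\cdot 2^{|V|-1}$. I expect this factorization — isolating acyclicity into the extra block-count coordinate, which is exactly what multiplies the connectivity rank $2^{|V|-1}$ by the factor $|V|$ — to be the main obstacle and the heart of the argument.

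Next I would let $D_{\cA}$ be the submatrix of $D$ with one row per $(p,w)\in\cA$ (after applying $\rmc$, so that each partition occurs once), weight that row by $w$, and invoke Lemma \ref{lem:optG} to compute a maximum-weight basis of the row space of $D_{\cA}$. Taking $\cA'$ to be the set of weighted partitions indexing the chosen basis rows gives $|\cA'|=\operatorname{rank}(D_{\cA})\le|V|\cdot 2^{|V|-1}$, the required size bound. For correctness, fix $q$ and a witness $(p,w)\in\cA$ attaining $\acopt(\cA,q)=w$, so $M[p,q]=1$. Expressing the row of $p$ over $\bF_2$ as a sum of basis rows, $D[p,\cdot]=\sum_{i\in I}D[p_i,\cdot]$, and pairing with $E[q,\cdot]$ yields $1=M[p,q]=\sum_{i\in I}M[p_i,q]$ in $\bF_2$; hence some $p_i$ in the fundamental circuit of $p$ satisfies $M[p_i,q]=1$. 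Because the basis has maximum weight, the standard matroid exchange argument (for each $i\in I$, swapping $p$ for $p_i$ again yields a basis) forces $w_i\ge w$ for every $i\in I$, and in particular for this one. Thus $\acopt(\cA',q)\ge w=\acopt(\cA,q)$, while the reverse inequality is immediate from $\cA'\subseteq\cA$; as $q$ was arbitrary, $\cA'$ ac-represents $\cA$.

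Finally, for the running time, building $D_{\cA}$ costs $|\cA|\cdot 2^{|V|}\cdot|V|^{O(1)}$, since each of its $|\cA|\cdot|V|\cdot 2^{|V|-1}$ entries is computable in $|V|^{O(1)}$ time, and the maximum-weight basis computation via Lemma \ref{lem:optG} on a matrix with $|V|\cdot 2^{|V|-1}$ columns costs $|\cA|\cdot(|V|\cdot 2^{|V|-1})^{\omega-1}=|\cA|\cdot 2^{(\omega-1)\cdot|V|}\cdot|V|^{O(1)}$, which dominates and matches the claimed bound. The hypothesis $m\le n$ of Lemma \ref{lem:optG} is harmless: if $|\cA|$ is smaller than the number of columns then $\cA$ already meets the size bound, and one first restricts to a maximal set of independent columns before applying the lemma.
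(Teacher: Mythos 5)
Your proof is correct and follows essentially the same route as the paper's: both rest on the fixed-$v_0$ consistent-cuts factorization over $\bF_2$, the observation that once $p\sqcup q=\{V\}$ holds the condition $\acy(p,q)$ reduces to $\block{p}+\block{q}=|V|+1$, a stratification by $\block{p}$ that accounts for the extra factor $|V|$ in the bound, and a maximum-weight basis combined with the matroid exchange argument. The only difference is packaging: you fold the block-count stratification into an extra column coordinate of a single block-diagonal matrix $D$ and run one basis computation, whereas the paper splits $\cA$ into the $|V|$ groups $\cA_i=\{p \mid |V|-\block{p}=i\}$ and reduces each group separately with the plain cut matrix $C$ — the two computations coincide.
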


\begin{proof} 
  If $V=\emptyset$, then it is enough to return $\cA':=\{ ( \emptyset,w) \}$, where $(\emptyset,w)\in \cA$ and $w$ is maximum because $\emptyset$ is the only partition of the empty set. 
  
  Assume from now that $|V|\ne \emptyset$ (this will ensure that the following definitions exist).
  Let us first define the matrix that encodes the property that the join of two partitions corresponds to a partition arising from a connected solution. 
  
  Let $v_0$ be a fixed element of $V$ and let $\tricuts(V):=\{(V_1,V_2)\mid V_1\duni V_2=V\textrm{ and }v_0\in V_1\}$.  
Let $M$ and $C$ be, respectively, a $(\pis(V),\pis(V))$-matrix and a
$(\pis(V),\tricuts(V))$-matrix, both over $\bF_2$, such that {\small \begin{align*}
       M[p,q] & := \begin{cases} 0 & \textrm{if $p\sqcup q \ne \{V\}$},\\ 1 & \textrm{otherwise}. \end{cases}\\
       C[p,(V_1,V_2)]&:=\begin{cases} 0 & \textrm{if
                                                  $p \not \sqsubseteq (V_1,V_2)$},\\ 1 & \textrm{otherwise}. \end{cases}
\end{align*}}

As in \cite{BodlaenderCKN15,CyganNPPvRW11}, we fix an element $v_0$ to ensure that for all $p\in \Pi(V)$, the number of cuts $(V_1,V_2)$ such that $p\sqcup q\sqsubseteq (V_1,V_2)$ is odd if and only if $p\sqcup q=\{ V\}$. In fact, this number equals $2^{\block{p}-1}$.
This property is used in \cite{BodlaenderCKN15} to prove that $M=C\cdot C^t$. 

  Let $\cA$ be a set of weighted partitions. 
  In order to compute an ac-representative set of $\cA$, we will decompose $\cA$ into a small number of sets $\cA_i$. Then, for each set $\cA_i$, we compute a set $\cA'_i\subseteq \cA_i$ of $\cA_i$ such that the union of the sets $\cA_i'$ ac-represents $\cA$. To compute $\cA_i'$, we use Lemma \ref{lem:optG} to find a maximum basis of the row space of $C$ restricted to $\cA_i$.
  
  For each $0\leq i \leq |V|-1$, let $\cA_i$ be the set $\{p\mid (p,w)\in \cA$ and $|V|-\block{p}=i\}$, and let $C_\cA^i$ be the restriction of $C$ to rows in $\cA_i$.
  Let $\cB_i$ be a basis of the row space of $C_\cA^i$ of maximum weight, where the weights are the weights\footnote{We can assume w.l.o.g. that $\cA=\rmc(\cA)$, and thus for each $p\in \cA$, there is a unique $w\in\bN$ such that $(p,w)\in\cA$.}  of the considered weighted partitions in $\cA$.
  Observe that $|\cB_i|\leq 2^{|V|-1}$ because the rank of $C_\cA^i$ is bounded by $|\tricuts(V)|=2^{|V|-1}$.
   For $p\in\cA_i$, let $\cB_i(p)$ be the subset of $\cB_i$ such that $C[p,(V_1,V_2)]=\sum_{q\in \cB_i(p)}C[q,(V_1,V_2)]$ for all $(V_1,V_2)\in\tricuts(V)$.
   Let $\cA_i'$ be the subset of $\cA$ corresponding to the rows in $\cB_i$, and let $\cA':=\cA_0'\uplus\dots\uplus\cA_{|V|-1}'$. Notice that $|\cA'|\leq |V|\cdot 2^{|V|-1}$.
   
   Since $C_\cA^i$ has $|\cA_i|$ rows and $2^{|V|-1}$ columns, $\cA_i$ is computable in time $|\cA_i|\cdot 2^{|V|-1} \cdot |V|^{O(1)}$.
   By Lemma \ref{lem:optG}, we can compute $\cB_i$ in time $|\cA_i|\cdot 2^{(\omega -1)\cdot |V|}\cdot |V|^{O(1)}$.
   Hence, we can compute $\cA'$ in time $|\cA|\cdot 2^{(\omega-1)\cdot|V|}\cdot|V|^{O(1)}$ because $\{\cA_0,\ldots, \cA_{|V|-1}\}$ is a partition of $\{ p \mid (p,w )\in \cA\}$.
   
  Let us show that for all $p\in\cA_i$ and $r\in \pis(V)$, if $M[p,r]=1$, then there is $q\in \cB_i(p)$ such that
  $M[q, r]=1$.  Now, from the equality $M=C\cdot C^t$, we
  have { \begin{align*}
                M[p,r] &= \sum\limits_{(V_1,V_2)\in \tricuts(V)}C[p,(V_1,V_2)]\cdot C^t[(V_1,V_2),r] \\
                                   & = \sum\limits_{(V_1,V_2)\in \tricuts(V)}\left(\sum_{q\in \cB_i(p)} C[q,(V_1,V_2)]\right)\cdot C^t[(V_1,V_2),r] \\
                                   & = \sum\limits_{q\in \cB_i(p)} \left(\sum\limits_{(V_1,V_2)\in \tricuts(V)}C[q,(V_1,V_2)]\cdot C^t[(V_1,V_2),r]\right) \\
                                   & = \sum_{q\in \cB_i(p)} M[q,r].
              \end{align*}}
            So, $M[p, r]=1$ if and only if there is an odd number of $q\in \cB_i(p)$ such that $M[q,r]=1$. 

            It remains now to show that $\cA'$ ac-represents $\cA$. Since by construction $\cA'\subseteq \cA$, for each $q\in \pis(V)$, we have
            $\acopt(\cA,q)\geq \acopt(\cA',q)$.  Assume towards a contradiction that $\cA'$ does not ac-represent $\cA$. Thus, there is $q\in \pis(V)$ such that
            $\acopt(\cA,q)>\acopt(\cA',q)$, and hence there is $(p,w)\in \cA\setminus \cA'$ such that $p\sqcup q=\{V\}$, $\acy(p,q)$ holds and $w>\acopt(\cA',q)$. Let
            $i:=|V|-\block{p}$. Hence, $p\in \cA_i$ and there exists $p'\in\cB_i(p)$  such that $M[p',q]=1$ that is $p'\sqcup q=\{V\}$.  
            Let $(p^\star,w^\star)\in \cA_i'$ such that $p^\star\in\cB_i(p)$, $p^\star\sqcup q=\{V\}$ and $w^\star$ is maximum.
            
            Since $(p^\star,w^\star)\in \cA_i'$, we have $|V|-\block{p^\star}=|V|-\block{p}=i$. 
            We can conclude that $\acy(p^\star,q)$ holds because $\acy(p,q)$ holds.
            Indeed, by definition, $\acy(p,q)$ holds if and only if $|V|+\block{p\sqcup q} -( \block{p} +\block{q})=0$.
            Since $p\sqcup q=\{V\}$, we deduce that $\acy(p,q)$ holds if and only if $i=|V|+1 -\block{q}$.
            Because $p^\star\sqcup q=\{V\}$ and $|V|-\block{p^\star}=i$, we can conclude that $\acy(p^\star,q)$ holds. 
            
            Hence, we have $\acopt(\cA',q)\geq w^\star$.
            Since $w>\acopt(\cA',q)$, it must hold that $w>w^\star$. 
            But, $(\cB_i\setminus \{p^\star\})\cup \{p\}$ is also a basis of $C_\cA^i$ since the set of independent row sets of a matrix forms a matroid. Since $w>w^\star$, the weight of $(\cB_i\setminus \{p^\star\})\cup \{p\}$ is strictly greater than the weight of $\cB_i$, yielding a contradiction.
\end{proof}

\section{Feedback Vertex Set}\label{sec:fvs}

We will use the weighted partitions defined in the previous section to represent the partial solutions. At each step of our algorithm, we will ensure that the stored
weighted partitions correspond to acyclic partial solutions. 
However, the framework in the previous section deals only with connected acyclic solutions. 
So, instead of computing a maximum induced forest (the complementary of a minimum feedback vertex set), we will compute a maximum induced tree.
As in \cite{BodlaenderCKN15}, we introduce a hypothetical new vertex, denoted by $v_0$, that is universal and we compute a pair $(F,E_0)$ so that $F$ is a maximum induced forest of $G$, $E_0$ is a subset of edges incident to $v_0$, and $(V(F)\cup \{v_0\}, E(F)\cup E_0)$ is a tree.  For a weight function $\wc$ on the vertices of $G$ and a subset $S\subseteq V(G)$, we denote by $\wc(S):=\sum_{v\in S}{\wc(v)}$. In order now to reduce the sizes of the dynamic programming
tables, we will express the steps of the algorithm in terms of the operators on weighted partitions defined in the previous section.

Let us explain the idea of the algorithm before defining the dynamic programming tables and the steps. Let $H$ be a $k$-labeled graph. We are interested in storing \emph{ac-representative sets} of all
induced forests of $H$ that may produce a solution. If $F$ is an induced forest of $H$, we would like to store the partition $p$ corresponding to the quotient set of the transitive closure of the
relation $\sim$ on $V(F)$ where $x\sim y$ if $x$ and $y$ have the same label or are in the same connected component. If $J\subseteq [k]$ is such that $\bigcup_{x\in V(F)}lab_H(x)=J$, then this is equivalent to
storing the partition $p$ of $J$ where $i$ and $j$ are in the same block if there are, respectively, an $i$-vertex $x$ and a $j$-vertex $y$ in the same connected component of $F$. 

Now, if $H$ is used in a $k$-expression of a $k$-labeled graph $G$, then in the clique-width operations defining $G$ we may add edges between the $i$-vertices and the $j$-vertices of $H$, for some $i,j\in J$. 
Now, this has no effect if there are exactly one $i$-vertex and one $j$-vertex at distance one in $H[F]$, otherwise cycles may be created, \eg, whenever an $i$-vertex and a $j$-vertex are non-adjacent and belong to the same connected component, or the number of $i$-vertices and $j$-vertices are both at least $2$. 
Nevertheless, we are not able to handle all these cases with the operators on weighted partitions. To resolve the situation where an $i$-vertex and a $j$-vertex are already adjacent, we consider \emph{irredundant $k$-expressions}, \ie, whenever an operation $add_{i,j}$ is used there are no edges
between $i$-vertices and $j$-vertices.  For the other cases, we index the dynamic programming tables with functions $s:[k]\to \{\gamma_0,\gamma_1,\gamma_2,\gamma_{-2}\}$ that tells, for each $i \in [k]$, if
the label class $lab_H^{-1}(i)$ does not intersect $F$ ($s(i)=\gamma_0$), or if it does, in one vertex ($s(i)=\gamma_1$), or in at least two ($s(i)\in \{\gamma_2,\gamma_{-2}\}$)
vertices. We have two possible values for label classes intersecting $V(F)$ in at least two vertices because whenever two $i$-vertices belong to the same connected component of $F$, $F$ does not
produce a valid solution once an operation $add_{i,\ell}$ is applied to $H$ with $s(\ell)\neq \gamma_0$. So, if a label class $lab_H^{-1}(i)$ intersects $V(F)$ in at least two vertices, since we do not know whether a clique-width
operation $add_{i,\ell}$ with $s(\ell)\neq\gamma_0$ will be applied to $H$, we guess it in the function $s$, and whenever $s(i)$ equals $\gamma_{-2}$, we throw $p$ when we encounter an $add_{i,\ell}$ operation (with $s(\ell)\neq \gamma_0$), and if
$s(i)=\gamma_2$, we force $F$ to not having two $i$-vertices in the same connected component. 

Nonetheless, even though we are able to detect the partitions corresponding to induced subgraphs with cycles, taking $p$ as the transitive closure of the relation $\sim$
on $[k]$ describe above may detect false cycles. Indeed, let $x_i,x_j$ and $x_\ell,x'_\ell$ be, respectively, an $i$-vertex, a $j$-vertex and two $\ell$-vertices, such
that $x_i$ and $x_\ell$ belong to the same connected component in $F$, and similarly, $x_j$ and $x_\ell'$ in another connected component of $F$. Now, if we apply an
operation $add_{i,j}$ on $H$, we may detect a cycle with $p$ (through the $\acjoin$ operation), which may not exist when for instance there are only one $i$ and one
$j$-vertex in $F$, both in different connected components.  We resolve this case with the functions $s$ indexing the dynamic programming tables by forcing each label $i$ in $s^{-1}(\gamma_2)$ to wait for exactly one clique-width operation $add_{i,t}$ for some $t\in [k]$. We, therefore, translate all the acyclicity tests to the
$\acjoin$ operation. Indeed, the case explained will be no longer a false cycle as $x_i$ and $x_j$ will be adjacent (with the $add_{i,j}$ operation), and we know that
$x_\ell$ and $x_\ell'$ will be connected to some other vertex in $F$, and since $x_i$ is connected to $x_\ell$ and $x_j$ to $x_\ell'$, we have a cycle.  The following
notion of \emph{certificate graph} formalizes this requirement.


\begin{definition}[Certificate graph of a solution]\label{defn:certif} Let $G$ be a $k$-labeled graph, $F$ an induced forest of $G$, $s:[k]\to \{\gamma_0,\gamma_1,\gamma_2,\gamma_{-2}\}$, and $E_0$
  a subset of edges incident to $v_0$. Let $\beta$ be a bijection from $s^{-1}(\gamma_2)$ to a set $V_s^+$ disjoint from $V(G)\cup \{v_0\}$. The \emph{certificate graph of
    $(F,E_0)$ with respect to $s$}, denoted by $\CG(F,E_0,s)$, is the graph $(V(F)\cup V_s^+\cup \{v_0\}, E(F)\cup E_0\cup E_s^+)$ with
\[   E_s^+:= \bigcup_{i \in s^{-1}(\gamma_2)} \{\{v,\beta(i)\} \mid v\in (V(F)\cap lab_G^{-1}(i))\}. \]
\end{definition}

In a certificate graph of $(F,E_0)$, the vertices in $V_s^+$ represent the expected future neighbors of all the vertices in $V(F)\cap lab_G^{-1}(s^{-1}(\gamma_2))$.
For convenience, whenever we refer to a vertex $v_i^+$ of $V_s^+$, we mean the vertex of $V_s^+$ adjacent to the $i$-vertices in $\CG(F,E_0,s)$. See Figure \ref{fig:Fplus} for an example
of a certificate graph. 

\begin{figure}[h]
\centering
\includegraphics[width=0.7\linewidth]{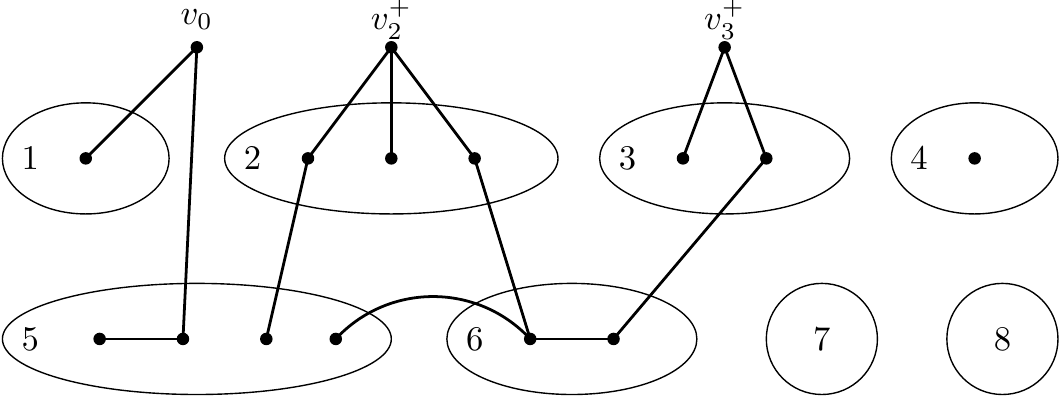}
\caption{Example of a certificate graph; here $p=\{ \{v_0,1\}, \{2,3\}, \{4\} \}$, $s^{-1}(\gamma_0)=\{7,8\}$, $s^{-1}(\gamma_1)=\{1,4\}$, $s^{-1}(\gamma_2)=\{2,3\}$ and
  $s^{-1}(\gamma_{-2})=\{5,6\}$. The set $V_s^+$ is $\{v_2^+,v_3^+\}$ with $v_2^+$ mapped to $2$ and $v_3^+$ mapped to $3$.}
\label{fig:Fplus}
\end{figure}

We are now ready to define the sets of weighted partitions which representatives we manipulate in our dynamic programming tables. 
\begin{definition}[{Weighted partitions in $\cA_G[s]$}]\label{defn:tabfvs} Let $G$ be a $k$-labeled graph and let $s:[k]\to \{\gamma_0,\gamma_1,\gamma_2,\gamma_{-2}\}$ be a total function. The entries of $\cA_G[s]$ are all weighted partitions
  $(p,w)\in \pis(s^{-1}(\{\gamma_1,\gamma_2\})\cup \{v_0\})\times \bN$ such that there exist an induced forest $F$ of $G$ and $E_0\subseteq \{v_0v \mid v\in V(F) \}$ so that $\wc(V(F))=w$, and
\begin{enumerate}
\item The sets $s^{-1}(\gamma_0) = \{i\in [k] \mid |V(F)\cap lab_G^{-1}(i)|=0\}$ and $s^{-1}(\gamma_1) = \{i\in [k] \mid |V(F)\cap lab_G^{-1}(i)|=1\}$.
\item  The certificate graph $\CG(F,E_0,s)$ is a forest.
 
\item Each connected component $C$ of $\CG(F,E_0,s)$ has at least one vertex in $lab_G^{-1}(s^{-1}(\{\gamma_1,\gamma_2\}))\cup \{v_0\}$.

\item The partition $p$ equals $(s^{-1}(\{\gamma_1,\gamma_2\})\cup \{v_0\})/\sim$ where $i\sim j$ if and only if a vertex in $lab_G^{-1}(i)\cap V(F)$ is connected, in $\CG(F,E_0,s)$,
  to a vertex in $lab_G^{-1}(j)\cap V(F)$; we consider $lab_G^{-1}(v_0)=\{v_0\}$.
\end{enumerate}
\end{definition}

\medskip

Conditions (1) and (3) are as explained above, and automatically imply that, for each $i$ in $s^{-1}(\{\gamma_2,\gamma_{-2}\})$, we have $|V(F)\cap lab_G^{-1}(i)|\geq 2$.
Conditions (2) and (4) guarantee that $(V(F)\cup \{v_0\},E(F)\cup E_0)$ can be extended into a tree, if any. They also guarantee that cycles detected through the
$\acjoin$ operation correspond to cycles, and each cycle can be detected with it.

\medskip
In the sequel, we call any triplet $(F,E_0,(p,\wc(F)))$ a \emph{candidate solution} in $\cA_G[s]$ if Condition (4) is satisfied, and if in addition Conditions (1)-(3) are satisfied, we call it a \emph{solution} in $\cA_G[s]$.

\medskip

Given a $k$-labeled graph $G$, the size of a maximum induced tree of $G$ corresponds to the maximum, over all $s:[k]\to \{\gamma_0,\gamma_1,\gamma_2,\gamma_{-2}\}$ with $s^{-1}(\gamma_{2})=\emptyset$, of $\max\{w \mid (\{ s^{-1}(\gamma_{1})\cup\{v_0\} \},w)\in \cA_G[s]\}$. Indeed, by definition, if $(\{ s^{-1}(\gamma_{1})\cup\{v_0\} \},w)$ belongs to $\cA_{G}[s]$, then there exist an induced forest $F$ of $G$ with $\wc(F)=w$ and a set $E_0$ of edges incident to $v_0$ such that $(V(F)\cup \{v_0\},E(F)\cup E_0)$ is a tree.
This follows from the fact that if $s^{-1}(\gamma_{2})=\emptyset$, then we have $\CG(F,E_0,s)=(V(F)\cup \{v_0\}, E(F)\cup E_0)$.

\medskip 

Our algorithm will store, for each $k$-labeled graph $G$ and each function $s:[k]\to \{\gamma_0,\gamma_1,\gamma_2,\gamma_{-2}\}$, an ac-representative set $tab_G[s]$ of
$\cA_G[s]$. We are now ready to give the different steps of the algorithm, depending on the clique-width operations.

\medskip
\paragraph{\bf Computing $tab_G$ for $G=\mathbf{1}(x)$} For $s:\{1\}\to \{\gamma_0,\gamma_1,\gamma_2,\gamma_{-2}\}$, let
\begin{align*}
  tab_G[s] &:=\begin{cases} \{ ( \{\{1,v_0\}\}, \wc(x)), ( \{\{1\},\{v_0\}\}, \wc(x)) \} & \textrm{if $s(1)=\gamma_1$},\\
  \{( \{\{v_0\}\}, 0)\} & \textrm{if $s(1)=\gamma_0$},\\
  \emptyset & \textrm{if $s(1)\in \{\gamma_2,\gamma_{-2}\}$}.
\end{cases}
\end{align*}

Since $|V(G)|=1$, there is no solution intersecting the label class $lab_G^{-1}(1)$ on at least two vertices, and so the set of weighted partitions satisfying Definition
\ref{defn:tabfvs} equals the empty set for $s(1)\in \{\gamma_2,\gamma_{-2}\}$. If $s(1)=\gamma_1$, there are two possibilities, depending on whether $E_0=\emptyset$ or
$E_0=\{xv_0\}$. We can thus conclude that $tab_G[s]=\cA_G[s]$ is correctly computed.

\medskip

\paragraph{\bf Computing $tab_G$ for $G=add_{i,j}(H)$} We can suppose that $H$ is $k$-labeled. Let $s:[k] \to \{\gamma_0,\gamma_1,\gamma_2,\gamma_{-2}\}$.

\begin{enumerate}[(a)]
	\item If $s(i)=\gamma_0$ or $s(j)=\gamma_0$, then let $tab_G[s]:=tab_H[s]$.  
	We just copy all the solutions not intersecting $lab_H^{-1}(i)$ or $lab_H^{-1}(j)$. 
	In the following cases, we assume that $s(i)\ne\gamma_0$ and $s(j)\ne\gamma_0$.
	In this case, we do not need to use the operators $\acreduce$ and $\rmc$ since we update $tab_G[s]$ with one table from $tab_H$.
	
	\item If $s(i)=\gamma_{2}$ or $s(j)=\gamma_{2}$, then we let $tab_G[s]=\emptyset$. In this case $\cA_G[s]=\emptyset$. Indeed, for every set $X\subseteq V(G)$ respecting Condition (1), the graph $\CG(X,E_0,s)$ contains a cycle, for all subsets $E_0\subseteq \{xv_0\mid x\in X\}$. For example, if $s(i)=\gamma_{2}$ and $s(j)=\gamma_{1}$, then the two $i$-vertices in $X$ are adjacent in $\CG(X,\emptyset,s)$ to $v_i^+$ and to the $j$-vertex in $X$, thus $\CG(X,\emptyset,s)$ contains a cycle of length 4.
	
	\item If $s(i)=s(j)=\gamma_{-2}$, then we let $tab_G[s]=\emptyset$. Similarly to Case (b), we have $\cA_G[s]=\emptyset$ because every vertex set with two $i$-vertices and two $j$-vertices induce a cycle of length 4 in $G$. 
	
	\item Otherwise, we let $tab_G[s]:=\rmc(\cA))$ with 
	\[ \cA:=  \proj(s^{-1}(\gamma_{-2})\cap\{i,j\},\acjoin(tab_H[s_H],\{ ( \{\{i,j\}\},0)\}) \] 
	where $s_H(\ell):=s(\ell)$, for $\ell \in [k]\setminus
	\{i,j\}$, and
	\begin{align*}
	(s_H(i),s_H(j)) & := \begin{cases} (\gamma_1,\gamma_1) & \textrm{if $s(i)=s(j)=\gamma_1$},\\
	 (\gamma_2,\gamma_1) & \textrm{if $(s(i),s(j)) = (\gamma_{-2},\gamma_1)$},\\
	(\gamma_1,\gamma_2)  & \textrm{if        $(s(i),s(j))=(\gamma_1,\gamma_{-2})$}.  
	\end{cases}
	\end{align*}
	Observe that this case corresponds to $s(i),s(j)\in \{\gamma_{1}, \gamma_{-2}\}$ with $s(i)=\gamma_{1}$ or $s(j)=\gamma_{1}$.
	Intuitively, we consider the weighted partitions $(p,w)\in tab_H[s_H]$ such that $i$ and $j$ belong to different blocks of $p$, we merge the blocks containing $i$ and $j$, remove the elements in $s^{-1}(\gamma_{-2})\cap\{i,j\}$ from the resulting block, and add the resulting weighted partition to $\cA$. 
	Notice that it is not necessarily to call the operator $\acreduce$ in this case since we update $tab_G[s]$ with only one table from $tab_H$. 
\end{enumerate}

\begin{lemma}\label{lem:add} Let $G=add_{i,j}(H)$ be a $k$-labeled graph. For each function $s:[k]\to \{\gamma_0,\gamma_1,\gamma_{2},\gamma_{-2}\}$, $tab_G[s]$ ac-represents
	$\cA_G[s]$ assuming that $tab_H[s']$ ac-represents $\cA_H[s']$ for all $s':[k]\to \{\gamma_0,\gamma_1,\gamma_{2},\gamma_{-2}\}$.
\end{lemma}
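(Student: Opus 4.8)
The plan is to reduce the lemma, in each of the four cases defining $tab_G[s]$, to two ingredients. First, the algorithm computes $tab_G[s]$ by applying to $tab_H[\cdot]$ a composition of the operators $\rmc$, $\proj$ and $\acjoin(\cdot,\cB)$ with $\cB:=\{(\{\{i,j\}\},0)\}$ a fixed set; by Lemma \ref{lem:par-rep} these all preserve ac-representation, and $\rmc$ moreover preserves $\acopt$. Second, I would establish a purely combinatorial identity expressing $\cA_G[s]$ through the same operators applied to the \emph{full} sets $\cA_H[\cdot]$. Since ac-representation is transitive (indeed an equivalence relation on $\acopt$-values), combining these two ingredients with the hypothesis that $tab_H[\cdot]$ ac-represents $\cA_H[\cdot]$ yields that $tab_G[s]$ ac-represents $\cA_G[s]$.

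Cases (a)--(c) I would settle by direct structural arguments on the certificate graph. In case (a) I claim $\cA_G[s]=\cA_H[s]$: every edge created by $add_{i,j}$ joins an $i$-vertex to a $j$-vertex, and $s(i)=\gamma_0$ or $s(j)=\gamma_0$ forces $V(F)$ to avoid one of the two classes, so $G[V(F)]=H[V(F)]$ and the certificate graphs coincide; hence the conditions of Definition \ref{defn:tabfvs} are literally identical and the conclusion follows from the hypothesis on $tab_H[s]$. In cases (b) and (c) I show $\cA_G[s]=\emptyset=tab_G[s]$ by exhibiting a cycle in every candidate certificate graph: if $s(i)=\gamma_2$ (case (b)), two $i$-vertices share the dummy neighbor $v_i^+$ and, after $add_{i,j}$, a common $j$-vertex (which exists since $s(j)\neq\gamma_0$), giving a $4$-cycle; if $s(i)=s(j)=\gamma_{-2}$ (case (c)), two $i$-vertices and two $j$-vertices already induce a $4$-cycle in $G[V(F)]$. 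Ac-representation is then immediate.

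The heart is case (d), where $s(i),s(j)\in\{\gamma_1,\gamma_{-2}\}$ with at least one equal to $\gamma_1$ and $tab_G[s]=\rmc(\proj(X,\acjoin(tab_H[s_H],\cB)))$ with $X:=s^{-1}(\gamma_{-2})\cap\{i,j\}$. I would prove the identity $\cA_G[s]=\proj(X,\acjoin(\cA_H[s_H],\cB))$ via a weight-preserving correspondence that keeps the vertex set $W:=V(F)$ fixed and relates $H[W]$ to $G[W]$. Using irredundancy, $G[W]$ is obtained from $H[W]$ by adding exactly the edges between $i$-vertices and $j$-vertices of $W$; the $\acjoin$ with $\cB$ records this by merging the blocks of $i$ and $j$, while $\proj$ deletes from the ground set those of $i,j$ whose type in $G$ is $\gamma_{-2}$. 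I would then verify that the conditions of Definition \ref{defn:tabfvs} match: condition (1) transfers verbatim, since $s_H$ and $s$ have the same $\gamma_0$- and $\gamma_1$-preimages and $add_{i,j}$ does not change labels; condition (4) is precisely the merge-and-project of partitions; and condition (3) transfers because the single $\gamma_1$-vertex among $\{i,j\}$ becomes in $G$ a common anchor of every former $i$-only (resp. $j$-only) component, which is exactly what the $\proj$ condition ``no block contained in $X$'' encodes.

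The main obstacle is condition (2), the acyclicity of $G[W]$, and showing it is detected \emph{exactly} by the combination of the $\acy$ test inside $\acjoin$ and the $\gamma_2$-markers carried by $s_H$. I would classify the cycles that $add_{i,j}$ could create. A cycle using two distinct added edges must traverse two $i$-vertices (or two $j$-vertices) lying in the same component of $\CG(H[W],E_0,s_H)$; such configurations are already forbidden in $\cA_H[s_H]$, because $s_H(i)=\gamma_2$ attaches all $i$-vertices to the single dummy $v_i^+$, so two of them in one component would violate condition (2) for $H$. A cycle using a single added edge $xy$ consists of $xy$ together with a path already joining $x$ and $y$, i.e. it occurs exactly when $i$ and $j$ lie in the same block of $p$, which is the case where $\acy(p_{\uparrow\{i,j\}},\{\{i,j\}\}_{\uparrow V_H})$ fails and is therefore discarded by $\acjoin$. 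Conversely, when neither obstruction occurs, the added edges turn the dummy $v_i^+$ (resp. $v_j^+$) into the real $j$-vertex (resp. $i$-vertex) without closing any cycle, so $G[W]$ is a forest; this is exactly the point at which the distinction between $\gamma_2$ and $\gamma_{-2}$ and the translation of all acyclicity tests into a single $\acjoin$ pay off. Once this identity is in place, the reduction of the first paragraph closes the argument.
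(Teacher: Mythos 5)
Your proposal is correct and follows essentially the same route as the paper: reduce to the identity $\cA_G[s]=\cA$ (resp.\ $\cA_H[s]$, $\emptyset$) via the fact that $\rmc$, $\proj$ and $\acjoin$ preserve ac-representation, dispose of cases (a)--(c) by direct certificate-graph arguments, and in case (d) establish the correspondence on a fixed vertex set using irredundancy, with the two cycle obstructions handled exactly as in the paper (two $i$-vertices in one component ruled out by the $\gamma_2$ marker and Condition (2) for $H$; a single-edge cycle ruled out by the $\acy$ test in $\acjoin$, i.e.\ $i$ and $j$ in distinct blocks of $p'$). No gaps.
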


\begin{proof} We first recall that $lab_G(x)=lab_H(x)$ for all $x\in V(G)=V(H)$. 
	If we are in Cases (b)-(c), then we are done since we clearly have $\cA_G[s]=\emptyset$.
	Since the used operators preserve ac-representation, it is enough to prove that in Case (a), we have $\cA_G[s]=\cA_H[s]$, and in Case (d), we have $\cA_G[s]=\cA$ if we let $tab_H[s_H]= \cA_H[s_H]$.
	If we are in Case (a), then  we are done because one easily checks that $(F,E_0,(p,w))$ is a solution in $\cA_H[s]$ if and only if  $(F,E_0,(p,w))$ is a solution in $\cA_G[s]$, \ie, we have $\cA_H[s]=\cA_G[s]$.
	
	Now, we assume that we are in Case (d), that is $s(i),s(j)\in \{\gamma_{1}, \gamma_{-2}\}$ with $s(i)=\gamma_{1}$ or $s(j)=\gamma_{1}$. 
	We can assume w.l.o.g. that $s(j)=\gamma_1$.
	Let $(F,E_0,(p,w))$ be a solution in $\cA_G[s]$.
	We prove that $(p,w)\in \cA$.
	Let $\{x_j\}:=V(F)\cap lab_G^{-1}(j)$ (by assumption $s(j)=\gamma_{1}$), $X_i:=V(F)\cap lab_G^{-1}(i)$, $E_{i,j}:=\{ v x_j\mid v\in X_i\}$, and $F_H:=(V(F), E(F)\setminus E_{i,j})$. 
	Because we consider only irredundant $k$-expressions, we know that $E_{i,j}\cap E(H)=\emptyset$, \ie, $F_H$ is
	an induced forest of $H$. 
	Let $p'$ be the partition on $s_H^{-1}(\{\gamma_1,\gamma_2\})\cup\{v_0\}$ such that $(F_H,E_0,(p',w))$ is a candidate solution in
	$\cA_H[s_H]$. 
	We claim that $(F_H,E_0,s_H)$ is a solution in $\cA_H[s_H]$. 
	By the definition of $s_H$, Condition (1) is trivially satisfied. 
	If $|X_i|=1$, then 	$\CG(F_H,E_0,s_H)$ is a subgraph of $\CG(F,E_0,s)$ and so Condition (2) is satisfied. 
	And if $|X_i|\geq 2$ and $\CG(F_H,E_0,s_H)$ contains a cycle, then the cycle should contain the vertex $v_i^+\in V_{s_H}^+$, but this vertex may be replaced by $x_j$ in $\CG(F,E_0,s)$, contradicting the fact this latter is acyclic.
	Condition (3) is also satisfied because $s_H(i),s_H(j)\in \{\gamma_1,\gamma_2\}$ and for every connected component $C$ of $\CG(F_H,E_0,s_H)$, either $C$ is a connected component of $\CG(F,E_0,s)$ or $C$ contains an $\ell$-vertex with $\ell\in\{i,j\}$.
	Therefore,	$(F_H,E_0,(p',w))$ is a solution in $\cA_H[s_H]$. 
	
	It remains to prove that $(p,w)$ is added in $\cA$. 
	We claim that $\acy(p', \{\{i,j\}\}_{\uparrow V })$ holds with $V=s^{-1}(\gamma_1,\gamma_{2})\cup \{v_0\}$. 
	By definition of $\acy$ it is equivalent to prove that $i$ and $j$ cannot belong to a same block of $p'$. Assume towards a contradiction that $i$ and $j$ belong to a same block of $p'$.
	Then, there is a path, in $\CG(F_H,E_0,v_0)$, between an $i$-vertex $x_i$ and the $j$-vertex $x_j$ of $F$. 
	Let us choose this path $P$ to be the smallest one. 
	One	first notices that $P$ cannot contain the vertex $v_i^+$ of $V_{s_H}^+$, if any, because $v_i^+$ is only adjacent to $i$-vertices. 
	Because	$V(\CG(F_H,E_0,s_H))\setminus \{v_i^+\}=V(\CG(F,E_0,s))$, we would conclude that $\CG(F,E_0,s)$ contains a cycle as $x_ix_j\in E(F)\setminus E(F_H)$, contradicting that	$(F,E_0,(p,w))$ is a solution in $\cA_G[s]$. 
	Therefore, $\acy(p', \{\{i,j\}\}_{\uparrow V})$ holds. 
	By assumption $s(j)=\gamma_{1}$, thus $s^{-1}(\gamma_{-2})\cap\{i,j\} \subseteq \{i\}$.
	Since $i$ and $j$ are in the same block of the partition $p\sqcup \{ \{ i,j\}\}_{\uparrow V}$, we conclude that $(p,w)\in \proj(s^{-1}(\gamma_{-2})\cap\{i,j\},\acjoin(\{(p',w)\}, \{(\{\{i,j\}\},0)\}))$.
	\medskip
	
	It remains to prove that each  weighted partition  $(p,w)\in\cA$ belongs to $\cA_G[s]$.  Let $(F_H,E^{H}_0,(p',w))$ be a solution in $\cA_H[s_H]$ so that \[ (p,w)\in \proj(s^{-1}(\gamma_{-2})\cap\{i,j\},\acjoin(\{(p',w)\}, \{ ( \{\{i,j\}\},0)\})). \]  
	Let  $F=G[V(F_H)]$. By assumption $s(j)=\gamma_{1}$, and thus $s_H(j)=\gamma_{1}$. Let $\{x_j\}:=V(F_H)\cap lab_H^{-1}(j)$, $X_i:=V(F_H)\cap lab_H^{-1}(i)$, and $E_{i,j}:=\{ x_j v \mid v\in X_i\}$. Notice that $E(F)\setminus E(F_H)= E_{i,j}$.
	We claim that $(F,E^{H}_0, (p,w))$ is a solution in  $\cA_G[s]$.
	\begin{itemize}
		\item  First, Condition (1) is trivially satisfied by the definition of $s_H$.
		\item Secondly, $\CG(F,E^{H}_0,s)$ is a forest. Indeed, $\{i,j\}$ cannot be a block of $p'$, otherwise the $\acjoin$ operator would discard $p'$. If $|X_i|=1$,
		then $\CG(F,E^{H}_0,s)=(V(\CG(F_H,E^{H}_0,s_H)),E(\CG(F_H,E^{H}_0,s_H))\cup E_{i,j})$ and it is clearly a forest. 
		Otherwise, if $|X_i|\geq 2$, then $\CG(F,E^{H}_0,s)$ can be obtained from $\CG(F_H,E^{H}_0,s_H)$ by
		fusing the vertex $v_i^+$ and the vertex $x_j$. Clearly, this operation keeps the graph acyclic since $x_j$ and $v_i^+$ are not connected in
		$\CG(F_H,E^{H}_0,s_H)$. Thus $(F,E^{H}_0,(p,w))$ satisfies Condition (2).
		\item Each connected component of $\CG(F_H,E^{H}_0,s_H)$ is contained in a connected component of $\CG(F,E^{H}_0,s)$, and the
		$i$-vertices are in the same connected component, in $\CG(F,E^{H}_0,s)$, as $x_j$. Therefore, Condition (3) is satisfied by $(F,E^{H}_0,(p,w))$
		as $s(j)=\gamma_1$ and $s(\ell)=s_H(\ell)$ for all $\ell \in [k]\setminus \{i,j\}$.
		
		\item Also, Condition (4) is satisfied as $p$ is then obtained from $p'$ by merging the blocks of $p'$ which contains $i$ and $j$, and by removing $i$ if
		$s(i)=\gamma_{-2}$.
	\end{itemize}
	We can therefore conclude that $(F,E^{H}_0,(p,w))$ is a solution in $\cA_G[s]$.
\end{proof}

\medskip

\paragraph{\bf Computing $tab_G$ for $G=ren_{i\to j}(H)$} We can suppose that $H$ is $k$-labeled. Let $s:[k]\setminus \{i\} \to
\{\gamma_0,\gamma_1,\gamma_2,\gamma_{-2}\}$.
\begin{enumerate}[(a)]
\item Let $\cA_1:=tab_H[s_1]$ where $s_1(i)=\gamma_0$ and $s_1(\ell)=s(\ell)$ for all $\ell\in [k]\setminus \{i\}$. This set contains all weighted partitions corresponding to solutions not intersecting $lab_H^{-1}(i)$. They are trivially solutions in $\cA_G[s]$. 
\item If $s(j)= \gamma_{0}$, then let $\cA_2=\emptyset$, otherwise let $s_2:[k]\to \{\gamma_0,\gamma_1,\gamma_2,\gamma_{-2}\}$ such that $s_2(j)=\gamma_0$, $s_2(i)=s(j)$ and $s_2(\ell)=s(\ell)$ for all $\ell\in [k]\setminus \{i,j\}$ and let
  \begin{align*}
    \cA_2&:=\begin{cases} tab_H[s_2] & \textrm{if $s(j)=\gamma_{-2}$},\\  \proj(\{i\},\acjoin(tab_H[s_2], \{ ( \{\{i,j\}\},0)\})) & \textrm{otherwise.}
    \end{cases}
  \end{align*}
  This set contains all weighted partitions corresponding to solutions not intersecting $lab_H^{-1}(j)$. 
  They are solutions in $\cA_G[s]$ by replacing $i$ by $j$ with the $\acjoin$ operator, if necessary, in the corresponding weighted partitions.
  Notice that if $s(j)=\gamma_0$, then $s_1=s_2$, this is why we let $\cA_2=\emptyset$ in this case.
 
 \item If $s(j)\neq \gamma_{-2}$, then let $\cA_3:= \emptyset$, otherwise let 
 \[ \cA_3:=\bigcup_{s_3\in \cS_3}\proj(\{i,j\},tab_H[s_3]), \] 
 where $\cS_3$ is the set of functions $s_3$ with $s_3(i),s_3(j)\in \{\gamma_1, \gamma_{-2}\}$, and $s_3(\ell)=s(\ell)$ for all $\ell\in [k]\setminus \{i,j\}$. Intuitively, $\cS_3$ is the set of functions coherent with $s$ if $s(j)=\gamma_{-2}$.
 The set $\cA_3$ corresponds to partial solutions intersecting $lab_H^{-1}(i)$ and $lab_H^{-1}(j)$ when $s(j)=\gamma_{-2}$. 
 In this case, we have to ensure that the partial solutions in $\cA_3$ respect Condition (3) of Definition \ref{defn:tabfvs}. We do that by removing all the partitions with a block included in $\{i,j\}$.
 
\item We now define the last set considering the other cases. If $s(j)\neq \gamma_{2}$, then let $\cA_4=\emptyset$, otherwise let
  \begin{align*}
    \cA_4&:=\bigcup_{s_4\in \cS_4}\proj(\{i\},\acjoin(tab_H[s_4], \{(\{\{i,j\}\},0)\})),
  \end{align*}
  where $\cS_4$ is the set of functions $s_4$ with $s_4(i),s_4(j)\in \{\gamma_1, \gamma_{2}\}$ and  $s_4(\ell)=s(\ell)$ for all $\ell\in [k]\setminus \{i,j\}$. Informally, $\cS_4$ is the set of functions compatible with $s$ if $s(j)=\gamma_{2}$. 
  The set $\cA_4$ corresponds to partial solutions intersecting $lab_H^{-1}(i)$ and $lab_H^{-1}(j)$ when $s(j)=\gamma_{2}$. 
  We have to force that $i$-vertices and $j$-vertices belong to different connected components. 
  We check this with the function $\acy$ in the operator $\acjoin$.
\end{enumerate}
\medskip

We let $tab_G[s]:= \acreduce(\rmc( \cA_1 \cup \cA_2 \cup \cA_3\cup \cA_4))$.

\begin{lemma}\label{lem:ren} Let $G=ren_{i\to j}(H)$ with $H$ a $k$-labeled graph. For each $s:[k]\setminus\{i\}\to \{\gamma_0,\gamma_1,\gamma_{2},\gamma_{-2}\}$, the table $tab_G[s]$ ac-represents
  $\cA_G[s]$ assuming that $tab_H[s']$ ac-represents $\cA_H[s']$ for all $s':[k]\to \{\gamma_0,\gamma_1,\gamma_{2},\gamma_{-2}\}$. 
\end{lemma}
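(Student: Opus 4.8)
The plan is to follow the blueprint of Lemma~\ref{lem:add}: first reduce the statement to a purely combinatorial set-equality by invoking the preservation results, and then establish that equality by a case analysis on $s(j)$. For the reduction, write $\cA_m^\star$ for the set obtained from $\cA_m$ ($1\le m\le 4$) by replacing every occurrence of $tab_H[\cdot]$ with the corresponding $\cA_H[\cdot]$. By Lemma~\ref{lem:par-rep}, the operators $\proj$ and $\acjoin$ and the union preserve ac-representation, so, since $tab_H[s']$ ac-represents $\cA_H[s']$ for every $s'$, each $\cA_m$ ac-represents $\cA_m^\star$ and hence $\cA_1\cup\cA_2\cup\cA_3\cup\cA_4$ ac-represents $\cA_1^\star\cup\cA_2^\star\cup\cA_3^\star\cup\cA_4^\star$. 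The operator $\rmc$ outputs a set with the same $\acopt$ values (as shown in the proof of Lemma~\ref{lem:par-rep}), and by Theorem~\ref{thm:reduce1} so does $\acreduce$; since ac-representation is transitive (it is the equality of the function $q\mapsto\acopt(\cdot,q)$), the set $tab_G[s]=\acreduce(\rmc(\cA_1\cup\cA_2\cup\cA_3\cup\cA_4))$ ac-represents $\cA_1^\star\cup\cA_2^\star\cup\cA_3^\star\cup\cA_4^\star$. It therefore suffices to prove the set equality
\[ \cA_G[s]=\cA_1^\star\cup\cA_2^\star\cup\cA_3^\star\cup\cA_4^\star. \]

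To prove this equality I would use that renaming does not alter the underlying graph, so $G$ and $H$ have the same induced forests and $lab_G^{-1}(j)=lab_H^{-1}(i)\uplus lab_H^{-1}(j)$, while $lab_G^{-1}(\ell)=lab_H^{-1}(\ell)$ for $\ell\notin\{i,j\}$ and $lab_G^{-1}(i)=\emptyset$. Given a solution $(F,E_0,(p,w))$ in $\cA_G[s]$, set $X_i:=V(F)\cap lab_H^{-1}(i)$ and $X_j:=V(F)\cap lab_H^{-1}(j)$; these are the two parts into which the $j$-vertices of $G$ split inside $H$. I would distribute the solutions among the four sets according to $(X_i,X_j)$ and the value of $s(j)$: if $X_i=\emptyset$ the solution is untouched and lies in $\cA_1^\star=\cA_H[s_1]$ with $s_1(i)=\gamma_0$; if $X_i\ne\emptyset=X_j$ the new label $j$ of $G$ comes only from former $i$-vertices, which is captured by $\cA_2^\star$ (a plain copy of $\cA_H[s_2]$ when $s(j)=\gamma_{-2}$, and a relabelling $i\mapsto j$ realised by $\acjoin$ with $\{(\{\{i,j\}\},0)\}$ followed by $\proj(\{i\},\cdot)$ otherwise); and if both $X_i,X_j\ne\emptyset$, then $|V(F)\cap lab_G^{-1}(j)|\ge 2$ forces $s(j)\in\{\gamma_2,\gamma_{-2}\}$, the cases handled by $\cA_4^\star$ and $\cA_3^\star$ respectively. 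For each branch I would exhibit the matching function $s_m$ (respectively the member of $\cS_3$ or $\cS_4$ recording the types of $X_i$ and $X_j$), check that Conditions (1)--(4) of Definition~\ref{defn:tabfvs} transfer between $H$ and $G$, and conversely that every weighted partition produced by the operators arises from a genuine solution in $\cA_G[s]$; Condition~(1) is immediate from the definition of $s_m$, and Conditions~(2) and~(4) follow by comparing the certificate graphs of Definition~\ref{defn:certif} in $G$ and in $H$.

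The main obstacle is the acyclicity bookkeeping hidden in the certificate graphs when a $\gamma_2$-label is involved, i.e. the sets $\cA_3^\star$ and $\cA_4^\star$. For $\cA_4^\star$ (the case $s(j)=\gamma_2$), in $G$ all vertices of $X_i\cup X_j$ must lie in pairwise distinct components of $F$ and share a single future apex $v_j^+$, whereas in $H$ the labels $i$ and $j$ carry their own apexes $v_i^+,v_j^+$; the certificate graph $\CG(F,E_0,s)$ is obtained from $\CG(F,E_0,s_4)$ by identifying $v_i^+$ with $v_j^+$, and this identification preserves acyclicity exactly when $v_i^+$ and $v_j^+$ are not already connected in $\CG(F,E_0,s_4)$, that is, when no $i$-vertex and $j$-vertex of $F$ lie in the same component. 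This last condition is precisely what the $\acy$ predicate inside $\acjoin(tab_H[s_4],\{(\{\{i,j\}\},0)\})$ enforces, so I would argue the equivalence between ``$\CG(F,E_0,s)$ is a forest'' and the success of the $\acjoin$, mirroring the corresponding argument of Lemma~\ref{lem:add}. For $\cA_3^\star$ (the case $s(j)=\gamma_{-2}$, so the merged label is untracked) the delicate point is instead Condition~(3): once the tracking of $i$ and $j$ is dropped, any block of the $H$-partition contained in $\{i,j\}$ would yield a component of $\CG(F,E_0,s)$ with no vertex labelled in $s^{-1}(\{\gamma_1,\gamma_2\})\cup\{v_0\}$, so such partitions must be discarded; this is exactly the effect of $\proj(\{i,j\},\cdot)$, and I would check that the partitions it keeps are in bijection with the solutions of $\cA_G[s]$ having both $X_i,X_j\ne\emptyset$ and $s(j)=\gamma_{-2}$. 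Once these two acyclicity and Condition-(3) verifications are in place, the remaining branches are routine relabellings, and the set equality, hence the lemma, follows.
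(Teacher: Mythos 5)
Your proposal is correct and follows essentially the same route as the paper's proof: reduce to the set equality $\cA_G[s]=\cA_1\cup\cA_2\cup\cA_3\cup\cA_4$ (with $tab_H$ replaced by $\cA_H$) via preservation of ac-representation, then split solutions according to whether $V(F)$ meets $lab_H^{-1}(i)$ and $lab_H^{-1}(j)$ and verify Conditions (1)--(4) by comparing certificate graphs, with the two delicate points being exactly the ones you isolate (the $v_i^+/v_j^+$ fusion and the $\acy$ test for $\cA_4$, and the $\proj(\{i,j\},\cdot)$ enforcement of Condition (3) for $\cA_3$). The only detail to keep in mind when writing it out is that the fusion description of $\CG(F,E_0,s)$ versus $\CG(F,E_0,s_4)$ applies when $s_4(i)=\gamma_2$; when $s_4(i)=\gamma_1$ one instead argues via a subgraph relation, as the paper does.
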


\begin{proof} Since the used operators preserve ac-representation, it is enough to prove that $\cA_G[s]=\cA_1\cup \cA_2\cup \cA_3\cup \cA_4$ if we let $tab_H[s']=\cA_H[s']$ for every $s': [k] \to \{\gamma_0,\gamma_1,\gamma_{2},\gamma_{-2}\}$.

  Let $(F,E_0,(p,w))$ be a solution in $\cA_G[s]$. We want to prove that $(p,w)\in\cA_1\cup \cA_2\cup \cA_3\cup \cA_4$. If $V(F)$ does not intersect $lab_H^{-1}(i)$, then $(F,E_0,(p,w))$ is a solution in $\cA_H[s_1]$. 
  Assume now that  $V(F)$ intersects $lab_H^{-1}(i)$. If $V(F)\cap lab_H^{-1}(j)=\emptyset$ and $s(j)=\gamma_{-2}$, $(F,E_0,(p,w))$ is a solution in $\cA_H[s_2]$. 
  If $V(F)\cap lab_H^{-1}(j)=\emptyset$ and $s(j)\in\{\gamma_1,\gamma_2\}$, then it is easy to check that $(F,E_0, (p',w))$ is a solution in $\cA_H[s_2]$ where $p'$ is obtained from $p$ by replacing $j$ by $i$.

  We may assume now that $V(F)\cap lab_H^{-1}(i)\ne \emptyset$, $V(F)\cap lab_H^{-1}(j)\ne \emptyset$. Then, we have $s(j)\in \{\gamma_{2},\gamma_{-2}\}$ as $|lab_G^{-1}(j)\cap V(F)|\geq 2$.  
  Let $s_\star$ be a function from $[k]$ such that $s_\star(\ell):=s(\ell)$ for all $\ell\in [k]\setminus \{i,j \}$, and for $t\in \{i,j\}$,
  \begin{align*}
  s_\star(t):=\begin{cases}
  \gamma_1 & \text{if }|V(F)\cap lab_H^{-1}(t) | = 1,\\
  s(j) & \text{if }|V(F)\cap lab_H^{-1}(t) | \geq 2.
  \end{cases}
  \end{align*}
  By definition, if $s(j)=\gamma_{-2}$, then $s_\star$ belongs to $\cS_3$ and if $s(j)=\gamma_{2}$, then $s_\star$ belongs to $\cS_4$.
  Let $p'$ be the
  partition on $s_\star^{-1}(\{\gamma_1,\gamma_2\})\cup\{v_0\}$ such that $(F,E_0,(p',w))$ is a candidate solution in $\cA_H[s_\star]$. We claim that $(F,E_0,(p',w))$ is a solution in $\cA_H[s_\star]$. By Definition of $s_\star$ and of $(F,E_0)$, Condition (1) is satisfied by $(F,E_0,(p',w))$. 
  
 	Suppose first that $s(j)=\gamma_{-2}$. 
 	Observe that Condition (2) is satisfied because the certificate graphs of $(F,E_0)$ with respect to $s$ and $s_\star$ are the same. 
  Condition (3) is also satisfied by definition of $s_\star$ and because $\CG(F,E_0,s)=\CG(F,E_0,s_\star)$.
  So, $(F,E_0,(p',w))$ is a solution in $\cA_H[s_\star]$.
 
   Assume now that $s(j)=\gamma_2$. Condition (2) is satisfied. Indeed, if $s_\star(i)=\gamma_1$, then $\CG(F,E_0,s)$ is a subgraph of $\CG(F,E_0,s_\star)$. Otherwise, if $s_\star(i)=\gamma_{2}$, then $\CG(F,E_0,s)$ can be obtained from $\CG(F,E_0,s_\star)$ by fusing $v_i^{+}$ with $v_j^+$.
   In both cases, it is easy to see that $\CG(F,E_0,s_\star)$ is acyclic as $\CG(F,E_0,s)$ is acyclic. 
   Condition (3) is satisfied because each connected component $C$ of $\CG(F,E_0,s)$ contains at least a vertex in $lab_{G}^{-1}(s^{-1}(\{\gamma_1,\gamma_2\})) \cup \{v_0\}$, and we have from the definition of $s_\star$ 
   \[ lab_{H}^{-1}(s_\star^{-1}(\{\gamma_1,\gamma_2\}))= lab_{G}^{-1}(s^{-1}(\{\gamma_1,\gamma_2\})). \] 
  
  In both cases, $(F,E_0,(p',w))$ is a solution in  $\cA_H[s_\star]$, and depending on $s(j)$, we can clearly conclude that $(p,w)$ is obtained from $(p',w)$.
\medskip

  Let us now prove that for any weighted partition $(p,w)\in\cA_1\cup \cA_2\cup \cA_3\cup \cA_4$, there is a pair $(F,E_0)$ such that $(F,E_0,(p,w))$ is a solution in $\cA_G[s]$. 
  This is clear if $(p,w)\in \cA_1\cup \cA_2$. 
  
  Assume that $s(j)=\gamma_{-2}$ and let $(p,w)\in \cA_3$. Let $s_3\in\cS_3$ and $(p',w)$ be the weighted partition from $tab_H[s_3]$ from which $(p,w)$ is obtained.
 	Let $(F,E_0,(p',w))$ be a solution in $\cA_H[s_3]$. We claim that $(F,E_0,(p,w))$ is a solution in $\cA_G[s]$. By definition of $\cS_3$, we clearly have $|V(F)\cap lab_H^{-1}(\{i,j\})|=|V(F)\cap lab_G^{-1}(j)|\geq 2$. We deduce that Condition (1) is satisfied. Condition (2) is also satisfied because $\CG(F,E_0,s_3)$ is the same as $\CG(F,E_0,s)$. 
  We claim that Condition (3) is satisfied. 
 Notice that $s^{-1}(\{\gamma_{1},\gamma_{2}\})=s_3^{-1}(\{\gamma_{1},\gamma_{2}\})\setminus\{i,j\}$. Moreover, if $ i\in s_3^{-1}(\{\gamma_{1},\gamma_{2}\})$, by definition of $s_3$, we have $s_3(i)=\gamma_1$, that is $F$ has exactly one $i$-vertex (the same statement is true for $j$).  Since we use the operator $\proj$ with $\{i,j\}$, there is no block of $p'$ included in $\{i,j\}$. 
 Hence, if $F$ contains one $i$-vertex (one $j$-vertex), then, by Condition (4), this vertex is connected in $\CG(F,E_0,s_3)$ to either $v_0$ or an $\ell$-vertex with $\ell\in s^{-1}(\{\gamma_1,\gamma_2\})$.
 We can conclude that each connected component of $F$ must contain a vertex in $lab_{G}^{-1}(s^{-1}(\{\gamma_1,\gamma_2\})) \cup \{v_0\}$, \ie Condition (3) is satisfied. Condition (4) is satisfied owing to the fact that $p'$ is obtained from $p$ by
  doing a projection on $\{i,j\}$.
  
  Assume now that $s(j)=\gamma_2$ and let $(p,w)\in \cA_4$. Let $s_4\in\cS_4$ and $(p',w)$ be the weighted partition from $tab_H[s_4]$ from which $(p,w)$ is obtained,
  and let $(F,E_0,(p',w))$ be a solution in $\cA_H[s_4]$.
  By definition of $\cS_4$, we deduce that Condition (1) is satisfied (see the case $s(j)=\gamma_{-2}$).
  If there is a cycle in $\CG(F,E_0,s)$, then it must be between an $i$-vertex and a $j$-vertex
  of $H$, but then we must have a path between them in $\CG(F,E_0,s_4)$, \ie, $i$ and $j$ belong to a same block of $p'$, contradicting that $(p,w)$ is produced
  from $(p',w)$ (because the $\acjoin$ operator would detect that $\acy(p,\{\{i,j\}\}_{\uparrow [k]})$ does not hold). 
  So, Condition (2) is satisfied. We deduce that Condition (3) is satisfied from the fact that by definition of $s_4$, we have $lab_G^{-1}(s^{-1}(\{\gamma_{1},\gamma_{2}\}))=lab_H^{-1}(s_4^{-1}(\{\gamma_{1},\gamma_{2}\}))$. Also, as $p$ is obtained from $p'$ by merging the blocks containing $i$ and $j$, and by removing $i$, we deduce that Condition (4) is satisfied. 
  
  In both cases, we can conclude that $(F,E_0,(p,w))$ is a solution in  $\cA_G[s]$.
\end{proof}

\medskip

\paragraph{\bf Computing $tab_G$ for $G=G_a\oplus G_b$} We can suppose w.l.o.g. that $G_a$ and $G_b$ are both $k$-labeled\footnote{If $J\subset [k]$ is the set of labels $G_a$ (or $G_b$), we can extend the domain of any function $s' : J\to \{\gamma_0,\gamma_{1},\gamma_{2},\gamma_{-2}\}$ to $[k]$ by setting $s'(i)=\gamma_0$ for all $i\in[k]\setminus J$.}. Let $s:[k]\to \{\gamma_0,\gamma_1,\gamma_2,\gamma_{-2}\}$. 

We say that $s_a:[k]\to\{\gamma_0,\gamma_1,\gamma_2,\gamma_{-2}\}$ and $s_b:[k]\to\{\gamma_0,\gamma_1,\gamma_2,\gamma_{-2}\}$ \emph{u-agree on $s$} if,
\begin{enumerate}[(u1)]
\item for each $i\in s_a^{-1}(\gamma_0)$, $s(i)=s_b(i)$. Similarly, for each $i\in s_b^{-1}(\gamma_0)$, $s(i)=s_a(i)$,
\item for each $i\in s^{-1}(\gamma_1)$, either $s_a(i)=\gamma_0$ or $s_b(i)=\gamma_0$,
\item for each $i\in [k]\setminus  (s_a^{-1}(\gamma_0)\cup s_b^{-1}(\gamma_0))$, if $s(i)=\gamma_2$, then $s_a(i),s_b(i)\in \{\gamma_1,\gamma_2\}$,
\item for each $i\in [k]\setminus  (s_a^{-1}(\gamma_0)\cup s_b^{-1}(\gamma_0))$, if $s(i)=\gamma_{-2}$, then $s_a(i),s_b(i)\in  \{\gamma_1,\gamma_{-2}\}$.
\end{enumerate}

The functions $s_a$ and $s_b$ inform about the indices to look at $tab_{G_a}$ and $tab_{G_b}$ in order to construct $tab_G[s]$. Let $(F,E_0,(p,w))$ be a solution in
$\cA_G[s]$, and assume that it is constructed from solutions $(F_a,E_a^0,(p_a,w_a))$ and $(F_b,E_b^0,(p_b,w_b))$, in respectively, $\cA_{G_a}[s_a]$ and $\cA_{G_b}[s_b]$. The first condition tells that if $F_a$ (resp. $F_b$) does not intersect $lab_G^{-1}(i)$, then the intersection of $F$ with $lab_G^{-1}(i)$ depends only on
$V(F_b)\cap lab_G^{-1}(i)$ (resp.  $V(F_a)\cap lab_G^{-1}(i)$), and so if $V(F)$ does not intersect $lab_G^{-1}(i)$, then $F_a$ and $F_b$ do not intersect
$lab_G^{-1}(i)$. 
The second condition tells that if $|F\cap lab_G^{-1}(i)|=1$, then either $F\cap lab_G^{-1}(i)=F_a\cap lab_G^{-1}(i)$ or $F\cap lab_G^{-1}(i)=F_b\cap lab_G^{-1}(i)$. 
The other two conditions tell when $F$ intersects both $lab_{G_a}^{-1}(i)$ and $lab_{G_b}^{-1}(i)$. 
Notice that we may have $s(i)$ set to $\gamma_{-2}$ (or $\gamma_{2}$), while $F_a$ and $F_b$ each intersects $lab_G^{-1}(i)$ in exactly one vertex.

We let $tab_G[s]:=\acreduce(\rmc(\cA))$ where, 
{\small
\begin{align*}
  \cA &:= \bigcup\limits_{\substack{s_a,s_b \\\textrm{~u-agree on~$s$}}}\acjoin\left(\proj(s^{-1}\left(\gamma_{-2}),tab_{G_a}[s_a]\right), \proj(s^{-1}\left(\gamma_{-2}),tab_{G_b}[s_b]\right)\right).
\end{align*}}
The weighted partitions $(p,w)$ added in $tab_G[s]$ are all the weighted partitions that are joins of weighted partitions $(p_a,w_a)$ and $(p_b,w_b)$ from
$tab_{G_a}[s_a]$ and $tab_{G_b}[s_b]$, respectively. We need to do the projections before the join because we may have $s_a(i)=s_b(i)=\gamma_1$,
$s(i)=\gamma_{-2}$, and there is $j$ such that $s(j)=\gamma_2$ with $j$ in the same block as $i$ in both partitions $p_a$ and $p_b$.  If we do the projection after the
$\acjoin$ operator, this latter will detect that $\acy(p_a,p_b)$ does not hold, and won't construct $(p,w)$, which indeed corresponds to a solution in $\cA_G[s]$.

\begin{lemma}\label{lem:join} Let $G=G_a\oplus G_b$ be a $k$-labeled graph. For each function $s:[k]\to \{\gamma_0,\gamma_1,\gamma_2,\gamma_{-2}\}$, the table $tab_G[s]$ac-represents
  $\cA_G[s]$ assuming that $tab_{G_a}[s']$ and $tab_{G_b}[s']$ ac-represent, respectively, $\cA_{G_a}[s']$ and $\cA_{G_b}[s']$, for each
  $s':[k]\to \{\gamma_0,\gamma_1,\gamma_2,\gamma_{-2}\}$.
\end{lemma}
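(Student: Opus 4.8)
The plan is to follow the pattern of Lemmas~\ref{lem:add} and~\ref{lem:ren}. Since $\rmc$, $\acreduce$, $\proj$, $\acjoin$ and set union all preserve ac-representation (Lemma~\ref{lem:par-rep} and Theorem~\ref{thm:reduce1}) and $tab_G[s]=\acreduce(\rmc(\cA))$, it is enough to prove the set equality $\cA_G[s]=\cA$ under the assumption that $tab_{G_a}[s']=\cA_{G_a}[s']$ and $tab_{G_b}[s']=\cA_{G_b}[s']$ for every $s'$. I would prove the two inclusions separately, using repeatedly that $G=G_a\oplus G_b$ has no edge between $V(G_a)$ and $V(G_b)$, so that only $v_0$ and the markers $v_i^+$ link the two sides inside a certificate graph.

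For $\cA_G[s]\subseteq\cA$, I would take a solution $(F,E_0,(p,w))$ and split it into $F_a:=G_a[V(F)\cap V(G_a)]$ and $F_b:=G_b[V(F)\cap V(G_b)]$, with $E_0$ and $w$ splitting accordingly. I would define $s_a$ and $s_b$ by recording, for each label, whether $F_a$ (resp.\ $F_b$) meets its class in zero, one, or at least two vertices, copying the $\gamma_2$/$\gamma_{-2}$ value of $s$ in the last case, and check that $s_a,s_b$ u-agree on $s$: (u1)--(u2) come from the additivity of label counts across the two sides and (u3)--(u4) from this copying rule. I would then verify that $(F_a,E_a^0,(p_a,w_a))$ and $(F_b,E_b^0,(p_b,w_b))$ are solutions in $\cA_{G_a}[s_a]$ and $\cA_{G_b}[s_b]$. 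Conditions~(1) and~(2) of Definition~\ref{defn:tabfvs} hold for each side because $\CG(F_t,E_t^0,s_t)$ is a subgraph of $\CG(F,E_0,s)$ (each side marker maps to the unique combined marker), hence a forest, while $p_a,p_b$ are the partitions prescribed by Condition~(4). Condition~(3) for the pieces is the one point that genuinely uses the $\oplus$ structure: a side-$a$ component can reach the good vertex of its containing $\CG(F,E_0,s)$-component only through $v_0$ or through a marker $v_\ell^+$, and in either case it already contains a vertex good with respect to $s_a$. Finally, I would read off that projecting out $s^{-1}(\gamma_{-2})$ and then applying $\acjoin$ yields $(p,w)$, the predicate $\acy$ holding precisely because $\CG(F,E_0,s)$ is a forest, via the identity $f(p)=|V|-\block{p}$ of Fact~\ref{fact:par-rep2}.

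For the reverse inclusion, I would take $(p,w)\in\cA$ witnessed by u-agreeing $s_a,s_b$ and solutions $(F_a,E_a^0,(p_a,w_a))$, $(F_b,E_b^0,(p_b,w_b))$, and set $F:=F_a\uplus F_b$, $E_0:=E_a^0\cup E_b^0$. Conditions~(1) and~(4) follow from the clauses (u1)--(u4) and from the way $\proj$ and $\acjoin$ combine the partitions, and Condition~(3) from the fact that $\proj$ with $X=s^{-1}(\gamma_{-2})$ discards every partition having a block inside the removed labels. The heart of the argument is Condition~(2): I would present $\CG(F,E_0,s)$ as the graph obtained from the disjoint union of $\CG(F_a,E_a^0,s_a)$ and $\CG(F_b,E_b^0,s_b)$ by identifying the two copies of $v_0$ and, for each $i\in s^{-1}(\gamma_2)$ met on both sides, the two markers $v_i^+$. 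These identifications are exactly those recorded by the shared ground elements of the join, so the result is acyclic if and only if $\acy$ holds on the projected partitions, which is guaranteed by construction.

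The step I expect to be the main obstacle is this acyclicity argument together with the justification of performing the projection on $s^{-1}(\gamma_{-2})$ \emph{before} the join. For a label $i$ with $s_a(i)=s_b(i)=\gamma_1$ but $s(i)=\gamma_{-2}$, the two $i$-vertices are genuinely free: as $i\in s^{-1}(\gamma_{-2})$ there is no marker $v_i^+$ in $\CG(F,E_0,s)$ and the two copies are not identified, so they must not be merged in the partition. Removing $i$ from $p_a$ and $p_b$ before $\acjoin$ is exactly what avoids the spurious failure of $\acy$ flagged right after the definition of $\cA$ (where $i$ and some $j\in s^{-1}(\gamma_2)$ share a block on both sides). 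Conversely, I must check that this early removal hides no real cycle: since no vertex of $V_s^+$ is attached to the $i$-vertices for $i\in s^{-1}(\gamma_{-2})$, those vertices induce no cross-side identification and cannot close a cycle invisible to $\acy$ on the projected partitions. Making these two claims precise, and verifying that the ground sets of the projected partitions match $s^{-1}(\{\gamma_1,\gamma_2\})\cup\{v_0\}$, is where the remaining work lies.
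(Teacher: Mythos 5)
Your proposal is correct and follows essentially the same route as the paper's proof: the same splitting of $F$ into $F_a,F_b$ with the same definition of $s_a,s_b$ (copying $s$ on labels met in at least two vertices), the same verification of Conditions (1)--(4) for the pieces, and the same key justification for performing $\proj$ on $s^{-1}(\gamma_{-2})$ \emph{before} $\acjoin$. The only cosmetic difference is that the paper establishes the correspondence between $\acy$ on the projected partitions and acyclicity of $\CG(F,E_0,s)$ via explicit alternating sequences of labels, whereas you use a gluing picture (which needs the small caveat that for a label $i$ with, say, $s_a(i)=\gamma_1$ and $s(i)=\gamma_2$ the combined marker $v_i^+$ is newly attached rather than obtained by identifying two side markers --- an inessential difference for acyclicity).
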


\begin{proof} Since the used operators preserve ac-representation, it is enough to prove that $\cA_G[s]=\cA$ if we let $tab_{G_t}[s']=\cA_{G_t}[s']$, for every $t\in\{a,b\}$ and $s' : [k]\to \{\gamma_0,\gamma_1,\gamma_{2},\gamma_{-2}\}$.

  Let $(F,E_0,(p,w))$ be a solution in $\cA_G[s]$. We claim that $(p,w)\in\cA$. For $t\in \{a,b\}$, let $F_t:=G_t[V(F)\cap V(G_t)]$, $E^{t}_0:=\{v_0v \in E_0 \mid v\in V(F_t)\}$, and
  $w_t:=\wc(V(F_t))$, and let $s_t:[k]\to \{\gamma_0,\gamma_1,\gamma_2,\gamma_{-2}\}$ such that
  \begin{align*}
    s_t(i)&:=\begin{cases} \gamma_0 & \textrm{if $V(F_t)\cap lab_{G_t}^{-1}(i)=\emptyset$},\\
     \gamma_1 & \textrm{if $|V(F_t)\cap lab_{G_t}^{-1}(i)|=1$},\\
      s(i) & \textrm{if $|V(F_t)\cap lab_{G_t}^{-1}(i)|\geq 2$.} \end{cases} 
  \end{align*}

  It is straightforward to verify that $s_a$ and $s_b$ u-agree on $s$. 
  Observe that, by definition, $s_t^{-1}(\gamma_{-2})\subseteq s^{-1}(\gamma_{-2})$ and $s_t^{-1}(\gamma_{2})\subseteq s^{-1}(\gamma_{2})$, for each $t\in\{a,b\}$.
  Let $p_a$ and $p_b$ be, respectively, partitions on $s_a^{-1}(\{\gamma_1,\gamma_2\})\cup \{v_0\}$
  and $s_b^{-1}(\{\gamma_1,\gamma_2\})\cup \{v_0\}$ such that $(F_a,E^{a}_0,(p_a,w_a))$ and $(F_b,E^{b}_0,(p_b,w_b))$ are, respectively, candidate solutions in
  $\cA_{G_a}[s_a]$ and $\cA_{G_b}[s_b]$.
  
  We claim that $(F_a,E^{a}_0,(p_a,w_a))$ is a solution in $\cA_{G_a}[s_a]$. By definition of $p_a$, $s_a$, and of $(F_a,E^{a}_0)$, Conditions (1) and (4) are clearly  satisfied. Because $s_a^{-1}(\gamma_2) \subseteq s^{-1}(\gamma_2)$, and $F=F_a\oplus F_b$, we can conclude that $\CG(F_a,E^{a}_0,s_a)$ is an induced subgraph of
  $\CG(F,E_0,s)$, and because $\CG(F,E_0,s)$ is acyclic, we can conclude that $\CG(F_a,E^{a}_0,s_a)$ is acyclic, \ie, Condition (2) is satisfied.  If a connected component $C$ of $\CG(F_a,E^{a}_0,s_a)$ does not intersect  $s_a^{-1}(\{\gamma_1,\gamma_2\})\cup \{v_0\}$, then $C$ is entirely contained in $lab_{G_a}^{-1}(s_a^{-1}(\gamma_{-2}))$. But, this yields a contradiction with $(F,E_0,(p,w))$ satisfying Condition (3) because $C$ is a connected component of $\CG(F,E_0,s)$, and $s_a^{-1}(\gamma_{-2})\subseteq s^{-1}(\gamma_{-2})$. Therefore Condition (3) is also satisfied. We can thus conclude that $(F_a,E^{a}_0,(p_a,w_a))$ is a solution in $\cA_{G_a}[s_a]$. Similarly, one can check that
  $(F_b,E^{b}_0,(p_b,w_b))$ is a solution in $\cA_{G_b}[s_b]$.

  It remains to prove that {\small
    \[ (p,w)\in\acjoin(\proj(s^{-1}(\gamma_{-2}),\{(p_a,w_a)\}),\proj(s^{-1}(\gamma_{-2}),\{(p_b,w_b)\})). \]} First, recall that each connected component of $F$ is
  either a connected component of $F_a$ or of $F_b$. Then, because $(F,E_0,(p,w))$ satisfies Condition (3), we have that
  $\proj(s^{-1}(\gamma_{-2}),\{(p_a,w_a)\}))\neq \emptyset$, and similarly $\proj(s^{-1}(\gamma_{-2}),\{(p_b,w_b)\}))\neq \emptyset$. 
  We deduce that 
  \begin{align*}
  	p'_a:=p_{a\downarrow (s^{-1}(\gamma_{-2}))}\in \proj(s^{-1}(\gamma_{-2}),\{(p_a,w_a)\}),\\
  	p'_b:=p_{b\downarrow (s^{-1}(\gamma_{-2}))}\in \proj(s^{-1}(\gamma_{-2}),\{(p_b,w_b)\}).
  \end{align*}
  
  Let
  $p''_a:= p'_{a\uparrow ([k]\setminus s^{-1}(\{ \gamma_0, \gamma_{-2}\}))}$ and $p''_b:=p'_{b\uparrow ([k]\setminus s^{-1}(\{ \gamma_0, \gamma_{-2}\}))}$.  We claim that $\acy(p''_1,p''_2)$ holds. Assume towards a contradiction that it is not the case. We let
  $\sim_a$ (resp. $\sim_b$) be an equivalence relation on $[k]\setminus s^{-1}(\{ \gamma_0, \gamma_{-2}\})\cup \{v_0\}$ where $i\sim_a j$ (resp. $i\sim_b j$) if there is an $i$-vertex\footnote{We consider $v_0$ as a $v_0$-vertex.} and a $j$-vertex that are
  connected in $\CG(F_a,E^{a}_0,s_a)$ (resp.  $\CG(F_b,E^{b}_0,s_b)$). By the graphical definition of $\acy$, we can easily see that if $\acy(p''_a,p''_b)$ does not
  hold, then there is a sequence $i_0,\dots, i_{2r-1}$ of $s^{-1}(\{\gamma_1,\gamma_2\})\cup \{v_0\}$ such that\footnote{ The indexes are modulo $2r$.},
  {$\text{for all } 0 \leq \alpha < r-1, \text{ we have } i_{2\alpha}\sim_b i_{2\alpha+1} \text{ and } i_{2\alpha+1}\sim_a
      i_{2\alpha+2}.$}
  We can thus construct a cycle in $\CG(F,E_0,s)$ from this sequence since $V(F_a)\cap V(F_b) = \emptyset$, $\CG(F_a,E^{a}_0,s_a)$ and $\CG(F_b,E^{b}_0,s_b)$ are
  induced subgraphs of $\CG(F,E_0,s)$, and all the vertices labeled with a label from $s^{-1}(\gamma_2)$ are at distance at most two in $\CG(F,E_0,s)$. This yields a
  contradiction as $\CG(F,E_0,s)$ is acyclic by assumption. Therefore, $\acy(p''_1,p''_2)$ holds.
  
  Finally, $p=p''_a\sqcup p''_b$ because one easily checks that there is an $i$-vertex $x$ connected to a $j$-vertex $y$ in $\CG(F,E_0,s)$ if and only if $i\cR j$ where $\cR$ is
  the transitive closure of ($i\sim_a j$ or $i\sim_b j$). This follows from the fact that for every $i\in [k]\setminus s^{-1}(\{ \gamma_0, \gamma_{-2}\})$, either there is exactly one $i$-vertex in $F$ or the $i$-vertices of $F$ are all adjacent to $v_i^+$ in $\CG(F,E_0,s)$. In both cases, the $i$-vertices of $F$ are in the same connected component of  $\CG(F,E_0,s)$. Since, the equivalence classes of $\cR$ correspond to the blocks of $p''_1\sqcup p''_2$, and $w=w_a+w_b$, we can conclude that  $(p,w)\in\acjoin(\proj(s^{-1}(\gamma_{-2}),\{(p_a,w_a)\}),\proj(s^{-1}(\gamma_{-2}),\{(p_b,w_b)\}))$.
  
  \medskip

  We now prove that if $(p,w)$ is added to $tab_G[s]$ from $(p_a,w_a)\in\cA_{G_a}[s_a]$ and $(p_b,w_b)\in\cA_{G_b}[s_b]$, then there exists a  pair $(F,E_0)$ such that $(F,E_0,(p,w))$ is a solution in $\cA_G[s]$.  Let $(F_a,E^{a}_0)$ and $(F_b,E^{b}_0)$ such that $(F_a,E^{a}_0,(p_a,w_a))$ and
  $(F_b,E^{b}_0,(p_b,w_b))$ are solutions in, respectively, $\cA_{G_a}[s_a]$ and $\cA_{G_b}[s_b]$ with $s_a$ and $s_b$ u-agreeing on $s$.  We claim that
  $(F,E_0,(p,w))$ is a solution in $\cA_G[s]$ with $F:=(V(F_1)\cup V(F_2),E(F_1)\cup E(F_2))$ and $E_0:=E^{a}_0\cup E^{b}_0$.  Because $s_a$ and $s_b$ u-agree on
  $s$, we clearly have that Condition (1) is satisfied.

  Let $\sim_a$ and $\sim_b$ as defined above.  Assume towards a contradiction that there exists a cycle $C$ in $\CG(F,E_0,s)$. Since both $\CG(F_a,E^{a}_0,s_a)$ and
  $\CG(F_b,E^{b}_0,s_b)$ are acyclic, $C$ must be a cycle alternating between paths in $\CG(F_a,E^{a}_0,s_a)$ and paths in $\CG(F_b,E^{b}_0,s_b)$. One can easily
  check that this implies the existence of a sequence $i_0,\dots, i_{2r-1}$ of $s^{-1}(\{\gamma_1,\gamma_2\})\cup \{v_0\}$ such that\footnotemark[\value{footnote}],
   {$\text{for all } 0 \leq \alpha < r-1, \text{ we have } i_{2\alpha}\sim_b i_{2\alpha+1} \text{ and } i_{2\alpha+1}\sim_a i_{2\alpha+2}.$}
  Moreover, it is easy to infer, from this sequence and the graphical definition of $\acy$, that
  $\acy\left(p'_{a\uparrow V},p'_{b\uparrow V}\right)$ does not hold with $p'_a:=p_{a\downarrow (s^{-1}(\gamma_{-2}))}$, $p'_b:=p_{b\downarrow (s^{-1}(\gamma_{-2}))}$ and $V=([k]\setminus s^{-1}(\{\gamma_0,\gamma_{-2}\}))$, contradicting the fact that
  $(p,w)=\acjoin(\proj(s^{-1}(\gamma_{-2}),\{(p_a,w_a)\}), \proj(s^{-1}(\gamma_{-2}),\{(p_b,w_b)\}))$. Therefore, $\CG(F,E_0,s)$ is acyclic and so Condition (2) is satisfied.
  
  
  If we suppose that Condition (3) is not satisfied, then there is a connected component $C$ of $\CG(F,E_0,s)$ that does not intersect $s^{-1}(\{\gamma_1,\gamma_2\})\cup\{v_0\}$,
  \ie, $C$ is fully contained in $lab_G^{-1}(s^{-1}(\gamma_{-2}))$. Since $F=F_a\oplus F_b$, $C$ is either a connected component of $F_a$ or of $F_b$. Suppose
  \emph{w.l.o.g.} that $C$ is a connected component of $F_a$. Observe that $C$ intersects $lab_{G_a}^{-1}(s_a^{-1}(\{\gamma_1,\gamma_2\}))$ because
  $(F_a,E^{a}_0,(p_a,w_a))$ is a solution in $\cA_{G_a}[s_a]$.  Moreover, $C$ does not intersect $lab_{G_a}^{-1}(s_a^{-1}(\gamma_2))$, otherwise $C$ would intersect
  $lab_{G}^{-1}(s^{-1}(\gamma_2))$ since if $s_a(i)=\gamma_2$, then $s(i)=\gamma_2$, for all $i\in[k]$. Thus $C$ is a connected component of $\CG(F_a,E^{a}_0,s_a)$ and
  $b_C:=\{i\in s_a^{-1}(\gamma_1)\mid C\cap lab_{G_a}^{-1}(i)\neq \emptyset\}$ is a block of $p_a$ because $(F_a,E^{a}_0,(p_a,w_a))$ is a candidate solution in $\cA_{G_a}[s_a]$. Thus, by
  definition of $\proj$, we have $\proj(s^{-1}(\gamma_{-2}),\{(p_a,w_a)\})=\emptyset$, which contradicts the fact that
  $(p,w)=\acjoin(\proj(s^{-1}(\gamma_{-2}),\{(p_a,w_a)\}), \proj(s^{-1}(\gamma_{-2}),\{(p_b,w_b)\}))$. So, Condition (3) is also satisfied.
  We deduce that Condition (4) is satisfied by observing that, for $i,j\in s^{-1}(\{\gamma_{1},\gamma_{2}\})\cup \{v_0\}$, there is an $i$-vertex connected to a $j$-vertex in $\CG(F,E_0,s)$ if and only if $i\cR j$ where $\cR$ is the transitive closure of ($i\sim_a j$ or $i\sim_b j$).
  This concludes the proof that  $(F,E_0,(p,w))$ is a solution in $\cA_G[s]$. 
\end{proof}

\begin{theorem}\label{thm:fvs} There is an algorithm that, given an $n$-vertex graph $G$ and an irredundant $k$-expression of $G$, computes a  minimum feedback vertex set in time $O(15^k\cdot 2^{(\omega + 1)\cdot k}\cdot k^{O(1)}\cdot n)$.

\end{theorem}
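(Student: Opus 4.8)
The plan is to prove correctness by structural induction on the irredundant $k$-expression, and then bound the running time operation by operation, maintaining throughout the invariant that every stored table $tab_G[s]$ has size at most $(k+1)\cdot 2^{k}$, guaranteed by $\acreduce$ and Theorem~\ref{thm:reduce1} since all partitions live on a set of size at most $k+1$.

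For correctness, the base case $G=\mathbf{1}(x)$ is handled directly (there $tab_G[s]=\cA_G[s]$), and the inductive step follows from Lemmas~\ref{lem:add}, \ref{lem:ren} and~\ref{lem:join}, which show that each update produces a table that ac-represents the corresponding $\cA_G[s]$. At the root, a maximum induced forest of $G$ corresponds, via the universal vertex $v_0$, to a maximum induced tree, and its weight is $\max_s\max\{w : (\{s^{-1}(\gamma_1)\cup\{v_0\}\},w)\in\cA_G[s]\}$ over all $s$ with $s^{-1}(\gamma_2)=\emptyset$. Taking $q$ to be the partition of $V=s^{-1}(\gamma_1)\cup\{v_0\}$ into singletons, one checks that $\{V\}\sqcup q=\{V\}$ and $\acy(\{V\},q)$ holds, so this quantity equals $\acopt(\cA_G[s],q)$ and can be read off from $tab_G[s]$ since it ac-represents $\cA_G[s]$. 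A minimum feedback vertex set then has weight $\wc(V(G))$ minus this maximum, and the set itself is recovered by standard backtracking.

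For the running time, recall from~\cite{CourcelleO00} that the expression has $O(n)$ disjoint-union nodes and $O(nk^2)$ unary nodes. The cheap operations are $\mathbf{1}(x)$, $add_{i,j}$ and $ren_{i\to j}$: each processes $O(4^k)$ functions $s$; for $add$ no $\acreduce$ is needed and only an $\acjoin$ with a singleton set is performed, giving per-node cost $O(8^k\cdot k^{O(1)})$, while for $ren$ only $O(1)$ tables of size $\leq (k+1)2^k$ are combined before a single $\acreduce$, giving per-node cost $O(2^{(\omega+2)k}\cdot k^{O(1)})$. Multiplied by the $O(nk^2)$ unary nodes, both are dominated by the join bound below, since $2^{(\omega+2)k}=2^k\cdot 2^{(\omega+1)k}\leq 15^k\cdot 2^{(\omega+1)k}$.

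The bottleneck, and the delicate point to pin down, is the $\oplus$ node. The key combinatorial fact is that the number of triples $(s,s_a,s_b)$ with $s_a,s_b$ u-agreeing on $s$ is exactly $15^k$: checking conditions (u1)--(u4) coordinate by coordinate, the admissible triples $(s(i),s_a(i),s_b(i))$ number exactly $15$ — one with $s_a(i)=s_b(i)=\gamma_0$, three with exactly one of them equal to $\gamma_0$, and eight with both distinct from $\gamma_0$ (enumerated via (u2)--(u4), noting that $(s_a(i),s_b(i))\in\{(\gamma_2,\gamma_{-2}),(\gamma_{-2},\gamma_2)\}$ admit no valid $s(i)$). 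For each such triple the algorithm performs two $\proj$'s and one $\acjoin$ of two tables of size $\leq (k+1)2^k$, costing $4^k\cdot k^{O(1)}$ time and producing at most $4^k\cdot k^{O(1)}$ weighted partitions; accumulating these over all u-agreeing pairs for a fixed $s$, applying $\rmc$, and a single $\acreduce$, the total cost summed over all $s$ is
\[ 15^k\cdot 4^k\cdot 2^{(\omega-1)k}\cdot k^{O(1)} = 15^k\cdot 2^{(\omega+1)k}\cdot k^{O(1)}, \]
using $4^k\cdot 2^{(\omega-1)k}=2^{(\omega+1)k}$. Multiplying by the $O(n)$ disjoint-union nodes yields the claimed $O(15^k\cdot 2^{(\omega+1)k}\cdot k^{O(1)}\cdot n)$. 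The main obstacle is thus twofold: establishing the exact per-coordinate count of $15$, and observing that the $4^k$ blow-up of $\acjoin$ combines precisely with the $2^{(\omega-1)k}$ cost of $\acreduce$ to give the $2^{(\omega+1)k}$ factor.
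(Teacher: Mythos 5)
Your proposal is correct and follows essentially the same route as the paper's proof: correctness by bottom-up induction via Lemmas~\ref{lem:add}--\ref{lem:join}, the per-coordinate count of $15$ admissible triples $(s_a(i),s_b(i),s(i))$ at $\oplus$-nodes (the paper tabulates exactly this in Table~\ref{tab:s_as_bs}), and the same combination of the $4^k$ output size of $\acjoin$ with the $2^{(\omega-1)k}$ cost of $\acreduce$; your extraction of the answer at the root via $\acopt(\cA_G[s],q)$ with $q$ the all-singletons partition is a welcome detail the paper leaves implicit. One trivial slip: your sub-tally ``one $+$ three $+$ eight'' sums to $12$, not $15$ --- there are \emph{six} admissible triples with exactly one of $s_a(i),s_b(i)$ equal to $\gamma_0$ (two choices of which is $\gamma_0$ times three values for the other), so the correct breakdown is $1+6+8=15$, which is the total you assert and use.
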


\begin{proof} We do a bottom-up traversal of the  $k$-expression and at each step we update the tables as indicated above. The correctness of the algorithm
  follows from Lemmas \ref{lem:add}-\ref{lem:join}. 
  From the definition of $\cA_G[s]$, we conclude that the size of a maximum induced forest is the maximum over all $s:[k]\to \{\gamma_0,\gamma_1,\gamma_2,\gamma_{-2}\}$ with $s^{-1}(\gamma_{2})=\emptyset$, of $\max\{w \mid (\{\{v_0\}\cup s^{-1}(\gamma_{1}) \} \},w)\in tab_G[s]\}$ because $tab_G[s]$ ac-represents $\cA_G[s]$ for all $s:[k]\to \{\gamma_0,\gamma_1,\gamma_2,\gamma_{-2}\}$.
  
  Let us discuss the time complexity now.  If $G=add_{i,j}(H)$ or $G=ren_{i\to j}(H)$, and
  $s:[k]\to \{\gamma_0,\gamma_1,\gamma_2,\gamma_{-2}\}$, then we update $tab_G[s]$ from a constant number of tables from $tab_H$, each identified in constant time from
  $s$. Since each table contains at most $ 2^{k-1}\cdot k$ entries, we call the function $\acreduce$ with a set of size at most $O(2^{k-1}\cdot k)$ as input. By Theorem \ref{thm:reduce1}, we can thus update $tab_G$ in time $2^{\omega\cdot k}\cdot k^{O(1)}$.  
  If $G=G_a\oplus G_b$, then we claim that the tables from $tab_G$ are computable in time $O(15^k\cdot 2^{(\omega + 1)\cdot k}\cdot k^{O(1)})$.
  For $s:[k]\to \{\gamma_0,\gamma_1,\gamma_2,\gamma_{-2}\}$, we let
  {\small
  	\begin{align*}
  	\cA[s] &:= \bigcup\limits_{\substack{s_a,s_b \\\textrm{~u-agree on~$s$}}}\acjoin\left(\proj(s^{-1}\left(\gamma_{-2}),tab_{G_a}[s_a]\right), \proj(s^{-1}\left(\gamma_{-2}),tab_{G_b}[s_b]\right)\right).
  	\end{align*}}
  By Theorem \ref{thm:reduce1}, computing $tab_G[s]:=\acreduce(\rmc(\cA[s]))$ can be done in time $|\cA[s]|\cdot 2^{(\omega-1)\cdot k}\cdot k^{O(1)}$.
  Therefore, we can compute the tables from $tab_G$ in time
  \begin{align*}
  	\sum_{s:[k]\to  \{\gamma_0,\gamma_1,\gamma_2,\gamma_{-2}\}} |\cA[s] |\cdot 2^{(\omega-1)\cdot k}\cdot k^{O(1)}.
  \end{align*}
 Now, observe that there are at most $15^k$ functions $s,s_a,s_b : [k]\to \{\gamma_0,\gamma_1,\gamma_2,\gamma_{-2}\}$ such that $s_a$ and $s_b$ u-agree on $s$. 
 Indeed, for all $i\in[k]$, if $s_a$ and $s_b$ u-agree on $s$, then the tuple $(s_a(i),s_b(i),s(i))$ can take up to 15 values. See Table \ref{tab:s_as_bs} for all the possible values.
 \renewcommand{\arraystretch}{1.5}
 \begin{table}[h]
 	\centering
 	\begin{tabular}{|c|c|c|c|c|}
 		\hline
 		& $s_b(i)=\gamma_{0}$ & $s_b(i)=\gamma_{1}$ & $s_b(i)=\gamma_{2}$& $s_b(i)=\gamma_{-2}$\\
 		\hline
 		$s_a(i)=\gamma_{0}$ & $\gamma_{0}$ & $\gamma_{1}$ & $\gamma_{2}$ & $\gamma_{-2}$\\
 		\hline
 		$s_a(i)=\gamma_{1}$ & $\gamma_{1}$ & $\gamma_{2},\gamma_{-2}$ & $\gamma_{2}$ & $\gamma_{-2}$\\
 		\hline
 		$s_a(i)=\gamma_{2}$ & $\gamma_{2}$ & $\gamma_{2}$ & $\gamma_{2}$ & forbidden\\
 		\hline
 		$s_a(i)=\gamma_{-2}$ & $\gamma_{-2}$ & $\gamma_{-2}$ & forbidden & $\gamma_{-2}$\\
 		\hline
 	\end{tabular}
 	\caption{Possibles values of $s(i)$ depending on the value of $s_a(i)$ and $s_b(i)$ when $s_a$ and $s_b$ u-agree on $s$, there are 15 possible values for the tuple $(s_a(i),s_b(i),s(i))$.}
 	\label{tab:s_as_bs}
 \end{table}

 Because each table of $tab_{G_a}$ and $tab_{G_b}$ contains at most $2^{k-1}\cdot k$ values, we have $|\acjoin\left(\proj(s^{-1}\left(\gamma_{-2}),tab_{G_a}[s_a]\right), \proj(s^{-1}\left(\gamma_{-2}),tab_{G_b}[s_b]\right)\right)| \leq  2^{2k-2}\cdot k^2$.
 It follows that $\sum_{s:[k]\to  \{\gamma_0,\gamma_1,\gamma_2,\gamma_{-2}\}} |\cA[s] | \leq 15^k \cdot 2^{2k}\cdot k^2.$
 	Hence, we can conclude that the tables from $tab_G$ can be computed in time $O(15^k\cdot 2^{(\omega + 1)\cdot k}\cdot k^{O(1)})$.
 	
	Because the size of a $k$-expression is $O(n\cdot k^2)$, we can conclude that a minimum feedback vertex set can be computed in the given time.
\end{proof}

\section{Connected (Co-)$(\sigma,\rho)$-Dominating Sets}\label{sec:dom}

We will show here how to use the operators defined in \cite{BodlaenderCKN15} in order to obtain a $2^{O(d\cdot k)}\cdot n$ time algorithm for computing a minimum or a maximum connected $(\sigma,\rho)$-dominating set, given a $k$-expression, with $d$ a constant that depends only on $(\sigma,\rho)$.
We deduce from this algorithm a $2^{O(k)}\cdot n^{O(1)}$ time algorithm for computing a minimum node-weighted Steiner tree, and a $2^{O(d\cdot k)}\cdot n^{O(1)}$ time algorithm for computing a maximum (or minimum) connected co-$(\sigma,\rho)$-dominating set. 

We let $opt\in \{\min,\max\}$, \ie, we are interested in computing a connected $(\sigma,\rho)$-dominating set of maximum (or minimum) weight if $opt=\max$ (or $opt=\min$). Let us first give some definitions.
As defined in Section \ref{sec:framework}, $\rmc$ works only for the case $opt=\max$, we redefine it as follows in order to take into account minimization problems.
$$ \rmc(\cA) :=\{(p,w)\in \cA \mid \forall (p,w')\in \cA, opt(w,w')=w\}.$$ 

\paragraph{\bf Join.} Let $V'$ be a finite set. For $\cA\subseteq \pis(V)\times \bN$ and $\cB\subseteq \pis(V')\times \bN$, we define $\join(\cA,\cB)\subseteq
\pis(V\cup V')\times \bN$ as {
\begin{align*}
  \join(\cA,\cB)&:=\{(p_{\uparrow V'}\sqcup q_{\uparrow V},w_1+w_2)\mid (p,w_1)\in \cA, (q,w_2)\in \cB\}.
\end{align*}}
This operator is the one from \cite{BodlaenderCKN15}. It is used mainly to construct partial solutions of $G\oplus H$ from partial solutions of $G$ and $H$.

The following proposition assumes that $\log(|\cA|) \leq |V|^{O(1)}$ for each $\cA\subseteq \pis(V)\times \bN$ (this can be established by applying the operator $\rmc$).

\begin{proposition}[Folklore]\label{prop:join} The operator $\join$ can be performed in time $|\cA|\cdot |\cB|\cdot |V\cup V'|^{O(1)}$ and the size of its output is upper-bounded by $|\cA|\cdot |\cB|$. 
\end{proposition}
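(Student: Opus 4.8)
The plan is to treat the two assertions separately, since both follow directly from the set-builder definition of $\join$ together with a standard union-find computation of the lattice join. For the size bound I would simply observe that every weighted partition in $\join(\cA,\cB)$ is produced from a single pair $\bigl((p,w_1),(q,w_2)\bigr)\in \cA\times \cB$; hence the output is the image of a set of cardinality $|\cA|\cdot|\cB|$, and coincidences among the produced partitions only decrease the count. Thus $|\join(\cA,\cB)|\leq |\cA|\cdot|\cB|$, which is immediate.

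For the running time, the idea is to enumerate all $|\cA|\cdot|\cB|$ pairs and, for each, compute the weighted partition $\bigl(p_{\uparrow V'}\sqcup q_{\uparrow V},\,w_1+w_2\bigr)\in \pis(V\cup V')\times\bN$ in time $|V\cup V'|^{O(1)}$. Writing $m:=|V\cup V'|$, the key point is that the lattice join of two partitions of an $m$-element set can be computed with a union-find structure: initialize every element of $V\cup V'$ in its own singleton block, then for each block of $p$ merge all its elements, and likewise for each block of $q$. The $\uparrow$-liftings contribute only singletons, which are handled for free once the structure is initialized on all of $V\cup V'$, so the resulting partition is exactly $p_{\uparrow V'}\sqcup q_{\uparrow V}$. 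This costs $O(m\cdot\alpha(m))=m^{O(1)}$ per pair, and summing over all pairs gives the claimed bound $|\cA|\cdot|\cB|\cdot|V\cup V'|^{O(1)}$. The hypothesis $\log(|\cA|)\leq |V|^{O(1)}$ stated before the proposition guarantees that each partition and weight is representable in $|V|^{O(1)}$ space, so the arithmetic $w_1+w_2$ and all bookkeeping remain inside the polynomial factor.

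There is no genuine obstacle here, as the statement is folklore; the only point that warrants a sentence of justification is the correctness of the union-find computation of $\sqcup$, namely that merging within the blocks of $p$ and of $q$ yields precisely the join in $(\pis(V\cup V'),\sqsubseteq)$ and not some coarser or finer partition. This holds because the join is by definition the least partition coarser than both $p_{\uparrow V'}$ and $q_{\uparrow V}$, equivalently the partition into connected components of the graph on $V\cup V'$ whose edges link any two elements that share a block of $p$ or share a block of $q$, which is exactly what the successive union operations compute.
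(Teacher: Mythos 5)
Your proof is correct; the paper itself states this proposition as folklore and gives no proof, and your argument (enumerate the $|\cA|\cdot|\cB|$ pairs, compute each lattice join $p_{\uparrow V'}\sqcup q_{\uparrow V}$ by union--find over the blocks, and note that the output is an image of $\cA\times\cB$) is exactly the standard justification one would supply. The only slight imprecision is your reading of the hypothesis $\log(|\cA|)\leq |V|^{O(1)}$: it bounds the number of stored partitions (so that indexing and deduplication stay polynomial in $|V|$ per entry), not the bit-length of the weights, but this does not affect the validity of the argument.
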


The following is the same as Definition \ref{defn:par-rep}, but does not require acyclicity. 

\begin{definition}[\cite{BodlaenderCKN15}]\label{defn:par-rep-j} For $\cA\subseteq \pis(V)\times \bN$, with $V$ a finite set, and $q\in \pis(V)$, let 
  {$$ \opt(\cA,q) := opt \{w\mid (p,w)\in \cA,p\sqcup q = \{V\} \}.$$}
  A set of weighted partitions $\cB\subseteq \pis(V)\times \bN$ \emph{represents} $\cA$ if for each $q\in \pis(V)$, it holds that $\opt(\cA,q)=\opt(\cB,q)$.

  Let $Z$ and $V'$ be two finite sets.  A function $f: 2^{\pis(V)\times \bN}\times Z\to 2^{\pis(V')\times \bN}$ is said to \emph{preserve representation} if for each
  $\cA,\cB\subseteq \pis(V)\times \bN$ and $z\in Z$, it holds that $f(\cB,z)$ represents $f(\cA,z)$ whenever $\cB$ represents $\cA$.
\end{definition}

\begin{lemma}[\cite{BodlaenderCKN15}]\label{lem:par-repj} The operators $\rmc$, $\proj$ and $\join$ preserve representation.
\end{lemma}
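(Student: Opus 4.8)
The plan is to follow the proof of Lemma~\ref{lem:par-rep} almost verbatim, simply discarding every acyclicity condition and replacing $\max$ by the uniform operator $opt$. Since the three operators are treated independently, I would handle $\rmc$, $\proj$, and $\join$ in turn; in each case I fix a partition $q\in\pis(V)$ (or a cut $r\in\pis(V\cup V')$ for the join) and show that the quantity $\opt(\cdot,q)$ is unchanged, so that the identity $\opt(\cA,q)=\opt(\cB,q)$ propagates through the operator.

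For $\rmc$, I would argue that, fixing $q$, if $w=\opt(\cA,q)$ is witnessed by some $(p,w)\in\cA$ with $p\sqcup q=\{V\}$, then $(p,w)\in\rmc(\cA)$; otherwise there would be $(p,w')\in\cA$ with $opt(w,w')=w'\neq w$, contradicting the optimality of $w$. Hence $\opt(\rmc(\cA),q)=\opt(\cA,q)$, and since this holds for both $\cA$ and $\cB$, representation is preserved. For $\proj$, I would first reduce to a singleton by $\proj(\cA,X)=\proj(\proj(\cA,x),X\setminus\{x\})$ and fix $q\in\pis(V\setminus\{x\})$. If $\{x\}$ is a block of $p$, the pair is discarded by $\proj$ and moreover $p\sqcup q_{\uparrow x}\neq\{V\}$, so it contributes to neither side. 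If $\{x\}$ is not a block of $p$, the first equivalence of Fact~\ref{fact:join} (with ground set $V$, $X=\{x\}$, and the two partitions taken as $p$ and $q$) gives $p_{\downarrow V\setminus x}\sqcup q=\{V\setminus\{x\}\}\Longleftrightarrow p\sqcup q_{\uparrow x}=\{V\}$. Thus $\opt(\proj(\cA,\{x\}),q)=\opt(\cA,q_{\uparrow x})$, and representation follows.

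The join is the main obstacle, although the only genuine difficulty is bookkeeping. Let $\cB\subseteq\pis(V')\times\bN$ and fix $r\in\pis(V\cup V')$. The key first observation is that if some subset of $V'\setminus V$ is a block of $q_{\uparrow V}\sqcup r$, then $p_{\uparrow V'}$ contributes only singletons on $V'\setminus V$ and so cannot connect that block to the rest; hence $p_{\uparrow V'}\sqcup q_{\uparrow V}\sqcup r\neq\{V\cup V'\}$ for every $p$, and such $(q,w_2)$ may be dropped. Let $\cB'$ collect the remaining ones. For $(q,w_2)\in\cB'$, applying the first equivalence of Fact~\ref{fact:join} with $X=V'\setminus V$ and partition $q_{\uparrow V}\sqcup r$ rewrites the connectivity condition as $p\sqcup(q_{\uparrow V}\sqcup r)_{\downarrow V}=\{V\}$.

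This yields the closed form $\opt(\join(\cA,\cB),r)=opt\{w_2+\opt(\cA,(q_{\uparrow V}\sqcup r)_{\downarrow V})\mid(q,w_2)\in\cB'\}$, which depends on $\cA$ only through the values $\opt(\cA,\cdot)$. Consequently, if $\cB$ represents $\cA$, the two joins agree on every $r$, so $\join(\cB,\cdot)$ represents $\join(\cA,\cdot)$; since $\join$ is symmetric (both $\sqcup$ and $+$ are commutative), the same conclusion holds when the representing set replaces the second argument, which completes the plan.
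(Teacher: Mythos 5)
Your proof is correct and is essentially the argument the paper itself points to: the statement is only cited from Bodlaender et al., and the natural in-paper proof is exactly the one of Lemma~\ref{lem:par-rep} with the $\acy$ conditions stripped out and $\max$ replaced by $opt$, which is precisely what you do (including the reduction of $\proj$ to singletons, the restriction to $\cB'$, and the two applications of Fact~\ref{fact:join}). No gaps.
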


\begin{theorem}[\cite{BodlaenderCKN15}]\label{thm:reduce2} There exists an algorithm $\reduce$ that, given a set of weighted partitions
  $\cA\subseteq\Pi(V)\times \mathbb{N}$, outputs in time $|\cA|\cdot 2^{(\omega-1)|V|}\cdot |V|^{O(1)}$ a subset $\cB$ of $\cA$ that represents
  $\cA$, and such that $|\cB|\leq 2^{|V|-1}$.
\end{theorem}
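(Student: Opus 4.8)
The plan is to re-run the argument of Theorem~\ref{thm:reduce1}, stripping out every ingredient that was present only to enforce acyclicity. First, I would dispatch the degenerate case $V=\emptyset$ by returning the single weighted partition $(\emptyset,w)$ whose weight $w$ is $opt$-optimal in $\cA$, since $\emptyset$ is the only partition of the empty set. For $V\neq\emptyset$, I would fix an element $v_0\in V$, set $\tricuts(V):=\{(V_1,V_2)\mid V_1\duni V_2=V\textrm{ and }v_0\in V_1\}$, and reuse verbatim the $(\pis(V),\tricuts(V))$-matrix $C$ over $\bF_2$ together with the $(\pis(V),\pis(V))$-matrix $M$ with $M[p,q]=1$ iff $p\sqcup q=\{V\}$. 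The same parity count of the cuts refining $p\sqcup q$ (odd iff $p\sqcup q=\{V\}$, by the $v_0$ trick) gives the factorization $M=C\cdot C^{t}$.

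Second, and this is where the proof becomes simpler than Theorem~\ref{thm:reduce1}: since the representation notion of Definition~\ref{defn:par-rep-j} asks only for $p\sqcup q=\{V\}$ and carries no $\acy$ side condition, I would \emph{not} split $\cA$ into the classes indexed by $|V|-\block{p}$. Instead I would restrict $C$ to the rows indexed by the partitions occurring in $\cA$ and feed this restriction to Lemma~\ref{lem:optG} to extract a maximum-weight basis $\cB$ of its row space (for $opt=\min$ I would first replace each weight $w$ by $W-w$ for a large fixed $W$, turning a maximum-weight basis into a minimum-weight one while staying in $\bN$). Because $C$ has only $|\tricuts(V)|=2^{|V|-1}$ columns, the rank, hence $|\cB|$, is at most $2^{|V|-1}$; with no stratification this directly yields the size bound $2^{|V|-1}$ rather than the $|V|\cdot2^{|V|-1}$ of Theorem~\ref{thm:reduce1}. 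Lemma~\ref{lem:optG} applied with $n=|\cA|$ rows and $m=2^{|V|-1}$ columns runs in $O(nm^{\omega-1})=|\cA|\cdot 2^{(\omega-1)|V|}\cdot|V|^{O(1)}$, as claimed.

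Third, for correctness I would show that $\cB$ represents $\cA$. For each $p\in\cA$, write its $C$-row as an $\bF_2$-sum of rows of $\cB$ over a subset $\cB(p)\subseteq\cB$; then $M=C\cdot C^{t}$ yields $M[p,r]=\sum_{q\in\cB(p)}M[q,r]$ for every $r\in\pis(V)$, exactly as in Theorem~\ref{thm:reduce1}. Hence whenever $p\sqcup r=\{V\}$ there is $q\in\cB(p)$ with $q\sqcup r=\{V\}$. The standard matroid-exchange argument then closes the gap: if some $(p,w)\in\cA\setminus\cB$ with $p\sqcup r=\{V\}$ were strictly better than $\opt(\cB,r)$, picking a witness $q\in\cB(p)$ with $q\sqcup r=\{V\}$ and exchanging $p$ for $q$ would give a basis of strictly better weight, contradicting the maximality of $\cB$; this proves $\opt(\cB,r)=\opt(\cA,r)$ for all $r$, and symmetrically for $opt=\min$ through the transformed weights.

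The only genuine content, as in Theorem~\ref{thm:reduce1}, is the identity $M=C\cdot C^{t}$ over $\bF_2$ secured by the fixed pivot $v_0$; everything else is bookkeeping. The main point to recognize is thus that dropping the acyclicity constraint removes the need to stratify $\cA$ by $\block{p}$, which is precisely what lowers the size bound from $|V|\cdot2^{|V|-1}$ to $2^{|V|-1}$. This is exactly the statement and proof of \cite{BodlaenderCKN15}.
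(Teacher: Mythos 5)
Your proof is correct and takes essentially the same approach as the paper: the statement is imported by citation from \cite{BodlaenderCKN15}, and your reconstruction is precisely that cut-matrix argument, i.e., the proof of Theorem~\ref{thm:reduce1} with the stratification by $\block{p}$ (needed there only for the $\acy$ condition) removed, which is indeed exactly what lowers the size bound from $|V|\cdot 2^{|V|-1}$ to $2^{|V|-1}$. Your handling of $opt=\min$ via the transformation $w\mapsto W-w$ is a correct way to cover the minimization variant used in Section~\ref{sec:dom}.
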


We use the following function to upper bound the amount of information we need to store in our dynamic programming tables concerning the $(\sigma,\rho)$-domination.
For every non-empty finite or co-finite subset $\mu\subseteq \bN$, we define $d(\mu)$ such as 
\begin{align*}
d(\mu):=\begin{cases}
0 & \text{if } \mu=\bN,\\
1+ \min (\max(\mu), \max(\bN\setminus \mu))  &\text{otherwise.}
\end{cases}
\end{align*}
For example, $d(\bN^+)=1$ and for every $c\in \bN$, we have $d( \{0,\dots,c\})=c+1$.

 The definition of $d$ is motivated by the following observation which is due to the fact that, for all $\mu\subseteq \bN$, if $d(\mu)\in \mu$, then $\mu$ is co-finite and contains $\bN\setminus\{1,\dots,d(\mu)\}$.
\begin{fact}\label{fact:d}
	For every $a,b\in \bN$ and $\mu$ a finite or co-finite subset of $\bN$, we have $a+b\in \mu$ if and only if $\min(d, a+b)\in\mu$.
\end{fact}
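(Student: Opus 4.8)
The plan is to isolate the one nontrivial property behind the statement, namely that membership in $\mu$ is \emph{constant on the tail} $\{d(\mu),d(\mu)+1,\dots\}$, and then derive the claimed equivalence by a one-line split according to whether $a+b$ lies below or above the threshold $d(\mu)$. Throughout I read $\min(d,a+b)$ as the abbreviation $\min(d(\mu),a+b)$ used in the surrounding text, and I take $\mu$ to be non-empty as in the definition of $d$.

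First I would record the tail property: for every $n\geq d(\mu)$ one has $n\in\mu \Leftrightarrow d(\mu)\in\mu$. This is checked by the same three-way case split that defines $d$. If $\mu=\bN$, then $d(\mu)=0$ and every integer lies in $\mu$, so the tail is entirely inside $\mu$. If $\mu$ is finite, then $\bN\setminus\mu$ is infinite, whence $\min(\max(\mu),\max(\bN\setminus\mu))=\max(\mu)$ and $d(\mu)=1+\max(\mu)$; thus every $n\geq d(\mu)$ exceeds $\max(\mu)$ and lies outside $\mu$, so the tail is entirely outside $\mu$. Finally, if $\mu$ is co-finite but distinct from $\bN$, then $\mu$ is infinite and $\bN\setminus\mu$ finite, whence $d(\mu)=1+\max(\bN\setminus\mu)$; thus every $n\geq d(\mu)$ exceeds $\max(\bN\setminus\mu)$ and therefore belongs to $\mu$, so the tail is entirely inside $\mu$. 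In each case $\mu$ is constant on the tail, which is exactly the asserted equivalence for all $n\geq d(\mu)$.

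With the tail property in hand I would finish by distinguishing the position of $a+b$ relative to $d(\mu)$. If $a+b\leq d(\mu)$, then $\min(d(\mu),a+b)=a+b$ and the equivalence $a+b\in\mu \Leftrightarrow \min(d(\mu),a+b)\in\mu$ is a tautology. If instead $a+b\geq d(\mu)$, then $\min(d(\mu),a+b)=d(\mu)$, and applying the tail property to $n=a+b$ gives $a+b\in\mu \Leftrightarrow d(\mu)\in\mu$, which is precisely $a+b\in\mu \Leftrightarrow \min(d(\mu),a+b)\in\mu$. The boundary value $a+b=d(\mu)$ is covered by either branch since $\min(d(\mu),d(\mu))=d(\mu)$, so all cases are handled. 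I expect no genuine obstacle here: the fact is a bookkeeping lemma whose entire content is that $d(\mu)$ is a uniform bound beyond which $\mu$ no longer changes, and the only point needing minor care is the convention that one always takes the $\max$ on the finite side of $\mu$ versus $\bN\setminus\mu$, which is exactly what the $\min$ in the definition of $d$ enforces.
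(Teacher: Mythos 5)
Your proof is correct and follows essentially the same route as the paper, which does not give a formal proof but justifies the fact by the single remark that if $d(\mu)\in\mu$ then $\mu$ is co-finite and contains $\bN\setminus\{1,\dots,d(\mu)\}$ — i.e.\ exactly your tail property, of which you supply the full three-way case analysis and the final split on the position of $a+b$. No gaps; the only interpretive point is reading $\min(d,a+b)$ as $\min(d(\mu),a+b)$, which is the intended meaning (and your argument works verbatim for any threshold at least $d(\mu)$, so the later choice $d=\max\{d(\sigma),d(\rho)\}$ is also covered).
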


Let us describe with a concrete example the information we need concerning the $(\sigma,\rho)$-domination.
We say that a set $D\subseteq V(G)$ is a $2$-dominating set if every vertex in $V(G)$ has at least $2$ neighbors in $D$. 
It is worth noticing that a 2-dominating set is an $(\bN\setminus\{0,1\}, \bN\setminus\{0,1\})$-dominating set and $d(\bN\setminus\{0,1\})=2$.
Let $H$ be a $k$-labeled graph used in an irredundant $k$-expression of a graph $G$.
Assuming $D_H\subseteq V(H)$ is a subset of a 2-dominating set $D$ of $G$, we would like to characterize the sets $Y\subseteq V(G)\setminus V(H)$ such that $D\cup Y$ is a $2$-dominating set of $G$. 
One first observes that $D_H$ is not necessarily a 2-dominating set of $H$, and $D\setminus D_H$ also is not necessarily a 2-dominating set of $G[V(G)\setminus V(H)]$.
Since we want to 2-dominate $V(H)$, we need to know for each vertex $x$ in $V(H)$ how many neighbors it needs in addition to be 2-dominated by $D_H$.
For doing so, we associate $D_H$ with a sequence $R'=(r_1',\dots,r_k')$ over $\{0,1,2\}^k$ such that, for each $i\in [k]$, every vertex in $lab_H^{-1}(i)$ has at least $2-r_i'$ neighbors in $D_H$. 
For example, if $r_i'=1$, then every $i$-vertex has at least one neighbor in $D_H$. 
Notice that $D_H$ can be associated with several such sequences.
This sequence is enough to characterize what we need to 2-dominate $V(H)$ since the vertices with the same label in $H$ have the same neighbors in the graph $(V(G), E(G)\setminus E(H))$ and each vertex needs at most 2 additional neighbors to be 2-dominated.

In order to update the sequence $R'=(r_1',\dots,r_k')$ associated with a set $D_H$, we associate with $D_H$ another sequence $R=(r_1,\dots,r_k)$ over $\{0,1,2\}^k$ such that $r_i$ corresponds to the minimum between $2$ and the number of $i$-vertices in $D_H$, for each $i\in[k]$.
This way, when we apply an operation $add_{i,j}$ on $H$, we know that every $i$-vertex has at least $2-r_i'+r_j$ neighbors in $D_H$ in the graph $add_{i,j}(H)$. 
For example, if $r_j=1$ and every $i$-vertex has at least one neighbor in $H$ that belongs $D_H$, then we know that every $i$-vertex is 2-dominated by $D_H$ in the graph $add_{i,j}(H)$.

\medskip

Let $(\sigma,\rho)$ be a fixed pair of non-empty finite or co-finite subsets of $\bN$. 
Let's first show how to compute an optimum connected $(\sigma,\rho)$-dominating set. We consider node-weighted Steiner tree and connected co-$(\sigma,\rho)$-dominating set at the end of the section.
Let $d:=\max \{d(\sigma),d(\rho)\}$.

The following definitions formalize the intuitions we give for 2-dominating set to the $(\sigma,\rho)$-domination.

\begin{definition}[Certificate graph of a solution]\label{def:certifdom}
	Let $G$ be a $k$-labeled graph, and $R':=(r'_1,\ldots,r'_k)\in \{0,\dots,d\}^k$.
	Let	$V^+:=\{v_1^1,\ldots, v_d^1,v_1^2,\ldots, v_d^2,\ldots, v^k_1,\ldots,v_d^k\}$ be a set disjoint from $V(G)$ and of size $d\cdot k$.
	Let $V^+(R'):=V^+_1(R') \cup \cdots \cup V^+_k(R')$ with
	\begin{align*}
	V^+_i(R') &:=\begin{cases} \emptyset & \textrm{if $r_i'=0$},\\ 
	\{v_1^i,\ldots,v_{r'_{i}}^i\} & \textrm{otherwise}. \end{cases}
	\end{align*}
	The certificate graph of $G$ with respect to $R'$, denoted by $\CG(G,R')$, is the graph $(V(G)\cup V^+(R'),E(G) \cup  E^+_1\cup \cdots \cup E^+_k)$ with \[ E^+_i=\{ \{v, v^i_t \} \mid v_t^i\in V^+_i(R') \wedge v\in lab_G^{-1}(i)\}. \] It is worth noticing that $E_i^+$ is empty if $lab_G^{-1}(i)=\emptyset$ or $V_i^+(R')=\emptyset$. 
\end{definition} 

\begin{definition}\label{def:R}
	Let $G$ be a $k$-labeled graph. For each $D\subseteq V(G)$ and $i\in [k]$, let $
	r_{i,G}^d(D):=\min(d,|lab_G^{-1}(i)\cap D|)$
	and let $r_G^d(D):=(r^d_{1,G}(D),\ldots,r^d_{k,G}(D))$. 
\end{definition}
The sequence $r^d_G(D)$ describes how each label class is intersected by $D$ up to $d$ vertices. 
Moreover, notice that $|\{r^d_{G}(D)\mid D\subseteq V(G)\}|\leq  |\{0,\dots,d\}^k|\leq (d+1)^k$.
\medskip

The motivation behind these two sequences is that for computing an optimum $(\sigma,\rho)$-dominating set, it is enough to compute, for any $k$ labeled graph $H$ used in an irredundant $k$-expression of a graph $G$ and for each $R,R'\in\{0,\dots,d\}^k$, the optimum weight of a set $D\subseteq V(H)$ such that
\begin{itemize}
	\item $r_H(D)=R$,
	\item $D\cup V^+(R')$ $(\sigma,\rho)$-dominates $V(H)$ in the graph $\CG(H,R')$.
\end{itemize}
It is worth noticing that the sequences $r_H^d(D)$ and $R'$ are similar to the notion of \emph{$d$-neighbor equivalence} introduced in \cite{BuiXuanTV13}.

\medskip
We can assume w.l.o.g. that $d\ne 0$, that is $\sigma\ne \bN$ or $\rho\ne \bN$. 
Indeed, if $\sigma=\rho=\bN$, then the problem of finding a minimum (or maximum) (co-)connected $(\sigma,\rho)$-dominating set is trivial.
For computing an optimum connected $(\sigma,\rho)$-dominating set, we will as in Section \ref{sec:fvs} keep partitions of a subset of labels corresponding to the connected components of the sets $D$ (that are candidates for the $(\sigma,\rho)$-domination).
As $d\neq 0$, we know through $r_H(D)$ the label classes intersected by $D$. 
Moreover, we know through $R'=(r_1',\dots,r_k')$ whether the $i$-vertices in such a $D$ will have a neighbor in any extension $D'$ of $D$ into a $(\sigma,\rho)$-dominating set.
It is enough to keep the partition of the labels $i$ with $r_{i,H}(D)\neq 0$ and $r_i'\neq 0$ that corresponds to the equivalence classes of the equivalence relation $\sim$ on $\{i\in[k]\setminus r_i\ne 0 \text{ and } r_i'\ne 0\}$ where $i\sim j$ if and only if an $i$-vertex is connected to a $j$-vertex in $\CG(G,R')[D\cup V^{+}]$.

We use the following definition to simplify the notations.
\begin{definition}\label{def:active}
	For $R=(r_1,\dots,r_k),R'=(r_1',\dots,r_k')\in \{0,\dots,d\}^k$, we define $\Active(R,R')=\{i\in[k] \mid r_i\ne 0 \text{ and } r_i'\ne 0\}$.
\end{definition}

We are now ready to define the sets of weighted partitions which representative sets we manipulate in our dynamic programming tables.

\begin{definition}[{Weighted partitions in $\cD_G[R,R']$}]\label{defn:tabdom}
 	Let $G$ be a $k$-labeled graph, and $R,R'\in \{0,\ldots,d\}^k$. 
	The entries of $\cD_G[R,R']$ are all the weighted partitions $(p,w)\in \Pi(\Active(R,R'))\times \bN$ so that there exists a set $D\subseteq V(G)$ such that $\wc(D)=w$, and
	\begin{enumerate}
		\item $r^d_G(D)=R$,
		\item $D\cup V^+(R')$ $(\sigma,\rho)$-dominates $V(G)$ in $\CG(G,R')$,
		\item if $\Active(R,R')=\emptyset$, then $G[D]$ is connected, otherwise for each connected component $C$ of $G[D]$, we have $C\cap lab_G^{-1}(\Active(R,R'))\neq \emptyset$,
		\item $p=\Active(R,R')/ \sim$ where $i\sim j$ if and only if an $i$-vertex is connected to a $j$-vertex in $\CG(G,R')[D\cup V^{+}(R')]$.
	\end{enumerate} 
\end{definition}
Conditions (1) and (2) guarantee that $(p,w)$ corresponds to a set $D$ that is coherent with $R$ and $R'$. 
Condition (3) guarantees that each partial solution can be extended into a connected graph. 
Contrary to Section \ref{sec:fvs}, the set of labels expected to play a role in the connectivity (\ie $\Active(R,R')$) can be empty. 
In this case, we have to make sure that the weighted partitions represent a connected solution. 
It is worth mentioning that $G[\emptyset]$, \ie the empty graph, is considered as a connected graph.
Observe that for each $(p,w)\in\cD_G[R,R']$, the partition $p$ has the same meaning as in Section \ref{sec:fvs}.
 It is worth noticing that $\sim$ is an equivalence relation because if $i\in \Active(R,R')$, then all the vertices in $lab_G^{-1}(i)$ are connected in $\CG(G,R')$ through the vertex in $V^+_i(R')$.
 In fact, the relation $\sim$ is equivalent to the transitive closure of the relation $\asymp$ where $i \asymp j$ if there exists an $i$-vertex and a $j$-vertex in the same connected component of $G[D]$. 
\medskip

In the sequel, we call a pair $(D,(p,\wc(D)))$ a \emph{candidate solution} in $\cD_G[R,R']$ if $p=\Active(R,R')/ \sim$ where $i\sim j$ if and only if an $i$-vertex is connected to a $j$-vertex in $\CG(G,R')[D\cup V^{+}(R')]$.
If in addition Conditions (1)-(3) are satisfied, we call $(D,(p,\wc(D)))$ a \emph{solution} in $\cD_G[R,R']$.

\medskip

It is straightforward to check that the weight of an optimum solution is the optimum over all $R\in \{0,\ldots,d\}^k$ of
$opt\{w\mid (\emptyset,w)\in \cD_G[R,\{0\}^k]\}$ for a $k$-labeled graph $G$. 

Analogously to Section \ref{sec:fvs} our dynamic programming algorithm will store a subset of $\cD_G[R,R']$ of size $2^{k-1}$ that represents $\cD_G[R,R']$.  Recall that we suppose that any graph is given with an irredundant $k$-expression.

\medskip

\paragraph{\bf Computing $tab_G$ for $G=\mathbf{1}(x)$} For $(r_1)\in \{0,\ldots,d\},(r'_1)\in \{0,\ldots,d\}$, let
{\small \begin{align*}
  tab_G[(r_1),(r_1')] &:=\begin{cases}
		  \{(\emptyset,0)\} & \textrm{if $r_1=0$ and $r_1'\in \rho$},\\ 
     \{(\{\{1\}\},\wc(x))\} & \textrm{if $r_1=1$ and $r_1'\in \sigma$,}\\
     \emptyset& \text{otherwise.}\end{cases}
\end{align*}}

Since there is only one vertex in $G$ labeled 1, $\cD_G[(r_1),(r_1')]$ is empty whenever $r_1\notin\{0,1\}$. 
Also, the possible solutions are either to put $x$ in the solution ($r_1=1$) or to discard it ($r_1=0$); in both cases we should check that $x$ is $(\sigma,\rho)$-dominated by $V^+_1(R')$.
We deduce then that $tab_G[(r_1),(r_1')]=\cD_G[(r_1),(r_1')]$.

\medskip
\paragraph{\bf Computing $tab_G$ for $G=ren_{i\to j}(H)$} We can suppose that $H$ is $k$-labeled and that $i=k$. 
Let $R=(r_1,\dots,r_{k-1}),R'=(r'_1,\dots,r'_{k-1})\in \{0,\ldots,d\}^{k-1}$.

To compute $tab_G[R,R']$, we define $\cS$, the set of tuples coherent with respect to $R$ and Condition (1), as follows
{\small \begin{align*}
	\cS:= \{ (s_1,\dots,s_k)\in \{0,\dots,d\}^k \mid  r_j = \min(d, s_k +s_j) \text{ and } \forall\ell \in [k]\setminus\{i,j\}, s_\ell=r_\ell \}.
	\end{align*}}
It is worth noticing that we always have $(r_1,\dots,r_{k-1},0)\in\cS$. Moreover, if $r_j=0$, then $\cS=\{ (r_1,\dots,r_{k-1},0)\}$.
We define also $S'=(s'_1,\dots,s_k')\in \{0,\dots,d\}^k$ with $s'_k=r'_j$ and $s_\ell' = r'_\ell$ for all $\ell \in [k-1]$. 
Notice that $S'$ is the only tuple compatible with $R'$ and Condition (2) since for every $v\in lab_G^{-1}(j)$,  the number of vertices in $V^{+}(R')$ adjacent to $v$ in $\CG(G,R')$ is the same as the number of vertices in $V^{+}(S')$ adjacent to $v$ in $\CG(H,S')$. 

\begin{enumerate}[(a)]
	
	\item If $r_j=0$ or $r_j'=0$, then we let \[ tab_G[R,R']:=\reduce\left(\rmc\left(\bigcup_{S\in\cS}tab_H[S,S']\right)\right). \]
	In this case, the vertices in $lab_G^{-1}(j)$ are not expected to play a role in the future as either we expect no neighbors for them in the future or they are not intersected by the partial solutions.
	
	\item  Otherwise, we let $tab_G[R,R']:=\reduce(\rmc(\cA))$ with
	\begin{align*}
		\cA:= \proj\left(\{k\}, \bigcup_{S\in\cS} \join ( tab_H[S,S'], \{ (\{ \{ j, k  \}\}, 0 ) \})\right).
	\end{align*}
	Intuitively, we put in $tab_G[R,R']$ all the weighted partitions $(p,w)$ from the tables $tab_H[S,S']$ with $S\in \cS$, after merging the blocks in $p$ containing $k$ and $j$, and removing $k$ from the resulting partition.
\end{enumerate}
	  	
\begin{lemma}\label{lem:rend} Let $G=ren_{k\to j}(H)$ with $H$ a $k$-labeled graph. For all $R=(r_1,\dots,r_{k-1}),R'=(r'_1,\dots,r_{k-1}')\in \{0,\dots,d\}^{k-1}$, the table $tab_G[R,R']$ is a representative set of  $\cD_G[R,R']$ assuming that $tab_H[S,S']$ is a representative set of $\cD_H[S,S']$ for all $S\in \{0,\ldots,d\}^k$.
\end{lemma}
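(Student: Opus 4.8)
The plan is to follow the template of the proofs of Lemmas~\ref{lem:ren} and \ref{lem:join}. Since $\reduce$, $\rmc$, $\proj$, $\join$ and finite unions all preserve representation (Theorem~\ref{thm:reduce2}, Lemma~\ref{lem:par-repj}, the union being immediate from the definition of $\opt$ exactly as in the proof of Lemma~\ref{lem:par-rep}), it suffices to establish the set equality $\cD_G[R,R']=\cA$, where $\cA$ is the construction obtained by substituting the exact sets $\cD_H[S,S']$ for the tables $tab_H[S,S']$. I would first record the two elementary facts that make the indices line up. First, $S\in\cS$ forces $r_j=\min(d,s_j+s_k)$ together with $s_\ell=r_\ell$ off $\{j,k\}$, and the capped-sum identity $\min(d,a+b)=\min(d,\min(d,a)+\min(d,b))$ guarantees that the true $j$-count of any $D$ in $G$ splits correctly into its $j$- and $k$-counts in $H$. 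Second, $S'$ is forced because, after renaming, a former $k$-vertex is a $j$-vertex of $G$ and hence sees exactly $r'_j=s'_k$ hypothetical neighbours in $V^+$.

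For the forward inclusion I would take a solution $(D,(p,w))$ in $\cD_G[R,R']$, view $D\subseteq V(H)$, set $S:=r^d_H(D)$ (so $S\in\cS$), and show that $(D,(p_H,w))$ is a solution in $\cD_H[S,S']$, where $p_H$ is the partition making it a candidate solution. Condition~(1) holds by construction. For Condition~(2) the key observation is that the per-vertex $(\sigma,\rho)$-domination counts in $\CG(G,R')$ and $\CG(H,S')$ are \emph{literally equal}: the edge sets of $G$ and $H$ coincide, a former $j$-vertex keeps its $r'_j=s'_j$ hypothetical neighbours, and a former $k$-vertex trades its $r'_j$ neighbours in $V^+_j(R')$ for the same number $s'_k=r'_j$ of neighbours in $V^+_k(S')$. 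For Condition~(3) I would prove $lab_G^{-1}(\Active(R,R'))\cap D=lab_H^{-1}(\Active(S,S'))\cap D$: checking it vertex by vertex, an $x\in D$ lies in either set exactly when the $R'$- (resp.\ $S'$-) value of its label is nonzero, and these agree precisely because $s'_k=r'_j=r'_{lab_G(x)}$; hence the two component-intersection requirements on $G[D]=H[D]$ coincide. Finally I would check that $\join$ing $(p_H,w)$ with $(\{\{j,k\}\},0)$ and then projecting out $k$ returns exactly $(p,w)$: merging the blocks of $j$ and $k$ models the fact that in $G$ all former $j$- and $k$-vertices become $j$-vertices and are mutually connected through the nonempty $V^+_j(R')$ (here $r'_j\neq 0$), while $G[D]=H[D]$ leaves every other adjacency unchanged; the projection never discards the result because $k$ always shares its block with $j$. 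Case~(a) collapses to the bare union, since there neither $j$ nor $k$ is active (in $G$ because $r_j=0$ or $r'_j=0$; in $H$ because correspondingly $s_j=s_k=0$ or $s'_j=s'_k=0$), so the partitions sit on the common ground set $\Active(R,R')=\Active(S,S')$ and no merging is needed.

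The backward inclusion is symmetric: starting from $(p,w)\in\cA$, I would take the witnessing $S\in\cS$ and a solution $(D,(p_H,w))$ in $\cD_H[S,S']$ from which $(p,w)$ is produced, and verify that $(D,(p,w))$ satisfies Conditions~(1)--(4) of Definition~\ref{defn:tabdom} in $\cD_G[R,R']$ using the same three identities: the capped counts for~(1), the equal domination counts for~(2), the $lab^{-1}(\Active)\cap D$ identity for~(3), and the block-merge description of $\proj(\{k\},\join(\cdot,\{\{j,k\}\}))$ for~(4).

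I expect the main obstacle to be Case~(b), and within it the correct bookkeeping of the partition under $\proj(\{k\},\join(\cdot,\{\{j,k\}\}))$ according to whether $s_j$ and $s_k$ vanish: one must argue that when $s_j=0$ (so $D$ has no $j$-vertex) the merged block is represented solely by the former $k$-class, and symmetrically when $s_k=0$, so that the final ground set is always $(\Active(S,S')\setminus\{k\})\cup\{j\}=\Active(R,R')$ and the blocks coincide with the connected components of $\CG(G,R')[D\cup V^+(R')]$. By contrast, the domination argument is clean here, because renaming changes no edge and preserves each vertex's $V^+$-degree, so that Fact~\ref{fact:d} is not even needed (unlike in the $add_{i,j}$ step, where the counts genuinely change).
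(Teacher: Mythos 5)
Your proposal is correct and follows essentially the same route as the paper's proof: reduce to the set equality via preservation of representation, use the capped-sum identity to show $r^d_H(D)\in\cS$ and the forced choice of $S'$ to transfer the domination condition (the counts being literally unchanged since renaming adds no edges), handle Condition (3) by matching the active label classes met by $D$ in $G$ and $H$, and describe the partition update as merging the blocks of $j$ and $k$ and projecting out $k$. The only (harmless) stylistic difference is that you package Condition (3) as the identity $lab_G^{-1}(\Active(R,R'))\cap D=lab_H^{-1}(\Active(S,S'))\cap D$ where the paper argues component by component.
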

\begin{proof} 
Since the used operators preserve representation, it is enough to prove that each weighted partition added to $tab_G[R,R']$ belongs to $\cD_G[R,R']$, and that
\begin{itemize}
	\item in Case (a), we have $\cD_G[R,R']\subseteq \bigcup_{S\in\cS}\cD_H[S,S']$, and
	\item in Case (b), we have $\cD_G[R,R']\subseteq \cA$ if we let $tab_H[S,S']=\cD_H[S,S']$ for every $S\in\{0,\dots,d\}^k$.
\end{itemize} 

Let $(D,(p,w))$  be a solution in $\cD_G[R,R']$. We start by proving that we have $(p,w)\in \bigcup_{S\in\cS}\cD_H[S,S']$ if we are in Case (a), or $(p,w)\in\cA$ if we are in Case (b).
By the definition of $\cS$, we deduce that $r_H(D)\in \cS$. Indeed, $r_{j,G}(D)=\min(d,| D \cap lab_G^{-1}(j)| )$ equals $\min(d, r^d_{j,H}(D)+ r^d_{k,H}(D))$ because $lab_G^{-1}(j)=lab_H^{-1}(\{j,k\})$.

Let $p'\in \Pi(\Active(r_H^d(D),S'))$ such that $(D,(p',w))$ is a candidate solution in $\cD_H[r_H^d(D),R']$.
We claim that $(D,(p',w))$ is a solution in $\cD_H[r_H^d(D),R']$.
Condition (1) is trivially satisfied. 
We deduce from the definition of $S'$ that Condition (2) is satisfied. 
We claim that Condition (3) is satisfied.
If $R'=\{0\}^{k-1}$, then we have $S'=\{0\}^{k}$ and Condition (3) is satisfied because $H[D]=G[D]$ must be connected since $(D,(p,w))$ is a solution in $\cD_G[R,R']$.
Otherwise, every connected component $C$ of $G[D]=H[D]$ intersects $lab_G^{-1}(\Active(R,R'))$.
Let $C$ be a connected component of $H[D]$.
If $C$ contains a vertex labeled $l$ in $G$ with $l\in\Active(R,R')\setminus \{ j\}$, then by definition of $S'$, we have $\ell \in\Active(r_H^d(D),S')$.
Suppose now, $C$ contains a vertex $v$ in $lab_G^{-1}(j)$ and $j\in \Active(R,R')$.
If $v$ is labeled $k$ in $H$, then $r_{k,H}^d(D)\ne0$ and thus $k$ belongs to $\Active(r_H^d(D),S')$ because $s'_k=r_j'\ne 0$.
Symmetrically, if $v$ is labeled $j$ in $H$, then $j\in \Active(r_H^d(D),S')$.
In both cases, $C$ intersects $lab_H^{-1}(\Active(r_H^d(D),S'))$.
We can conclude that Condition (3) is satisfied.
Hence, $(D,(p',w))$ is a solution in $\cD_H[r_H^d(D),R']$.

If $r_j=0$ or $r_j'=0$ (Case (a)), then it is easy to see that $p=p'$ from Equation (3) and thus $(p,w)\in tab_H[S,S']$.

Assume now that $r_j\ne 0$ and $r_j'\ne 0$ (Case (b)). 
Let $D_k:=D\cap lab_H^{-1}(k)$ and $D_j:=D\cap lab_H^{-1}(j)$.
Observe that the graph $\CG(G,R')[D\cup V^+(R')]$ can be obtained from the graph $\CG(H,S')[D\cup V^+(S')]$ by removing the vertices in $V^{+}_k(R')$ and by adding the edges between $V^+_j(R')$ and $D_k$.
Hence, $p$ is obtained from $p'$ by merging the blocks containing $j$ and $k$, and by removing $k$.
Thus, we can conclude that $(p,w)\in \cA$.

\medskip

It remains to prove that each weighted partition $(p,w)$ added to $tab_G[R,R']$ belongs to $\cD_G[R,R']$. 
Let $(p,w)$ be a weighted partition added to $tab_G[R,R']$ from $(p',w)\in tab_H[S,S']$, and let $D\subseteq V(H)$ such that $(D,(p',w))$ is a solution in $\cD_H[S,S']$.
We want to prove that $(D,(p,w))$ is a solution in $\cD_G[R,R']$.
From the definitions of $\cS$ and $S'$, we deduce that $D$ satisfies Conditions (1) and (2).
We deduce that $D$ satisfies also Condition (3) from Equation (3), the fact that $lab_G^{-1}(j)=lab_H^{-1}(\{j,k\})$, and because $G[D]=H[D]$. 

If $r_j=0$ or $r_j'=0$, then it is easy to see from Equation (3) that $p=p'$ and that $(D,(p,w))$ satisfies Condition (4).
Otherwise, we deduce from the previous observations concerning the differences between $\CG(G,R')[D\cup V^+(R')]$ and $\CG(H,S')[D\cup V^+(S')]$, that $(D,(p,w))$ satisfies Condition (4).
In both cases, we can conclude that $(D,(p,w))$ is a solution in $\cD_G[R,R']$.
\end{proof}

\medskip
\paragraph{\bf Computing $tab_G$ for $G=add_{i,j}(H)$} We can suppose that $H$ is $k$-labeled. Let $R=(r_1,\dots,r_k)\in \{0,\ldots,d\}^k,R'=(r'_1,\dots,r_k')\in \{0,\ldots,d\}^k$. 

Let $S':=(s'_1,\dots,s'_k)\in \{0,\dots,d\}^k$ such that $s_i':=\min(d,r'_i+r_j)$, $s_j':=\min(d,r'_j+r_i)$, and $s'_\ell = r'_\ell$ for all $\ell \in [k]\setminus \{i,j\}$. It is easy to see that $S'$ is the only tuple compatible with $R'$ and Condition (2). 
\begin{enumerate}[(a)]
	\item 	If $\Active(R,R')=\emptyset$, then we let \[ tab_G[R,R']:=\rmc\left( \{ (\emptyset, w ) \mid (p,w) \in tab_H[R,S'] \} \right). \] 
	In this case, the partial solution in $tab_G[R,R']$ are associated with connected solutions by Condition (3). 
	The partial solutions in $tab_H[R,S']$ trivially satisfy this condition in $G$. 
	Notice that $tab_G[R,R']$ represents $\cD_G[R,R']$ because the function $f:2^{\Pi(V)\times \bN}\to 2^{\{\emptyset\}\times \bN}$ with $f(\cA):=\{(\emptyset,w)\mid (p,w)\in\cA\}$ preserves representation.
	
	\item If $r_i=0$ or $r_j=0$, we let $tab_G[R,R']:= tab_H[R,S']$. We just copy all the solutions not intersecting $lab_G^{-1}(i)$ or $lab_G^{-1}(j)$. In this case, the connectivity of the solutions is not affected by the $add_{i,j}$ operation.
	
	\item Otherwise, we let $tab_G[R,R']:=\rmc(\cA)$, where
	\begin{align*}
		\cA:=\proj(\{t \in \{ i,j\}\mid r_t'=0\}, \join(tab_H[R,S'], \{ (\{ \{ i,j\}\}, 0) \})).
	\end{align*}
	In this last case, we have $i,j\in\Active(R,S')$. We put in $tab_G[R,R']$, the weighted partitions of 
	$tab_H[R,S']$ after merging the blocks containing $i$ and $j$, and removing $i$ or $j$ if, respectively, $r_i'=0$ and $r_j'=0$, \ie, if they don't belong, respectively, to $\Active(R,R')$. 
\end{enumerate}
It is worth noticing that if $|tab_H[R,S']|\leq 2^{k-1}$, then we have $|tab_G[R,S']|\leq 2^{k-1}$. Thus, we do not have to use the function $\reduce$ to compute $tab_G$.

\begin{lemma}\label{lem:addd} Let $G=add_{i,j}(H)$ be a $k$-labeled graph. For all tuples $R=(r_1,\dots,r_k),R'=(r'_1,\dots,r_k')\in \{0,\ldots,d\}^k$, the table $tab_G[R,R']$ is a representative set of  $\cD_G[R,R']$ assuming that $tab_H[R,S']$ is a representative set of $\cD_H[R,S']$.
\end{lemma}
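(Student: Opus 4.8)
The plan is to follow the scheme of Lemma~\ref{lem:rend}. Since $\rmc$, $\proj$, $\join$ and the collapsing map $\cA\mapsto\{(\emptyset,w)\mid(p,w)\in\cA\}$ all preserve representation (Lemma~\ref{lem:par-repj} together with the remark in Case~(a)), it suffices to prove that, after substituting the ideal table $\cD_H[R,S']$ for $tab_H[R,S']$ in the relevant case, the resulting set of weighted partitions is exactly $\cD_G[R,R']$. I would establish this by the two usual inclusions: \emph{soundness}, that every weighted partition produced by the formula lies in $\cD_G[R,R']$, and \emph{completeness}, that every $(p,w)\in\cD_G[R,R']$ is produced. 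Throughout I use that $V(G)=V(H)$ and $lab_G=lab_H$, so a single vertex set $D$ serves both sides and Condition~(1) of Definition~\ref{defn:tabdom} reads identically ($r^d_G(D)=r^d_H(D)$); the only new edges of $G$ form the complete bipartite connection between $D\cap lab_H^{-1}(i)$ and $D\cap lab_H^{-1}(j)$, and they are genuinely new because the $k$-expression is irredundant.

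The first ingredient I would isolate is the equivalence of Condition~(2) on the two sides: for every $D$ with $r^d_H(D)=R$, the set $D\cup V^+(R')$ $(\sigma,\rho)$-dominates $V(G)$ in $\CG(G,R')$ if and only if $D\cup V^+(S')$ $(\sigma,\rho)$-dominates $V(H)$ in $\CG(H,S')$. This is exactly what $s_i'=\min(d,r_i'+r_j)$ and $s_j'=\min(d,r_j'+r_i)$ are engineered for: in $G$ every $i$-vertex gains the $r_j$ vertices of $D\cap lab_H^{-1}(j)$ as new neighbours, which is simulated in $H$ by the extra pendants of $V_i^+(S')$, and symmetrically for $j$-vertices, while a vertex with label $\ell\notin\{i,j\}$ has the same neighbourhood and $s_\ell'=r_\ell'$. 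The capping at $d$ is justified by Fact~\ref{fact:d}, which lets me replace an actual neighbour count by its minimum with $d$ when testing membership in $\sigma$ or $\rho$. This equivalence makes Conditions~(1)--(2) transfer freely between $\cD_G[R,R']$ and $\cD_H[R,S']$, so the whole argument reduces to connectivity, i.e. Conditions~(3)--(4).

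For the connectivity part I would compare $G[D]$ and $H[D]$: they share a vertex set, and $E(G[D])=E(H[D])$ together with all edges between $D\cap lab_H^{-1}(i)$ and $D\cap lab_H^{-1}(j)$, which are present only when $r_i,r_j\neq0$. In Case~(b) ($r_i=0$ or $r_j=0$) no edge is added, so $G[D]=H[D]$ and, since $\Active(R,R')$ and $\Active(R,S')$ coincide there, the two solution sets are literally equal. In Case~(c) ($r_i,r_j\neq0$ and $\Active(R,R')\neq\emptyset$) the new edges join the components of $H[D]$ carrying $i$-vertices to those carrying $j$-vertices; in terms of partitions this merges the blocks of $p'$ holding $i$ and $j$, which is precisely $\join(\cdot,\{(\{\{i,j\}\},0)\})$, after which I project out whichever of $i,j$ has $r_t'=0$, i.e. lies in $\Active(R,S')\setminus\Active(R,R')$, matching $\proj(\{t\in\{i,j\} : r_t'=0\},\cdot)$. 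In Case~(a) ($\Active(R,R')=\emptyset$) I would first observe that $\Active(R,S')\subseteq\{i,j\}$, and is either empty (when $r_i=0$ or $r_j=0$) or equal to $\{i,j\}$ (when $r_i,r_j\neq0$); collapsing every partition to the empty partition is then correct because in both subcases $G[D]$ is connected exactly when the corresponding $\cD_H[R,S']$-solution satisfies its own Condition~(3).

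The step I expect to be the main obstacle is verifying Condition~(3) in the completeness direction, where one starts from a solution of $\cD_G[R,R']$ and must certify that the same $D$ satisfies the \emph{finer} connectivity requirement of $\cD_H[R,S']$, since a component of $H[D]$ can be strictly contained in its component of $G[D]$ and so need not obviously meet an active label. I would argue that any component of $H[D]$ that got merged inside $G[D]$ is incident to a new $i$--$j$ edge and hence contains an $i$- or $j$-vertex, while $i,j\in\Active(R,S')$ whenever $r_i,r_j\neq0$; an unmerged component already meets $\Active(R,R')\subseteq\Active(R,S')$. The delicate bookkeeping is to check these inclusions of active sets in every case and, in Case~(a), to rule out the degenerate subcase $\Active(R,S')=\{i\}$ (it would force $r_i'\neq0$, contradicting $\Active(R,R')=\emptyset$), so that connectivity of $G[D]$ is genuinely equivalent to the $H$-side condition. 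Once Condition~(3) is settled, Condition~(4) follows directly from how $\join$ merges the $i,j$-blocks and $\proj$ removes the projected labels.
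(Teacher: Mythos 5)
Your proposal is correct and follows essentially the same route as the paper's proof: reduce to the ideal-table identity via representation preservation, transfer Condition (2) through $S'$ using irredundancy and Fact~\ref{fact:d}, and handle Conditions (3)--(4) by comparing $\Active(R,R')$ with $\Active(R,S')$ in the same three cases. The only point where the paper is more explicit than your sketch is the soundness of Condition (3) in Case (c) when $r_i'=r_j'=0$: there one argues by contradiction that a component of $G[D]$ avoiding $lab_G^{-1}(\Active(R,R'))$ would force $\{i,j\}$ to be a block of the joined partition, so that $\proj(\{i,j\},\cdot)$ returns $\emptyset$ --- exactly the filtering mechanism your plan already invokes.
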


\begin{proof} 
	Since the used operators preserve representation, it is enough  to prove that every weighted partition added to $tab_G[R,R']$ belongs to $\cD_G[R,R']$, and that 
\begin{itemize}
	\item in Case (a), we have $\cD_G[R,R']\subseteq \{ (\emptyset, w ) \mid (p,w) \in \cD_H[R,S'] \}$,
	\item  in Case (b), we have $\cD_G[R,R']\subseteq\cD_H[R,S']$, and
	\item in Case (c), we have  $\cD_G[R,R']\subseteq\cA$ if we let $tab_H[R,S']=\cD_H[R,S']$.
\end{itemize}
	
	Let $(D,(p,w))$ be a solution in $\cD_G[R,R']$.
	Let $p'\in \Pi(\Active(R,S'))$ such that $(D,(p',w))$ is a candidate solution in $\cD_H[R,S']$.
	We claim that $(D,(p',w))$ is a solution in $\cD_H[R,S']$.
	Condition (1) is trivially satisfied because $lab_G(v)=lab_H(v)$ for all $v\in V(G)$.
	We claim that Condition (2)  is satisfied, that is $D\cup V^+(S')$ $(\sigma,\rho)$ dominates $V(H)$ in $\CG(H,S')$.
	It is quite easy to see that $D\cup V^+(S')$ $(\sigma,\rho)$ dominates $V(H)\setminus lab_H^{-1}(\{i,j\})$.
	Let $v\in lab_H^{-1}(i)$. We claim that $v$ is $(\sigma,\rho)$ dominated by $D\cup V^+(S')$.
	Because we consider only irredundant $k$-expressions, there is no edge between an $i$-vertex and a $j$-vertex in $H$. 
	Therefore, we have $|N_G(v)\cap D|= |N_H(v)\cap D| + |D\cap lab_G^{-1}(j)|$.
	Since $D\cup V^+(R')$ $(\sigma,\rho)$-dominates $V(G)$, if $v\in D$, then $|N_G(v)\cap D| + r_i'\in \sigma$, otherwise, $|N_G(v)\cap D| + r_i'\in\rho$.
	By Fact \ref{fact:d}, we conclude that $\min(d, |N_H(v)\cap D| + |D\cap lab_G^{-1}(j)| +r_i' )$ belongs to $\sigma$ if $v\in D$, otherwise it belongs to $\rho$.
	As $\min(d, r_i' + |D\cap lab_G^{-1}(j)|) =\min(d, r_i' + r_j)=s_i'$, we deduce that $D\cup V^+(S')$ $(\sigma,\rho)$ dominates $v$.
	Thus, every $i$-vertex is $(\sigma,\rho)$-dominated by $D\cup V^+(S')$ in $\CG(H,S')$.
	Symmetrically, we deduce that every $j$-vertex is $(\sigma,\rho)$-dominated by $D\cup V^+(S')$ in $\CG(H,S')$.
	Hence, Condition (2) is satisfied.
	
	In order to prove that Condition (3) is satisfied, we distinguish the following cases.
	First, suppose that $\Active(R,R')=\emptyset$. 
	By Condition (3), $G[D]$ is connected.
	As $G=add_{i,j}(H)$, the graph $H[D]$ is obtained from $G[D]$ by removing all edges between the $i$-vertices and the $j$-vertices.
	We deduce that either $H[D]$ is connected (if $r_i=0$ or $r_j=0$), or that every connected component of $H[D]$ contains at least one vertex whose label is $i$ or $j$ (otherwise $G[D]$ would not be connected). 
	In both cases, Condition (3) is satisfied.
	
	Assume now that $\Active(R,R')\ne\emptyset$. 
	If $r_i=0$ (resp. $r_j=0$), then, by definition of $S'$, we have $s_j'=r_j'$ (resp. $s_i'=r_i'$).
	Therefore, if $r_i=0$ or $r_j=0$, then we have $\Active(R,R')=\Active(R,S')$, and Condition (3) is trivially satisfied.
	Otherwise, if $r_i\ne0$ and $r_j\ne0$, then, by definition of $S'$, we have $s_i'\ne 0$, $s_j'\ne 0$, and thus $i,j\in \Active(R,S')$. 
	In this case, we conclude that Condition (3) is satisfied because, for every connected component $C$ of $H[D]$, either $C$ is a connected component of $G[D]$, or $C$ contains at least one vertex whose label is $i$ or $j$.
	
	Hence, $(D,(p',w))$ is a solution in $\cD_H[R,S']$.
	If $\Active(R,R')=\emptyset$, then $p=\emptyset$ and $(p,w)$ is added in $tab_G[R,R']$.
	Else if $r_i=0$ or $r_j=0$, then $p'=p$ and $(p,w)$ is also added to $tab_G[R,R']$.
	Otherwise, it is easy to see that $p$ is obtained from $p'$ by merging the blocks of $p'$ containing $i$ and $j$, and by removing them if they belong to $\{\ell \in [k]\mid r'_\ell \ne 0\}$. Thus, we can conclude that $(p,w)\in \{ (\emptyset, w ) \mid (p,w) \in \cD_H[R,S'] \} )$ in Case (a), $(p,w)\in\cD_H[R,S']$ in Case (b), and $(p,w)\in \cA$ in Case (c).
	\medskip
	
	It remains to prove that every weighted partition added to $tab_G[R,R']$ belongs to $\cD_G[R,R']$.
	Let $(p,w)$ be a weighted partition added to $tab_G[R,R']$ from $(p',w)\in tab_H[R,S']$ and let $D\subseteq V(H)$ such that $(D,(p',w))$ is a solution in $\cD_H[R,S']$.
	We claim that $(D,(p,w))$ is a solution in $\cD_G[R,R']$. 
	Obviously, Condition (1) is satisfied.
	We deduce that Condition (2) is satisfied from the definition of $S'$ and the previous arguments.
	We claim that Condition (3) is satisfied. 
	Suppose first that $R'=\{0\}^k$. Then either $S'=\{0\}^k$ or $\{i,j\}=\Active(R,S')$.  
	Indeed, we have $\Active(R,S')\subseteq \{i,j\}$. 
	Now, $i\in\Active(R,S')$ if and only if $r_i\ne 0$ and $s_i'\ne 0$, and then, by definition of $S'$, $s'_j= r_i\ne 0$ and $s_i'=r_j\ne 0$.
	Thus, $i\in\Active(R,S')$ if and only if $j\in\Active(R,S')$.
	Therefore, we conclude that either $H[D]$ is connected or every connected component $C$ of $H[D]$ contains at least a vertex whose label is $i$ or $j$. As $G[D]$ is obtained from $H[D]$ by adding all the edges between the $i$-vertices and the $j$-vertices, we conclude that, in both cases, $G[D]$ is connected.
	
	Otherwise, if $R'\ne\{0\}^k$, then every connected component of $H[D]$ contains at least one vertex whose label is in $\Active(R,S')$.
	If $r_i=0$ or $r_j=0$, then we are done because $\Active(R,R')=\Active(R,S')$ by definition of $S'$.
	Suppose now that $r_i\neq0$ and $r_j\ne 0$. 
	In this case, we have $i,j\in\Active(R,S')$ from the definition of $S'$. 
	Assume towards a contradiction that there exists a connected component $C$ in $G[D]$ such that $C$ does not intersect $lab_G^{-1}(\Active(R,R'))$. Notice that $C$ intersects $lab_G^{-1}(\Active(R,S'))$ because $C$ is a union of connected components of $H[D]$. 
	As $\Active(R,S')\setminus \Active(R,R')\subseteq \{i,j\}$, we deduce that $C$ contains at least one vertex whose label is $i$ or $j$. 	
	Since the $i$-vertices and the $j$-vertices of $D$ are in the same connected component of $G[D]$, we have $lab_G^{-1}(\{i,j\})\cap D\subseteq C$. 
	Therefore, we conclude that $\Active(R,S')\setminus \Active(R,R') = \{i,j\}$. 
	It follows, that all the connected components of $H[D]$ that intersect $lab_G^{-1}(\{i,j\})$ does not intersect $lab_G^{-1}(\Active(R,S')\setminus \{i,j\})$ because they are contained in $C$. We can conclude that $\{i,j\}$ is a block of $p'' :=p'\sqcup \{ \{ i,j\}\}_{\uparrow \Active(R,S')}$.
	Hence, we have $\proj(\{t \in \{ i,j\}\mid r_t'=0\}, \{ (p'',w) \})=\emptyset$ because $\{t \in \{ i,j\}\mid r_t'=0\}=\{i,j\}$. This contradicts the fact that $(p,w)$ is obtained from $(p',w')$. Thus, Condition (3) is satisfied.
	
	We deduce from the previous observations concerning Condition (4) that this condition is also satisfied. Thus, every solution $(p,w)$ added to $tab_G[R,R']$ belongs to $\cD_G[R,R']$.
\end{proof}

\medskip
\paragraph{\bf Computing $tab_G$ for $G=G_a\oplus G_b$} We can suppose w.l.o.g. that $G_a$ and $G_b$ are $k$-labeled\footnote{For example, if the set of labels of $G_a$ (or $G_b$) is $[k]\setminus 1$, then we can extend any tuples $R=(r_2,\dots,r_k),R'=(r'_2,\dots,r_k')\in \{0,\ldots,d\}^{k-1}$, to $\{0,\ldots,d\}^{k}$ by adding $r_1=0$ to $R$ and $r_1'=*$ to $R'$ with $*$ a special value considered as equal to all the integers in $\{0,\dots,d\}$.}. 
Let $R=(r_1,\dots,r_k),R'=(r'_1,\dots,r_k')\in \{0,\ldots,d\}^k$. 

Let $A=(a_1,\dots,a_k),B=(b_1,\dots,b_k)\in \{0,\ldots,d\}^k$. The following notion characterizes the pairs $(A,B)$ compatible with $R$ with respect to Condition (1).
We say that $(A,B)$ is $R$\emph{-compatible} if and only if for all $i\in [k]$, we have $r_i=\min( d , a_i + b_i)$.

\begin{enumerate}[(a)]
	\item If $R'=\{0\}^k$, then we let $tab_G[R,R']:=\reduce(\rmc( tab_{G_a}[R,R'] \cup tab_{G_b}[R,R']  ))$ if $0\in \rho$, otherwise we let $tab_G[R,R']=\emptyset$.
	Condition (3) implies that the partial solutions in $\cD_G[R,R']$ are either fully contained in $V(G_a)$ or in $V(G_b)$ since there are no edges between these vertex sets in $G$.
	Moreover, in order to satisfy Condition (2), we must have $0\in\sigma$.
	
	\item Otherwise, we let $tab_G[R,R']:=\reduce(\rmc(\cA))$ where
	\begin{align*}
	\cA:= \bigcup_{(A,B) \text{ is } R\text{-compatible}} \join( tab_{G_a}[A,R'] , tab_{G_b}[B,R'] ).
	\end{align*}
\end{enumerate}

\begin{lemma}\label{lem:joind} 
	Let $G=G_a\oplus G_b$ be a $k$-labeled graph. For all $R=(r_1,\dots,r_k),R'=(r'_1,\dots,r_k')\in \{0,\ldots,d\}^k$, the table $tab_G[R,R']$ is a representative set of $\cD_G[R,R']$ assuming that $tab_{G_a}[A,R']$ and $tab_{G_a}[B,R']$ are representative sets of $\cD_{G_a}[A,R']$ and $\cD_{G_b}[B,R']$, respectively, for all $A,B\in \{0,\ldots,d\}^k$. 
\end{lemma}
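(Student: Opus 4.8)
The plan is to reduce the statement, exactly as in Lemmas~\ref{lem:rend} and \ref{lem:addd}, to a set equality at the level of the full sets $\cD$. Since finite unions, $\rmc$, $\proj$, $\join$ and $\reduce$ all preserve representation (Lemma~\ref{lem:par-repj}), and since by hypothesis $tab_{G_a}[A,R']$ and $tab_{G_b}[B,R']$ represent $\cD_{G_a}[A,R']$ and $\cD_{G_b}[B,R']$ for all $A,B$, it suffices to prove, after substituting the tables by the corresponding sets $\cD_{G_a}[\cdot]$ and $\cD_{G_b}[\cdot]$, that in Case~(a) we have $\cD_G[R,R']=\cD_{G_a}[R,R']\cup\cD_{G_b}[R,R']$ (when $0\in\rho$, and $\cD_G[R,R']=\emptyset$ otherwise), and in Case~(b) that $\cD_G[R,R']=\cA$. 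Throughout I would use the two facts specific to the disjoint union: there is no edge between $V(G_a)$ and $V(G_b)$ in $G$, and for every $i$ the hubs $v^i_t$ of $\CG(G,R')$ are adjacent to the $i$-vertices of \emph{both} sides, so that they merge the two sides only through \emph{active} labels.

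For Case~(a) we have $R'=\{0\}^k$, hence $V^+(R')=\emptyset$, $\CG(G,R')=G$ and $\Active(R,R')=\emptyset$. Condition~(3) then forces $G[D]$ to be connected, and since $G$ has no cross edge any solution must lie entirely inside $V(G_a)$ or inside $V(G_b)$. Conversely, a solution $D$ of one side is a solution of $G$ exactly when every vertex of the opposite side, which is not in $D$ and has no neighbour in $D$, is $(\sigma,\rho)$-dominated, i.e. when $0\in\rho$. This yields the claimed equality and closes Case~(a).

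For Case~(b), given a solution $(D,(p,w))$ of $\cD_G[R,R']$ I would set $D_a:=D\cap V(G_a)$, $D_b:=D\cap V(G_b)$, $A:=r^d_{G_a}(D_a)$, $B:=r^d_{G_b}(D_b)$, $w_a:=\wc(D_a)$ and $w_b:=\wc(D_b)$. From $lab_G^{-1}(i)=lab_{G_a}^{-1}(i)\duni lab_{G_b}^{-1}(i)$ and the identity $\min(d,\min(d,a)+\min(d,b))=\min(d,a+b)$, the pair $(A,B)$ is $R$-compatible, giving Condition~(1) on each side. For Condition~(2) the point is that for $v\in V(G_a)$ the set of neighbours of $v$ inside $D\cup V^+(R')$ in $\CG(G,R')$ coincides with its neighbours inside $D_a\cup V^+(R')$ in $\CG(G_a,R')$ (no cross edge, and the hubs contribute the same $r'_{lab(v)}$ neighbours), so domination transfers to each side; symmetrically for $G_b$. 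Condition~(4) follows from the description of $\sim$ as the transitive closure of the relation ``sharing a connected component of $G[D]$'': because each component of $G[D]$ is a component of $G_a[D_a]$ or of $G_b[D_b]$, and because $\Active(R,R')=\Active(A,R')\cup\Active(B,R')$, the relation on $\Active(R,R')$ is generated by the two sides, which is precisely $p=p_{a\uparrow}\sqcup p_{b\uparrow}$ with $w=w_a+w_b$. The reverse inclusion starts from an $R$-compatible $(A,B)$ and solutions $(D_a,(p_a,w_a))$, $(D_b,(p_b,w_b))$, sets $D:=D_a\cup D_b$, and verifies Conditions~(1),(2),(4) by the same identities.

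The main obstacle, in both directions, is Condition~(3): I must show that the connectivity requirement for $G$ is equivalent to the conjunction of the two side requirements. In the forward direction this is handled by transporting to side $G_a$ the active-label witness that each component of $G[D]$ possesses on the $G$-side, using $\Active(A,R'),\Active(B,R')\subseteq\Active(R,R')$ together with the observation that a label present in $D_a$ with a non-zero coordinate of $R'$ is automatically active in $A$. The genuinely delicate part is the backward direction, where I must rule out that a connected component confined to a single side is ``lost'', namely a component meeting no active label; this requires a careful case analysis according to whether $\Active(A,R')$ and $\Active(B,R')$ are empty, and is the analogue of the empty-projection argument used in the $\acjoin$ proof of Lemma~\ref{lem:join}. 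Getting this bookkeeping of $\Active$ and of the hubs $v^i_t$ right is the only non-routine step of the proof.
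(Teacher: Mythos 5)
Your overall route is the paper's: reduce, via Lemma \ref{lem:par-repj} and the hypothesis on the sub-tables, to the set identities $\cD_G[R,R']=\cD_{G_a}[R,R']\cup\cD_{G_b}[R,R']$ (with the $0\in\rho$ caveat) in Case (a) and $\cD_G[R,R']=\cA$ in Case (b); then split $D$ into $D_a$ and $D_b$, check $R$-compatibility via $\min(d,\cdot)$, transfer the $(\sigma,\rho)$-domination using the absence of cross edges and the fact that the hubs of $V^+(R')$ contribute identically on each side, and recover $p$ as $p_{a\uparrow V}\sqcup p_{b\uparrow V}$. All of this, including the forward treatment of Condition (3) through $\Active(A,R'),\Active(B,R')\subseteq\Active(R,R')$, coincides with the paper's proof.

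The gap is exactly where you locate it, but you do not close it, and the tool you reach for does not exist in this step. In the backward direction you must show that gluing a solution $(D_a,(p_a,w_a))$ of $\cD_{G_a}[A,R']$ to a solution $(D_b,(p_b,w_b))$ of $\cD_{G_b}[B,R']$ yields a set $D_a\cup D_b$ satisfying Condition (3) for $(R,R')$. When $\Active(A,R')$ and $\Active(B,R')$ are both nonempty this follows at once from their inclusion in $\Active(R,R')$, and that is what the paper's ``same arguments given previously'' covers. The case you single out as delicate is $\Active(A,R')=\emptyset$ with $D_a\neq\emptyset$ while $\Active(R,R')\neq\emptyset$: then every label met by $D_a$ has $r'_i=0$, so the connected graph $G_a[D_a]$ is a component of $G[D_a\cup D_b]$ meeting no label of $\Active(R,R')$, and Condition (3) fails for $G$. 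You propose to discard such combinations by ``the analogue of the empty-projection argument used in the $\acjoin$ proof of Lemma \ref{lem:join}'', but there is nothing to instantiate that analogy with: unlike the feedback-vertex-set computation, Case (b) here applies $\join$ directly to $tab_{G_a}[A,R']$ and $tab_{G_b}[B,R']$ with no preceding $\proj$, and the offending side contributes the empty partition, which no projection or join on the other side can detect. So you must either prove that such pairs $(A,B)$ never contribute a weighted partition outside $\cD_G[R,R']$ --- which is what the claimed equality $\cD_G[R,R']=\cA$ requires --- or restrict the union defining $\cA$; your sketch does neither. You have correctly identified the one step the paper itself treats most briefly, but naming the difficulty is not the same as resolving it, so as written the proposal is incomplete precisely there.
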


\begin{proof}
		Since the used operators preserve representation, it is easy to see that if $R'=\{0\}^k$, then we are done as $\cD_G[R,R']=\cD_{G_a}[R,R']\cup \cD_{G_b}[R,R']$ if $0\in\rho$, otherwise $\cD_G[R,R']=\emptyset$. 
		Indeed, by Condition (3), for all solutions $(D,(p,w))$ in $\cD_G[R,R']$, the graph $G[D]$ must be connected. 
		Since $G=G_a\oplus G_b$, there are no edges between $V(G_a)$ and $V(G_b)$ in $G[D]$. 
		Thus, $D$ is either included in $V(G_a)$ or in $V(G_b)$. 
		Since $V(G_a)\ne\emptyset$ and $V(G_b)\ne\emptyset$, we have $V(G)\setminus D\ne \emptyset$.
		This implies that $0\in\rho$, otherwise the vertices in $V(G)\setminus D$ will not be $(\sigma,\rho)$-dominated by $D\cup V^+(R')=D$ in $\CG(G,R')$ as $V^+(R')=\emptyset$.
		
		In the following, we assume that $R'\neq \{0\}^k$. Since the used operators preserve representation, it is enough to prove that $\cD_G[R,R']=\cA$ if we let $tab_{G_t}[S,R']=\cD_{G_t}[S,R']$ for all $S\in\{0,\dots,d\}^k$ and $t\in\{a,b\}$.
		
		Let $(D,(p,w))$ be a solution in $\cD_G[R,R']$. We start by proving that $(p,w)\in\cA$.
		Let $D_a=D\cap V(G_a)$ and $D_b=D\cap V(G_b)$.
		From the definition of $R$-compatibility, we deduce that $r^d_{G_a}(D_a)$ and $r^d_{G_b}(D_b)$ are $R$-compatible.
		Indeed, we have $\min(d, |lab_G^{-1}(i)\cap D|)= \min( d , r^d_{G_a}(D_a) + r^d_{G_b}(D_b) )$ for all $i\in[k]$.
		
		Let $p_a\in\Pi(\Active(r_{G_a}^d(D_a),R'))$ such that $(D_a,(p_a,\wc(D_a)))$  is a candidate solution in $\cD_{G_a}[r_{G_a}^d(D_a),R']$.  
		We claim that the pair $(D_a,(p_a,\wc(D_a)))$ is a solution in $\cD_{G_a}[r_{G_a}^d(D_a),R']$.
		Condition (1) is trivially satisfied.
		By assumption, $D\cup V^+(R')$ $(\sigma,\rho)$ dominates $V(G)$ in $\CG(G,R')$.
		Since there are no edges between the vertices in $V(G_a)$ and those in $V(G_b)$, we conclude that $D_a\cup V^{+}(R')$ $(\sigma,\rho)$-dominates $V(G_a)$ in $\CG(G_a,R')$. That is Condition (2) is satisfied.
		Observe that every connected component of $G[D]$ is either included in $D_a$ or in $D_b$. 
		Therefore, every connected component of $G_a[D_a]$ contains a vertex $v$ with a label $j\in\Active(R,R')$.
		Since $v\in D_a$, we conclude that $r_{j,G_a}^d(D_a)\ne0$, thus $j\in \Active(r^d_{G_a}(D_a),R')$. 
		We can conclude that Condition (3) is satisfied.
		Thus, $(D_a,(p_a,\wc(D_a)))$ is a solution in $\cD_{G_a}[r_{G_a}^d(D_a),R']$.
		Symmetrically, we deduce that there exits $p_b\in \Active(R_{G_b}^d(D_b),R')$ such that  $(D_b,(p_b,\wc(D_b)))$ is a solution in $\cD_{G_b}[r_{G_b}^d(D_b),R']$.
		
		It remains to prove that $p=p_{a\uparrow V}\sqcup p_{b\uparrow V}$ with $V=\Active(R,R')$. First, observe that $\Active(r^d_{G_a}(D_a),R')$ and $\Active(r^d_{G_b}(D_b,R'))$ are both subset of $\Active(R,R')$ and thus $p_{a\uparrow V}\sqcup p_{b\uparrow V}$ is a partition of $\Active(R,R')$.
		Let $\sim_a$ (resp. $\sim_b$) be the equivalence relation such that $i\sim_a j$ (resp. $i\sim_b j$) if an $i$-vertex is connected to a $j$-vertex in the graph $\CG(G_a,R')[D_a\cup V^{+}(R')]$ (resp. $\CG(G_b,R')[D_b\cup V^+(R')])$.
		By Condition (4), two labels $i,j$ are in the same block of $p$ if and only if an $i$-vertex and a $j$-vertex are connected in $\CG(G,R')[D\cup V^{+}(R')]$. On the other hand, $i$ and $j$ are in the same block of $p_{a\uparrow V}\sqcup p_{b\uparrow V}$ if and only if $i\cR j$ where $\cR$ is the transitive closure of the relation ($i\sim_a j$ or $i\sim_b j$). 
		By definition of $\CG(G,R')$, for every label $i\in \Active(R,R')$, the vertices in $lab_G^{-1}(i)\cap D$ are all adjacent to the vertices in $V^+_i(R')$.
		One can easily deduce from these observations that $p=p_{a\uparrow V}\sqcup p_{b\uparrow V}$. Hence, $(p,w)\in\cA$.
		
		\medskip
		We now prove that every weighted partition in $\cA$ belongs to $\cD_G[R,R']$.
		Let $(p,w)$ be a weighted partition added to $tab_G[R,R']$ from a solution $(D_a,(p_a,w_a))$ in $\cD_{G_a}[A,R']$ and a solution $(D_b,(p_b,w_b))$ in $\cD_{G_b}[B,R']$. 
		We claim that $(D_a\cup D_b,(p_{a\uparrow V}\sqcup p_{b\uparrow V}, w_a + w_b))$ is a solution in $\cD_G[R,R']$ with $V=\Active(R,R')$.
		We deduce that Condition (1) is satisfied from the definition of $R$-compatibility and because $\min(d, |lab_G^{-1}(i)\cap D|)=\min( d , r^d_{G_a}(D_a) + r^d_{G_b}(D_b) )$ for all $i\in [k]$.
		With the same arguments given previously, one easily deduces that Conditions (2)-(4) are also satisfied. 
		We conclude that $(D_a\cup D_b,(p_{a\uparrow V}\sqcup p_{b\uparrow V}, w_a + w_b))$ is a solution in $\cD_G[R,R']$.
\end{proof}

\begin{theorem}\label{thm:dom} There is an algorithm that, given an $n$-vertex graph $G$ and an irredundant $k$-expression of $G$, computes a maximum (or a minimum) connected $(\sigma,\rho)$-dominating set in time $(d+1)^{3k}\cdot 2^{(\omega+1)\cdot k}\cdot k^{O(1)}\cdot n$ with $d=\max(d(\sigma),d(\rho))$.
\end{theorem}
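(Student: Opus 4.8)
The plan is to process the given irredundant $k$-expression $t$ bottom-up, maintaining at each node, associated with a $k$-labeled graph $G$, the collection of tables $tab_G[R,R']$ for all $R,R'\in\{0,\ldots,d\}^k$. At a leaf $\mathbf{1}(x)$ I initialise the tables by the explicit formula, which equals $\cD_G[R,R']$, and at every internal node I apply the recurrences defined above for $ren_{i\to j}$, $add_{i,j}$ and $\oplus$. Correctness is then an immediate induction: by Lemmas~\ref{lem:rend}, \ref{lem:addd} and~\ref{lem:joind} each computed $tab_G[R,R']$ represents $\cD_G[R,R']$ provided the children tables represent their families, and since each recurrence is a composition of the representation-preserving operators $\rmc$, $\proj$, $\join$ and $\reduce$ (Lemma~\ref{lem:par-repj} and Theorem~\ref{thm:reduce2}), this invariant propagates to the root. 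As observed just before the base case, the optimum weight equals $opt$ over all $R\in\{0,\ldots,d\}^k$ of $opt\{w\mid(\emptyset,w)\in\cD_{val(t)}[R,\{0\}^k]\}$; because $tab_{val(t)}[R,\{0\}^k]$ represents $\cD_{val(t)}[R,\{0\}^k]$, this value is read off directly from the final tables, and an optimum set itself is recovered by standard back-pointer bookkeeping.

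It then remains to bound the running time, for which I use that every table has at most $2^{k-1}$ entries (Theorem~\ref{thm:reduce2}) and that there are $(d+1)^{2k}$ index pairs $(R,R')$. A leaf costs $k^{O(1)}$. At a $ren_{k\to j}$ node the table $tab_G[R,R']$ is assembled from the tables $tab_H[S,S']$ with $S$ ranging over $\cS$ and $S'$ determined by $R'$; since each $S\in\{0,\ldots,d\}^k$ lies in $\cS$ for a unique output index $R$, summing $|\cS|$ over $R$ gives $(d+1)^k$, so over all $(R,R')$ a whole $ren$ node costs $(d+1)^{2k}\cdot 2^{\omega k}\cdot k^{O(1)}$. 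At an $add_{i,j}$ node each $tab_G[R,R']$ is built from the single table $tab_H[R,S']$ (with $S'$ uniquely determined by $R,R'$) by $\rmc$, $\proj$ and one $\join$ with a singleton, which keeps the size below $2^{k-1}$ so that no call to $\reduce$ is needed; hence a whole $add$ node costs $(d+1)^{2k}\cdot 2^{k}\cdot k^{O(1)}$.

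The dominant cost is the $\oplus$ node, and keeping it at $(d+1)^{3k}$ rather than the naive $(d+1)^{4k}$ is the point I expect to be the main obstacle. For $R'\ne\{0\}^k$ the set $\cA$ is the union, over all $R$-compatible pairs $(A,B)$, of the joins $\join(tab_{G_a}[A,R'],tab_{G_b}[B,R'])$, each of size at most $2^{2k-2}$ and computable in time $2^{2k}\cdot k^{O(1)}$. The key counting fact is that every pair $(A,B)$ in $\{0,\ldots,d\}^k\times\{0,\ldots,d\}^k$ is $R$-compatible for exactly one $R$, namely the one with $r_i=\min(d,a_i+b_i)$, so $\sum_{R}\bigl|\{(A,B)\mid (A,B)\text{ is }R\text{-compatible}\}\bigr|=(d+1)^{2k}$. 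Summing the cost of building $\cA$ over all $(R,R')$ therefore collapses to $\sum_{R'}(d+1)^{2k}\cdot 2^{2k}\cdot k^{O(1)}=(d+1)^{3k}\cdot 2^{2k}\cdot k^{O(1)}$, and since likewise $\sum_{R,R'}|\cA|\le(d+1)^{3k}\cdot 2^{2k-2}$, the subsequent $\reduce$ calls add $(d+1)^{3k}\cdot 2^{2k}\cdot 2^{(\omega-1)k}\cdot k^{O(1)}=(d+1)^{3k}\cdot 2^{(\omega+1)k}\cdot k^{O(1)}$, which dominates (as $\omega\ge 2$). Thus a whole $\oplus$ node costs $(d+1)^{3k}\cdot 2^{(\omega+1)k}\cdot k^{O(1)}$.

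Finally I combine these per-node bounds with the size of the expression. An irredundant $k$-expression uses $O(n)$ disjoint-union operations and $O(nk^2)$ unary operations; the $ren$ and $add$ contributions are dominated by the $\oplus$ bound, the extra $k^2$ factor folding into $k^{O(1)}$, so the total running time is $O\big((d+1)^{3k}\cdot 2^{(\omega+1)k}\cdot k^{O(1)}\cdot n\big)$, as claimed.
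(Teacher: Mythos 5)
Your proposal is correct and follows essentially the same route as the paper's proof: induction on the expression tree via Lemmas~\ref{lem:rend}--\ref{lem:joind}, with the running time dominated by the $\oplus$ nodes, controlled by the same key observation that each pair $(A,B)$ is $R$-compatible for exactly one $R$, giving $(d+1)^{2k}$ compatible triples per choice of $R'$ and hence the $(d+1)^{3k}\cdot 2^{(\omega+1)k}$ bound. The minor differences (your slightly tighter accounting at $ren$ nodes, and omitting the cheap $R'=\{0\}^k$ subcase of $\oplus$) do not affect the argument.
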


\begin{proof} We do a bottom-up traversal of the $k$-expression and at each step we update the tables as indicated above. The correctness of the algorithm  follows from Lemmas \ref{lem:rend}-\ref{lem:joind}. 
	From the definition of $\cD_G[R,R']$, we deduce that the weight of an optimum connected $(\sigma, \rho)$-dominating set corresponds to the optimum over all $R\in\{0,\dots,d\}^k$ of $\opt\{w \mid (\emptyset, w)\in tab_G[R,\{0\}^k]\}$ because $tab_G[R,\{0\}^k]$ represents $\cD_G[R,\{0\}^k]$.
	
	Let us discuss the time complexity now. We claim that the tables of $tab_G$ can be computed in time $(d+1)^{3k}\cdot 2^{(\omega+1)\cdot k}\cdot k^{O(1)}$.
	We distinguish the following cases:
	\begin{itemize}
		\item If $G=\mathbf{1}(x)$, then it is easy to see that $tab_G$ is computable in time $O(d)$.
		\item If $G=add_{i,j}(H)$, then we update $tab_G[R,R']$ from one entry $tab_H[R,S']$ for some fixed $S'$ computable in constant time. The used $\join$ operation runs in time $2^{k-1}\cdot k^{O(1)}$ (from Fact \ref{fact:join}). Thus, $tab_G$ is computable in time $(d+1)^{2k}\cdot  k^{O(1)}$. 
		\item Now, if $G=ren_{i\to j}(H)$, then we update $tab_G[R,R']$ from at most $|\cS|=(d+1)^2$ tables from $tab_H$, each identified in constant time from $(R,R')$.  Since each table of $tab_H$ contains at most $2^{k-1}$ entries, computing the call at the function $\reduce$ take $(d+1)^2\cdot  2^{\omega\cdot k}\cdot k^{O(1)}$. Thus, we can compute $tab_G$ in time $(d+1)^{2k+2}\cdot 2^{ \omega\cdot k}\cdot k^{O(1)}$.  
		\item If $G=G_a\oplus G_b$, then the bottleneck is when $R'\neq \{0\}^k$. Indeed, if $R'= \{0\}^k$, then $tab_G[R,R']$ can be computed in time $O(2^{\omega\cdot k})$ since $tab_G[R,R']$ is computed from two tables, each containing at most $2^{k-1}$ entries.
		Let $R'\neq \{0\}^k$. 
		By Theorem \ref{thm:reduce2}, we can compute the tables $tab_G[R,R']$, for every $R\in \{1,\dots,d\}^k$ in time 
		{\small \begin{align*}
			\sum_{R\in\{0,\dots,d\}^k} \left(\sum_{\substack{(A,B) \text{ is }\\ R\text{-comptatible}} }|\join(tab_{G_a}[A,R'], tab_{G_b}[B,R'])|\cdot 2^{(\omega-1)\cdot k}\cdot k^{O(1)}\right).
			\end{align*}}
		Observe that for all $A,B\in \{0,\dots,d\}^k$:
		\begin{enumerate}
			\item There is only one $R\in \{0,\dots,d\}^k$ such that $(A,B)$ is $R$-compatible. This follows from the definition of $R$-compatibility. Hence, there are at most $(d+1)^{2k}$ tuples $(A,B,R)$ such that $(A,B)$ is $R$-compatible.
			\item The size of $\join(tab_{G_a}[A,R'], tab_{G_b}[B,R'])$ is bounded by $2^{2(k-1)}$ and this set can be computed in time $2^{2(k-1)}\cdot k^{O(1)}$.
		\end{enumerate}
		Since $2^{2(k-1)}\cdot 2^{(\omega-1)\cdot k}\leq 2^{(\omega +1)\cdot k}$, we conclude from Observations (1)-(2) that we can compute the tables $tab_G[R,R']$, for every $R\in \{1,\dots,d\}^k$, in time $(d+1)^{2k} \cdot 2^{(\omega + 1 )\cdot k}\cdot k^{O(1)}$.
		
		Hence, we can update $tab_G$ in time $(d+1)^{3k} \cdot 2^{(\omega + 1 )\cdot k}\cdot k^{O(1)}$.
	\end{itemize}
  Hence, in the worst case, the tables of $tab_G$ takes $(d+1)^{3k} \cdot 2^{(\omega + 1 )\cdot k}\cdot k^{O(1)}$ time to be computed.
 Because the size of a $k$-expression is $O(n\cdot k^2)$, we can conclude that a maximum (or minimum) $(\sigma,\rho)$-dominating set can be computed in the given time.
\end{proof}

As a consequence of Theorem \ref{thm:dom}, we have the following corollary.

\begin{corollary}\label{cor:st} There is an algorithm that, given an $n$-vertex graph $G$, a subset $K\subseteq V(G)$ and an irredundant $k$-expression of $G$, computes a minimum node-weighted Steiner tree for $(G,K)$ in time $2^{(\omega+4)\cdot k}\cdot k^{O(1)}\cdot n$.
\end{corollary}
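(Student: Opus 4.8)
The plan is to realize \textsc{Node-weighted Steiner Tree} as a special case of the connected $(\sigma,\rho)$-dominating set problem solved by Theorem~\ref{thm:dom}, for a pair $(\sigma,\rho)$ with $d=\max(d(\sigma),d(\rho))=1$, so that the running time of Theorem~\ref{thm:dom} specializes to $(1+1)^{3k}\cdot 2^{(\omega+1)\cdot k}\cdot k^{O(1)}\cdot n=2^{3k}\cdot 2^{(\omega+1)\cdot k}\cdot k^{O(1)}\cdot n=2^{(\omega+4)\cdot k}\cdot k^{O(1)}\cdot n$, exactly the claimed bound. The natural target is $(\sigma,\rho)=(\bN,\bN^+)$, i.e.\ \textsc{Connected Dominating Set}, for which $d(\bN)=0$ and $d(\bN^+)=1$ and hence $d=1$; any admissible pair with $d=1$ yields the same exponent. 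Thus, once a correct reduction is in place, the complexity claim is immediate, and no new algorithmic ingredient beyond Theorem~\ref{thm:dom} is required.

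The reduction I would attempt starts from $(G,K,\wc)$ and builds a graph $G'$ by attaching to every terminal $t\in K$ a fresh pendant vertex $p_t$ of prohibitively large weight (larger than $\wc(V(G))$). Under $\rho=\bN^+$ every vertex outside the solution must have a neighbour in it, so any solution of finite weight avoids the pendants and is forced to contain each terminal $t$ in order to dominate $p_t$; conversely $p_t$ is dominated as soon as $t$ is selected. This enforces exactly the side constraint $K\subseteq T$ while leaving the connectivity requirement intact, since $G'[V(G)]=G$. The clique-width grows by at most an additive constant (each pendant can be introduced with a bounded number of reserved labels while scanning the given $k$-expression), an irredundant expression of $G'$ is obtained in linear time, and the resulting factor $2^{(\omega+4)\cdot O(1)}$ is absorbed into the $k^{O(1)}$ term. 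Given this, the remaining task is a weight-preserving correspondence between finite-weight solutions of $\mathbf{opt}$ on $G'$ and connected subsets $T\subseteq V(G)$ with $K\subseteq T$.

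The step I expect to be the main obstacle is precisely the correctness of this correspondence: a connected dominating set of $G'$ must dominate \emph{every} vertex of $G$, whereas a minimum Steiner tree is only required to be connected and to contain $K$ and may leave vertices far from the tree undominated. Before invoking Theorem~\ref{thm:dom} I therefore have to neutralise the domination demand on the non-terminal vertices so that it is satisfiable at zero additional cost. This is delicate, because one cannot naively force an auxiliary dominator into the solution for each non-terminal: under the connectivity constraint such a dominator would itself have to be attached to the tree, and could create a shortcut between original vertices, thereby trivialising the Steiner connectivity that the problem is about. Consequently I anticipate that the technical heart of the proof is a gadget that pre-dominates each non-terminal vertex while simultaneously (i) preserving the connectivity relation among the original vertices, (ii) increasing the clique-width by only $O(1)$, and (iii) contributing $0$ to the optimum. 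Once such a gadget is established, the bijection argument is routine and, combined with Theorem~\ref{thm:dom} instantiated at $d=1$, yields the stated $2^{(\omega+4)\cdot k}\cdot k^{O(1)}\cdot n$ bound.
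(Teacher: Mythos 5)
There is a genuine gap, and it is exactly the one you flag yourself. You instantiate Theorem~\ref{thm:dom} with $(\sigma,\rho)=(\bN,\bN^+)$, i.e.\ \textsc{Connected Dominating Set}, which forces every vertex outside the solution to have a neighbour inside it; a Steiner tree has no such obligation, so you are left needing a gadget that ``pre-dominates'' every non-terminal at zero cost without perturbing connectivity or clique-width. You do not construct this gadget, and the tension you describe is real: your mechanism for forcing the terminals into the solution (pendant vertices that must be dominated) only works \emph{because} $\rho=\bN^+$, so you cannot drop the domination requirement without also losing the terminal-forcing. The two difficulties are coupled by your choice of $(\sigma,\rho)$, and the proposal ends precisely where the proof would have to begin.

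The paper's resolution is to make the opposite choice: $\sigma=\bN^+$ and $\rho=\bN$, so that vertices outside the solution carry no constraint at all (this still gives $d=\max(d(\sigma),d(\rho))=1$, which is all that is needed both for the running time and for the algorithm's requirement that $d\neq 0$; the condition $\sigma=\bN^+$ is harmless for a connected set once one assumes $|K|\geq 2$). The side constraint $K\subseteq T$ is then enforced not by a graph gadget but by modifying the base case of the dynamic programming: for a leaf $\mathbf{1}(x)$ with $x\in K$, the table entry corresponding to ``$x$ not selected'' is set to $\emptyset$. This avoids any change to the input graph, any clique-width bookkeeping, and the entire domination issue. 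If you switch to $(\bN^+,\bN)$ and replace your pendant gadget by this base-case modification, the rest of your complexity accounting goes through verbatim.
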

\begin{proof} 
	We can assume w.l.o.g. that $|K|\geq 2$. 
	We can reduce the problem \textsc{Node-weighted Steiner Tree} to a variant of $(\sigma,\rho)$\textsc{-Dominating Set} where $\sigma=\bN^+$ and $\rho=\bN$. This variant requires $K$ to be included in the $(\sigma,\rho)$-dominating set. 
	We can add this constraint, by modifying how we compute the table $tab_G$, when $G=\mathbf{1}(x)$ and $x\in K$. For $(r_1),(r_1')\in \{0,1\}$, we let
	{\small \begin{align*}
		tab_G[R,R'] &:=\begin{cases} 
			\{(\{\{1\}\},\wc(x))\} & \textrm{if $r_1= 1$ and $r_1'= 1$},\\ 
			 \emptyset &\textrm{otherwise.}\\ \end{cases}
		\end{align*}}
	It is straightforward to check that this modification implements this constraint and our algorithm with this modification computes a minimum node-weighted Steiner tree. The running time follows from the running time of Theorem \ref{thm:dom} with $d=1$.
\end{proof}

More modifications are needed in order to compute a maximum (or minimum) connected co-$(\sigma,\rho)$-dominating set.

\begin{corollary}\label{cor:outDS}
	There is an algorithm that, given an $n$-vertex graph $G$ and an irredundant $k$-expression of $G$, computes a maximum (or minimum) co-$(\sigma,\rho)$-domina\-ting set in time $(d+2)^{3k}\cdot 2^{(\omega +1)\cdot k} \cdot k^{O(1)} \cdot n$.
\end{corollary}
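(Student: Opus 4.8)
The plan is to reduce connected co-$(\sigma,\rho)$-domination to the $(\sigma,\rho)$-domination framework of Section \ref{sec:dom}, swapping the roles of the solution set and its complement for the connectivity bookkeeping. Indeed, $X\subseteq V(G)$ is a connected co-$(\sigma,\rho)$-dominating set if and only if $X$ is connected and $D:=V(G)\setminus X$ is a $(\sigma,\rho)$-dominating set, and $\wc(X)=\wc(V(G))-\wc(D)$. Hence a \emph{maximum} (resp. \emph{minimum}) connected co-$(\sigma,\rho)$-dominating set corresponds to a $(\sigma,\rho)$-dominating set $D$ of \emph{minimum} (resp. \emph{maximum}) weight subject to $G[V(G)\setminus D]$ being connected. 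I would therefore keep the entire domination bookkeeping $(R,R')$, the certificate graph of Definition \ref{def:certifdom}, and Conditions (1)--(2) of Definition \ref{defn:tabdom} for $D$ exactly as in Section \ref{sec:dom}, but redefine the stored partition $p$ and Condition (3) so that they track the connected components of the complement $G[V(G)\setminus D]$ rather than those of $G[D]$.

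The key new ingredient is that, to represent the connectivity of $X=V(G)\setminus D$ by a partition of labels, I must know for each label $i$ whether $X$ meets $lab_G^{-1}(i)$, i.e. whether $D$ omits some $i$-vertex. This cannot be read off from $r^d_{i,G}(D)=\min(d,|lab_G^{-1}(i)\cap D|)$, which only caps the $D$-side count. I would thus augment the per-label bookkeeping with one extra value recording the presence of an $X$-vertex in the label class (together with the corresponding future-neighbour promise on the $X$-side), so that each sequence ranges over at most $d+2$ values; this is exactly what raises the base of the tables from $d+1$ to $d+2$. With this flag I can define the active labels for the complement — those $i$ such that $X$ meets $lab_G^{-1}(i)$ and an $i$-vertex of $X$ is promised a future neighbour inside $X$ — and reuse the same certificate-graph trick, identifying all $i$-vertices of $X$ through a common future $X$-neighbour, to represent the components of $G[X]$ by a partition on the active labels. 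The optimum is then read at the single-block partition, which witnesses that $X$ is connected.

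Since no acyclicity is required here, the relevant operators are the plain $\rmc$, $\proj$ and $\join$ of Section \ref{sec:dom}, which preserve representation (Lemma \ref{lem:par-repj}), together with the reduction algorithm $\reduce$ of Theorem \ref{thm:reduce2}. I would re-derive the four update steps ($\mathbf{1}(x)$, $ren_{i\to j}$, $add_{i,j}$, $\oplus$) in the style of Lemmas \ref{lem:rend}--\ref{lem:joind}, with two differences: the partitions now merge blocks according to the connectivity of $X$, and an $add_{i,j}$ operation must be processed for its double effect — it supplies new $D$-neighbours to the $(\sigma,\rho)$-domination counts (updated through the unique compatible $S'$ via Fact \ref{fact:d}, as before) while simultaneously connecting the $i$-vertices of $X$ to the $j$-vertices of $X$ (updated through $\join$ on the complement partition). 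The disjoint-union step joins the complement partitions exactly as in Lemma \ref{lem:joind}. The running-time analysis is identical to that of Theorem \ref{thm:dom} with $d+1$ replaced by $d+2$: the join over $R$-compatible pairs combined with the range of the promise sequence yields $(d+2)^{3k}$ relevant tuples, each $\join$/$\reduce$ call costs $2^{(\omega+1)\cdot k}\cdot k^{O(1)}$, giving $(d+2)^{3k}\cdot 2^{(\omega+1)\cdot k}\cdot k^{O(1)}$ per $\oplus$-node and hence $(d+2)^{3k}\cdot 2^{(\omega+1)\cdot k}\cdot k^{O(1)}\cdot n$ overall.

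I expect the main obstacle to be precisely this transfer of the connectivity constraint onto the complement. One must check that a single $add_{i,j}$ operation can be handled consistently for its two intertwined purposes — updating the $(\sigma,\rho)$-domination of $D$ and the connectivity of $X$ — and that the added presence flag genuinely suffices, at every step, to decide whether $X$ meets a label class, so that the components of $G[X]$ are faithfully encoded by the label partition and no spurious (or missing) connections are introduced.
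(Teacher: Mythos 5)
Your proposal is correct and follows essentially the same route as the paper: the paper likewise keeps the $(\sigma,\rho)$-domination bookkeeping $(R,R')$ for the complement $V(H)\setminus X$ and adds a second pair of $\{0,1\}$-valued sequences $(\comp{R},\comp{R}')$ recording, per label, the presence of an $X$-vertex and the promise of a future $X$-neighbour, tracking the connectivity of $G[X]$ via a certificate graph on the $X$-side. Your observation that the combined per-label index effectively ranges over $d+2$ values (since the presence flag is forced whenever the $D$-side count is below $d$) is exactly the paper's argument for the $(d+2)^{3k}$ factor, so the two proofs coincide in all essentials.
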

\begin{proof}
	First, we need to modify the definition of the tables $\cD_G$. Let $H$ be a $k$-labeled graph, $R,R'\in\{0,\dots,d\}^k$, and $\comp{R},\comp{R}'\in\{0,1\}^k$. The entries of $\cD_H[R,R',\comp{R},\comp{R}']$ are all the weighted partitions $(p,w)\in \Pi(\Active(\comp{R},\comp{R}'))\times \bN$ such that there exists a set $X\subseteq V(G)$  so that $w=\wc(X)$ and 
	\begin{enumerate}
		\item $r^d_{H}(V(H)\setminus X)=R$ and $r^1_{H}(X)=\comp{R}$,
		\item $(V(H)\setminus X)\cup V^+(R')$ $(\sigma,\rho)$-dominates $V(H)$ in $\CG(H,R')$,
		\item if $\Active(\comp{R},\comp{R}')=\emptyset$, then $H[X]$ is connected, otherwise every connected component of $H[X]$ intersects $lab_H^{-1}(\Active(\comp{R},\comp{R}'))$,
		\item $p= \Active(\comp{R},\comp{R}')/\sim$ where $i\sim j$ if and only if an $i$-vertex is connected to a $j$-vertex in $\CG(H,\comp{R}')[X\cup V^+(\comp{R}')]$.
	\end{enumerate}
As a solution is a set $X$ such that $V(G)\setminus X$ is a $(\sigma,\rho)$-dominating set and $G[X]$ is a connected graph, we need information about $X\cap V(H)$ and $(V(H)\setminus X)$.
Intuitively, $R,R'$ are the information we need to guarantee the $(\sigma,\rho)$-domination, and $\comp{R},\comp{R}'$ are the information we need to guarantee the connectedness. In particular, $\comp{R}$ specifies which label classes are intersected and $\comp{R}'$ tells which label classes are expected to have at least one neighbor in the future.

These modifications imply in particular to change the notion of $R$-compatibility. For each $t\in\{a,b,c\}$, let $R_t=(r^t_1,\dots,r^t_k)\in\{0,\dots,d\}^k$, and $\comp{R_t}=(\comp{r}_1^t,\dots,\comp{r}^t_k)\in\{0,1\}^k$, we say that $(R_a,\comp{R_a},R_b,\comp{R_b})$ is $(R_c,\comp{R_c})$-compatible if for all $i\in[k]$, we have $r_i^c=\min(d, r_i^a + r_i^b)$ and $\comp{r}_i^c=\min(1,\comp{r}^a_i + \comp{r}_i^b)$.

It is now an exercise to modify the algorithm of Theorem \ref{thm:dom} in order to update the tables $tab_H$ through the different clique-width operations.
The weight of an optimum solution corresponds to the optimum over all $R\in \{0,\dots,d\}^k,\comp{R}\in \{0,1\}^k$ of $\opt\{ w \mid (\emptyset, w) \in tab_G[R,\{0\}^k,\comp{R},\{0\}^k]\}$, since $tab_G[R,\{0\}^k,\comp{R},\{0\}^k]$ represents $\cD_G[R,\{0\}^k,\comp{R},\{0\}^k]$.

Let us discuss the time complexity now. Let $H$ be a $k$-labeled graph that is used in the $k$-expression of $G$. 
First, observe that we do not need to compute $tab_H[R,R',\comp{R},\comp{R}']$ for all $R,R'\in\{0,\dots,d\}^k$, and $\comp{R},\comp{R}'\in\{0,1\}^k$. 
Indeed, for all $X\subseteq V(H)$ and $i\in [k]$, if we have $r_{i,H}^d(V(H)\setminus X)< d$, then 
\begin{align*}
r_{i,H}^1(X)=\begin{cases}
0 & \text{if $|lab_H^{-1}(i)|= r_{i,H}^d(V(H)\setminus X)$},\\
1 & \text{otherwise}.
\end{cases}
\end{align*}
Hence, we deduce that there are at most $(d+2)^k$ pairs $(R,\comp{R})\in \{0,\dots,d\}^k\times\{0,1\}^k$ such that $R=r^d_H(V(H)\setminus X)$ and $\comp{R}=r^1_H(X)$.
Indeed, whenever $r_{i,H}^d(V(H)\setminus X)< d$, there is only one possible value for $r_{i,H}^1(X)$, and when $r_{i,H}^d(V(H)\setminus X)=d$, there are at most 2 possible values for $r_{i,H}^1(X)$.
With the same arguments used to prove the running time of Theorem \ref{thm:dom}, one easily deduces that there are at most $(d+2)^{2k}$ tuples $(R_a,\comp{R_a},R_b,\comp{R_b},R_c,\comp{R_c})$ such that $(R_a,\comp{R_a},R_b,\comp{R_b})$ is $(R_c,\comp{R}_c)$-compatible.

Moreover, it is sufficient to consider $(d+2)^k$ pairs $(R',\comp{R}')\in\{0,\dots,d\}^k\times\{0,1\}^k$ when we update $tab_H$.
For every $i\in [k]$, we let $c_H^i:=|N_G(lab_H^{-1}(i))\setminus N_H(lab_H^{-1}(i))|$.
Notice that for every vertex $v\in lab_H^{-1}(i)$, we have $|N_G(v)|=|N_H(v)| +c_H^i$. 
Informally, we cannot expect more than $c_H^i$ neighbors in the future for the $i$-vertices of $H$. 
Hence, it is enough to consider the pairs $(R',\comp{R}')\in\{0,\dots,d\}^k\times\{0,1\}^k$, with $R'=(r_1',\dots,r_k')$ and $\comp{R}'=(\comp{r}'_1,\dots,\comp{r}'_k)$, such that for all $i\in[k]$, if $r'_i < d$, then 
\begin{align*}
\comp{r}_i'=\begin{cases}
0 & \text{if $r_i'\geq c_i^H$},\\
1 & \text{otherwise}.
\end{cases}
\end{align*}
That is, for every $i\in[k]$, if $r_i'< d$, then there is one possible value for $\comp{r}'_i$ because if we expect $r_i'<d$ neighbors for the $i$-vertices in $V(H)\setminus X$, then we must expect $\min(0,c_H^i-r_i')$ neighbors for the $i$-vertices in $X$. 
If $r_i'=d$, then there are no restrictions on the value of $\comp{r}_i'$. 
Thus, the pairs $(r_i',\comp{r}_i')$ can take up to $(d+2)$ values.
We conclude that there are at most $(d+2)^k$ pairs $(R',\comp{R}')\in\{0,\dots,d\}^k\times\{0,1\}^k$ worth to looking at.
With these observations and the arguments used in the running time proof of Theorem \ref{thm:dom}, we conclude that we can compute a maximum (or minimum) co-$(\sigma,\rho)$-dominating set in the given time.
\end{proof}

\section{Concluding Remarks}

We combine the techniques introduced in \cite{BuiXuanTV13} and the rank-based approach from \cite{BodlaenderCKN15} to obtain $2^{O(k)}\cdot n$ time algorithms for several connectivity constraints problems such as \textsc{Connected Dominating Set}, \textsc{Node-Weighted Steiner Tree}, \textsc{Feedback Vertex Set}, \textsc{Connected Vertex Cover}, etc. 
While we did not consider connectivity constraints on locally vertex partitioning problems \cite{BuiXuanTV13}, it seems clear that we can adapt our algorithms from the paper to consider such connectivity constraints: if the solution is $\{D_1,\ldots,D_q\}$, each block $D_i$ is connected or a proper subset of the blocks form a connected graph.  
We did not consider counting versions and it would be interesting to know if we can adapt the approach in \cite{BodlaenderCKN15} based on the determinant to the clique-width.

In \cite{FominLPS16} Fomin et al. use fast computation of representative sets in matroids to provide deterministic $2^{O(k)}\cdot n^{O(1)}$ time algorithms parameterized by tree-width for many connectivity problems. Is this approach also generalizable to clique-width ?

The main drawback of clique-width is that, for fixed $k$, there is no known FPT polynomial time algorithm that produces a $k$-expression, that even approximates within a constant factor the clique-width of the given graph. 
We can avoid this major open question, by using other parameters as powerful as clique-width. Oum and Seymour introduced the notion of \emph{rank-width} and its associated \emph{rank-decomposition} \cite{OumS06} such that $\mathsf{rank\textrm{-}width}(G) \leq \mathsf{clique\textrm{-}width}(G)\leq 2^{\mathsf{rank\textrm{-}width}(G)+1}-1$. Furthermore there is a $2^{O(k)}\cdot n^3$ time algorithm for computing it \cite{HlinenyO08,Oum09a}. 
If we use our approach with a rank-decomposition, the number of twin-classes is bounded by $2^k$, and so we will get a $2^{2^{O(k)}}\cdot n^{O(1)}$ time algorithm. 
We can circle it and obtain $2^{O(k^2)}\cdot n^{O(1)}$ as done for example in \cite{GanianH10} for \textsc{Feedback Vertex Set} by using Myhill-Nerode congruences.

In \cite{BergougnouxK18}, the authors adapted the rank-based approach to the notion of \emph{$d$-neighbor equivalence} from \cite{BuiXuanTV13}.
They deduce, in particular, $2^{O(k)}\cdot n^{O(1)}$, $2^{O(k\cdot \log(k))}\cdot n^{O(1)}$, and $2^{O(k^2)}\cdot n^{O(1)}$ time algorithms parameterized respectively by clique-width, $\mathbb{Q}$-rank-width (a variant of rank-width \cite{OumSV13}), and rank-width for all the problems considered in this paper. 
The results in \cite{BergougnouxK18} generalize, simplify, and unify several results including \cite{BodlaenderCKN15,GanianH10} and those from this paper.
However, for each considered problem in this article, the algorithms parameterized by clique-width from \cite{BergougnouxK18} have a worse running time than the algorithms from this paper. 

We recall that, it is still open whether we can obtain a $2^{O(k)}\cdot n^{O(1)}$ time algorithm parameterized by rank-with (or $\mathbb{Q}$-rank-width) to solve the problems considered in this paper, or more basic problems such as \textsc{Independent Set}  or \textsc{Dominating Set}.

\bibliographystyle{plain}
\bibliography{biblio}

\end{document}